\documentclass[12pt]{article}

\newcommand{\blind}{1}

\usepackage{enumitem}
\usepackage[toc,page]{appendix}

\RequirePackage{amsthm,amsmath,amsfonts,amssymb,statmath}
\RequirePackage[authoryear]{natbib}
\RequirePackage[colorlinks,citecolor=blue,urlcolor=blue]{hyperref}
\RequirePackage{graphicx}
\usepackage{sidecap}
\usepackage{booktabs}
\usepackage{longtable}
\usepackage{multirow}
\usepackage{float}
\usepackage{mathtools}
\usepackage{color}
\usepackage{statmath}
\usepackage{xfrac}
\usepackage{yhmath}
\usepackage{bigints}
\usepackage{subfigure}
\usepackage{bbm}
\usepackage{array}
\usepackage{booktabs}
\usepackage{graphicx}
\usepackage{subfigure}
\usepackage{siunitx, tabularx}
\usepackage{adjustbox}
\usepackage{xr}
\usepackage{arydshln,leftidx}
\usepackage{verbatim}
\usepackage{upgreek}
\usepackage{algorithm,algpseudocode}
\usepackage{amssymb}
\usepackage{epstopdf}
\usepackage{bm}
\usepackage{makecell}
\usepackage{bigints}
\usepackage{threeparttable}

\newcommand{\RD}{\operatorname{Diff}}
\newcommand{\Pivot}{\text{\normalfont Pivot}}

\newcommand{\op}{\mathrm{op}}

\newcommand{\nbracket}[1]{\left( #1 \right)}
\newcommand{\cbracket}[1]{\left\{ #1 \right\}}
\newcommand{\rbracket}[1]{\left[ #1 \right]}

\newtheorem{proposition}{Proposition}[section]
\newtheorem{theorem}{Theorem}[section]
\newtheorem{definition}{Definition}[section]
\newtheorem{lemma}[theorem]{Lemma}
\newtheorem{corollary}{Corollary}[section]
\newtheorem{remark}{Remark}

\newtheorem{assumption}{Assumption}

\newcommand{\Var}{\operatorname{Var}}

\newcommand{\one}{\mathbbm{1}}

\def\argmax{\mathop{\rm argmax}\limits}

\newcommand{\real}{\mathbb{R}}
\newcommand{\Pp}{{\mathbb{P}}}
\newcommand{\Ee}{{\mathbb{E}}}
\newcommand{\cP}{{\mathcal{P}}}
\newcommand{\cD}{{\mathcal{D}}}   % data

\newcommand{\cS}{{\mathcal{S}}}

\newcommand{\cK}{{\mathcal{K}}}

\newcommand{\Ek}{\mathcal{E}_{k}}      % collection of k-subsets
\newcommand{\Ehat}{\widehat{E}}
\newcommand{\Eone}{\widehat{E}^*}  % empirical argmax subset
\newcommand{\KL}{\operatorname{KL}}

\newcommand{\Estar}{\widehat{W}^*}

\newcommand{\snoise}{\sigma_{s}}

\definecolor{myorange}{RGB}{230,120,20}

\begin{document}
%%---------------------------------------------------------------------------

\date{}

\def\spacingset#1{\renewcommand{\baselinestretch}%
{#1}\small\normalsize} \spacingset{1}

\if1\blind
{
  \title{\bf Flexible Inference for Winners with Conditional Validity}
\author{
 Soham Bakshi\,$^{1}$\thanks{Corresponding author: \texttt{bakso@umich.edu}.}
 \quad Lingjun Gao\,$^{2}$
 \quad Zijun Gao\,$^{3}$
 \quad Snigdha Panigrahi\,$^{1}$
 \\[6pt]
 {\small $^{1}$Department of Statistics, University of Michigan, MI, USA}\\
 {\small $^{2}$Department of Statistics, The Pennsylvania State University, PA, USA}\\
 {\small $^{3}$Department of Data Science and Operations, University of Southern California, CA, USA}
}
  \maketitle
} \fi

\if0\blind
{
  \bigskip
  \bigskip
  \bigskip
  \begin{center}
    {\LARGE\bf Inference after Top-$k$ Selection}
\end{center}
  \medskip
}\fi

\begin{abstract}
Researchers often select top-performing options---such as treatments, models, or model features---based on a data-driven criterion and then report effect estimates for the selected winners. 
Naive post-selection estimates, however, are known to suffer from the winner’s curse, producing systematically overoptimistic results.
We introduce a flexible conditional inference method that corrects for this overoptimism through an adaptive exponential randomization scheme.
Our method achieves selection quality that closely matches that of standard top-$k$ selection, while also yielding shorter confidence intervals than existing approaches.
Furthermore, our approach applies broadly to nonparametric settings with asymptotically linear selection statistics, covering wide-ranging applications such as inference for the efficacy of the most promising treatments in clinical trials, the abilities of top-ranked models on leaderboards, and the importance of the most predictive features in a model.
\end{abstract}

\noindent{\it keywords:} conditional inference, exponential mechanism, nonparametric inference, post-selection inference, winners, winner's curse

\spacingset{1.5}

%%---------------------------------------------------------------------------
\section{Introduction}
\label{sec:intro}
%%---------------------------------------------------------------------------

In many empirical workflows, researchers first select winners or top-ranked options according to a data-driven criterion, and then focus subsequent analysis only on those that appear most promising.
Examples include selecting treatments with the largest estimated effects, models with the best benchmark performance, or model features with the highest importance scores. 
In such settings, the object of interest is often not a single best candidate, but rather a set of top-$k$ winners pursued for further investigation.
A natural question then arises: \emph{How large are the true effects of the selected winners?}

Because the same data are used to select winners and estimate their effects, naive post-selection estimates tend to be overoptimistic. 
For example, classical intervals that fail to account for selection typically have below-nominal coverage.
This is popularly known as the \textit{winner's
curse}, and as a consequence, selected winners often perform worse in subsequent experiments or real-world deployments than their initial estimates suggest. 
This phenomenon is well documented across applications, including economics \citep{petrou2026inference}, machine learning \citep{lee2018winner}, and biomedical research \citep{zhong2008bias, ferguson2013empirical, forde2023review, guo2021inference}, and has motivated a growing body of
post-selection inference methods for correcting selection bias.

A large part of this literature develops simultaneous inference with marginal coverage guarantees; see, for example, recent work by \citet{andrews2022inference,andrews2024inference, zrnic2024locally,
zrnic2025flexible, petrou2026inference}. 
Such methods guarantee inferential validity only on average
over all possible winners that may be produced by the selection process.

By contrast, our work focuses on conditional inference: interval estimates with
conditional coverage guarantees for the effects of the winners
\emph{actually selected} from the observed data. 
Despite its conceptual appeal, existing approaches to conditional inference are often either restricted to specific parametric models or limited by low power, and sometimes both.
Examples include the polyhedral method \citep{Lee_2016}, which bases inference on a univariate truncated normal distribution, and data splitting and its variants, such as data thinning and data fission \citep{rasines2023splitting, dharamshi2025generalized, leiner2025data}, which reserve a holdout portion of the data for inference.
In particular, \citet{reid2017post} adapt the polyhedral method to top-$k$ winners, but the approach relies on exact normality and is tailored to specific selection criteria.
Moreover, the resulting intervals can be wide, sometimes infinitely long and unstable, when there are close competitors.

A refinement of the polyhedral approach, the hybrid method of \citet{andrews2024inference}, later generalized by \citet{mccloskey2024hybrid},  addresses the overly wide intervals produced by the polyhedral method by combining it with a simultaneous interval.
Taking a different approach, \citet{hoff2025selective} address this problem by constructing empirical Bayes estimates.
However, in seeking to restore power, both types of approaches either do not target conditional coverage or do not provide an exact control of it.

Next, we introduce the problem setup, discuss the two types of coverage guarantees, and then present a data example that highlights the contributions of our inferential method.

\subsection{Problem setup}
\label{sec:setup}

Consider selecting the $k$ most promising candidates from $p\geq k$ options,
indexed by $[p]=\{1,\ldots,p\}$, using selection statistics
$T=(T_1,\ldots,T_p)^\top$, where each $T_j$ is a noisy estimate of the true
effect $\mu_j\in\mathbb{R}$. 

A selection rule maps $T$ to a set $\widehat E(T)\subseteq[p]$ of $k$ winners.
We write this random selected set simply as $\widehat E$, and denote its realized value on the observed data $T=t$ by $E_o$.
A canonical example is the top-$k$ rule, which sets $\widehat E$ to the indices
of the $k$ largest entries of $T$. 
In our approach, $\widehat E$ is generated by a randomized selection rule based on a data-adaptive exponential mechanism, which we define in Section \ref{sec:exp:mechanism}.

Post selection, we seek inference for the effects of the selected winners
\begin{equation}
\label{eqn: target:inference}
\mu_{\widehat{E}}= \bigl(\mu_j: j \in\widehat E\bigr)^\top.
\end{equation}
Specifically, after observing $\{\widehat{E}=E_o\}$, we wish to infer about the parameter vector $\mu_{E_o}$.

\subsection{Coverage guarantees}
\label{sec:guarantees}

\paragraph{Conditional guarantees.} For any realized set of winners $E_o\subseteq [p]$, the $100\times(1-\alpha)\%$ intervals $\{C_j^{E_o}(T): j\in E_o\}$ produced by our method satisfy:
\begin{equation}
\label{eq:cond_cov}
 \text{\textit{Per-winner conditional coverage:}}  \ \     \mathbb P
    \left[
        \mu_j\in C_j^{E_o}(T)
        \,\middle|\,
        \widehat E=E_o
    \right]
    \geq 1-\alpha \ \text{ for all } j\in E_o.
\end{equation} 
Equivalently, on the event $\{\widehat E=E_o\}$, our method provides an interval estimate $C_j^{E_o}(T)$ for the effect of each winner $j\in E_o$ actually selected from the observed data, addressing the question: how large is $\mu_j$ for $j\in E_o$?

The guarantee \eqref{eq:cond_cov} in turn implies conditional coverage averaged over the
selected set:
\begin{equation}
\label{eq:cond_cov_avg}
\text{\textit{Averaged conditional coverage:}}\quad
\mathbb{E}\!\left[\frac{1}{k}\sum_{j\in\widehat E}
\mathbbm{1}\{\mu_j\in C_{j}^{\widehat{E}}(T)\}\,\middle|\,\widehat E=E_o\right]
\geq 1-\alpha,
\end{equation}
which ensures inferential validity across the full selected set rather than for each winner individually, and reduces to \eqref{eq:cond_cov} when $k=1$. 
Finally, applying a multiplicity correction to our intervals, e.g., by forming per-winner intervals at level $\alpha/k$, yields simultaneous coverage for the effects of the selected winners $\mu_{E_o}$; we do not pursue this direction here.

\paragraph{Marginal guarantees.} 
By contrast, marginal (or unconditional) inferential guarantees average over the randomness of the selected set $\widehat{E}$. 
For example, the intervals $\{M_{j}(T): j\in \widehat{E}\}$ satisfy a marginal coverage guarantee if
\begin{equation}
    \label{eq:marg_cov_avg}
 \text{\textit{Averaged marginal coverage:}} \ \ \mathbb{E}
\left[
\dfrac{1}{k}
\;\sum_{j\in\widehat E}
\mathbbm{1}\{\mu_j\in M_{j}(T)\}
\right] \geq 1-\alpha.
\end{equation}
One minus this averaged marginal coverage is the well-known false coverage rate (FCR), introduced by \cite{benjamini2005false}. 
Simultaneous methods achieve \eqref{eq:marg_cov_avg} by constructing a confidence
region $\mathcal M(T)$ for $\mu_{\widehat E}$ such that
$\mathbb P[\mu_{\widehat E}\in\mathcal M(T)]\geq 1-\alpha$.

\paragraph{Choosing the appropriate guarantee.} 
The distinction between conditional and marginal guarantees is important, and the appropriate choice depends on the underlying inferential question. 
For example, assessing the efficacy of drug A, because it is selected as a winner in a clinical trial, requires a conditional guarantee, whereas assessing whether the efficacy of the winning treatment has changed, regardless of its identity, may only require a marginal guarantee.
Crucially, interval estimates with marginal guarantees cannot answer the former type of inferential questions, whereas intervals with conditional guarantees can address both types of questions, as clarified in the next remark.

\begin{remark}
Conditional guarantees are strictly stronger than their marginal counterparts. As an example, by the law of total probability, \eqref{eq:cond_cov_avg} implies \eqref{eq:marg_cov_avg}, but not conversely. 
In particular, neither marginal coverage nor simultaneous marginal coverage control provides any type of conditional inferential guarantee for the selected winners in $E_o$.
Our first data example, presented next, illustrates this distinction.
For a related discussion from a hypothesis-testing perspective, see also \cite{neufeld2026inference}.
\label{rem:distinction}
\end{remark}

\subsection{Contributions and other related work}

Our work introduces a data-adaptive randomization scheme based on the exponential mechanism that enables conditional inference for winners without requiring an analytical characterization of the selection event.
The proposed method delivers valid conditional inference for the selected winners without sacrificing selection quality, and produces short, numerically stable intervals that remain valid beyond normal data.
We demonstrate this breadth in three nonparametric examples: inference on the efficacy of the most promising treatments in A/B/n trials with binary outcomes, the abilities of winners ranked using Bradley--Terry--Davidson comparisons, and the importance of the most predictive features, potentially based on a black-box fit; see Section~\ref{sec:examples}.

Figure~\ref{fig:example1} illustrates how our method compares with existing approaches in a first example with Gaussian data; full details are deferred to Appendix~\ref{app:sim-details}.

\begin{figure}[t]
  \centering
  \includegraphics[width=\textwidth]{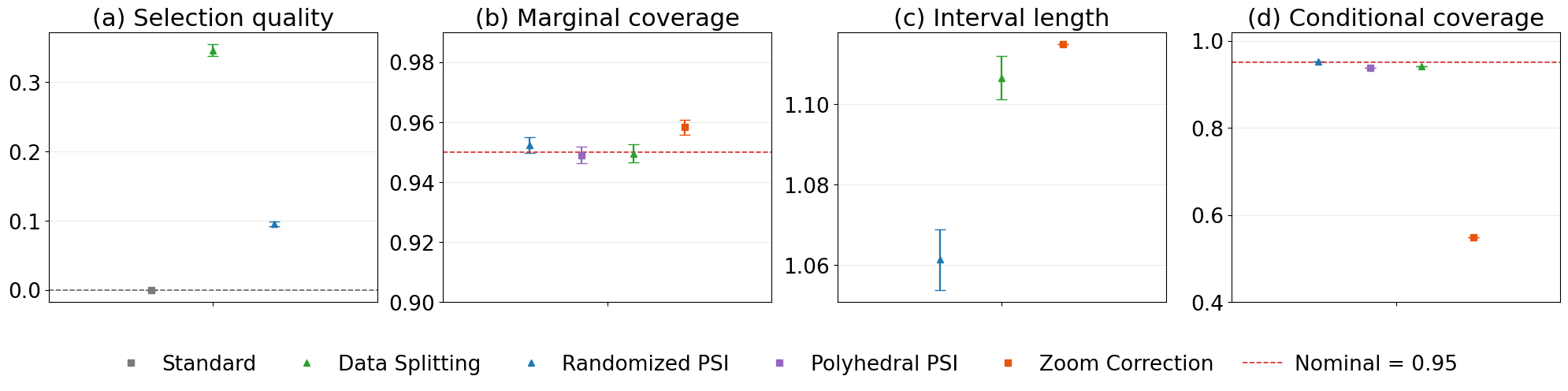}
  \vspace{-1cm}
  \caption{Comparison of inferential methods: (a) standardized regret, with zero marked by a horizontal line; (b) marginal coverage rate; (c) average interval length; and (d) conditional coverage rate, with the nominal level $1-\alpha$ marked by horizontal lines in panels~(b) and~(d). The intervals produced by ``Polyhedral PSI''  are \textcolor{red}{496.2\%} longer than those produced by the proposed ``Randomized PSI'' approach, and are therefore omitted from panel (c) to avoid obscuring comparisons among the other methods.}
  \label{fig:example1}
\end{figure}

\begin{enumerate}[leftmargin=*]
\item \emph{Selection quality.} 
Panel~(a) reports regret, a measure of selection quality defined in Section \ref{sec:exp:mechanism}. 
Regret is measured relative to the standard top-$k$ rule, which uses the full data without additional randomization and has regret zero; values closer to zero therefore indicate better selection quality.
As shown in the plot, our method (``Randomized PSI'') attains regret close to zero, selecting nearly as well as the top-$k$ rule, whereas ``Data Splitting'', which uses only half the sample, has a much larger regret, corresponding to greater deviation from the selection rule applied to the original full dataset.
Note that the other two methods considered use the standard top-$k$ rule for selection.

\item \emph{Coverage.}
Panels~(b) and~(d) report marginal and conditional coverage rates, respectively. 
Our method attains nominal coverage in both senses, as do the conditional baselines based on ``Data Splitting'' and ``Polyhedral PSI'' \citep{reid2017post}. 
In contrast, the ``Zoom Correction'' method of \cite{zrnic2025flexible}, included as a marginal baseline, fails to cover the selected winner conditional on its selection, consistent with the distinction between the two types of guarantees discussed earlier.

\item \emph{Interval length.} Panel~(c) shows that the intervals based on the proposed ``Randomized PSI'' are substantially shorter than those produced by ``Polyhedral PSI''. 
Although our intervals are closer in length to those produced by ``Data Splitting'', the latter achieves this only at the cost of substantially lower selection quality (see Panel~(a)).

Furthermore, it can be seen that even when only marginal coverage may be sufficient, our method, which targets the stronger conditional guarantee, still yields shorter intervals than the ``Zoom Correction'' approach to simultaneous inference.
\end{enumerate}

The exponential mechanism was introduced by \citet{bakshi2025classification} for inference in decision trees and developed for clustering problems by \citet{wu2025hierarchical}.
However, these papers applied the mechanism using a prespecified level of randomization and established validity only under specific models.
The present work relaxes both requirements to provide a broadly applicable approach:

\noindent{\textbf{How much should one randomize?}} \quad  We provide a tuning-free calibration of the randomization level, based on a notion of regret, that enables powerful inference while keeping the randomized selection close to the standard top-$k$ rule.
The regret budget $q$ is directly interpretable and requires no tuning: the practitioner specifies the acceptable loss in selection quality, and the corresponding temperature is determined automatically.
Prior work on randomized conditional inference \citep{randomizedresponse, kivaranovic2019length, panigrahi2021selection, panigrahi2023approximate, panigrahi2024exact} has highlighted the power gains obtained by increasing randomization. 
Yet these gains can come at the expense of selection quality, just as allocating more data to inference under data splitting can improve inferential power but worsen the quality of selection.
Carefully choosing the amount of randomization is therefore essential to improving inference without compromising the quality of selection.
%This consideration is especially important in real-world deployments, where poor selection quality can directly affect the choice of questions or targets pursued for subsequent inference.

\noindent{\textbf{How flexible is our post-selection inference method?}} \quad Central to the broad applicability of our approach, we establish valid inference in nonparametric settings with asymptotically linear selection statistics.
Unlike earlier randomized approaches for asymptotic inference
\citep{randomizedresponse, rasines2023splitting, perry2026post}, and approaches
developed under more relaxed assumptions \citep{panigrahi2023carving,
bakshi2024selective}, which impose either direct or indirect restrictions on the probability of the selection event, our theory places no such restrictions.

Finally, we note that \citet{jinglei2024winners} employ the exponential mechanism to draw inference about the population minimum of a vector from noisy observations.
Our goal is different: we seek inference on the effects of units that appear small or large in the observed data, which might not correspond to the population minimum or maximum.

\section{Randomized selection of winners}
\label{sec:selection}

Throughout, we consider the selection of the top-$k$ winners from $\Ek = \{E \subseteq \rbracket{p} : |E| = k\}$, the collection of all $k$-subsets of $\rbracket{p}$, with $|\Ek| = \binom{p}{k}$, based on the selection statistics $T = (T_1,\ldots,T_p)^\top \in \mathbb{R}^p$.

To identify these winners, each candidate set $E \in \Ek$ is assigned a score $s_E(T)$, where $s_E: \real^p \to \real$ is a criterion tailored to the problem and evaluated at the observed data.
We assume this score admits the additive decomposition
$$
s_E(T) =\sum_{j\in E}s_j(T).
$$
A simple yet common example is $s_E(T) = \sum_{j \in E} T_j$, for $E \in \Ek$, which sums the selection statistics indexed by the set $E$. 

The standard top-$k$ rule without external randomization selects a  subset that maximizes the empirical score 
\begin{equation}
    \Eone \in  \Estar(T) = \argmax_{E \in \Ek} \; s_E(T) 
    %= \{E \in \Ek : s_E(T)= \max_{E\in \Ek} s_E(T)\}.
    \label{eqn: standard selection} 
\end{equation} 
In the presence of ties, the top-$k$ winners subset is picked from the set of all maximizers $\Estar(T)$, either randomly or based on a tie-breaking rule. 
Instead of selecting the winners to maximize the score as in \eqref{eqn: standard selection}, we introduce a randomization scheme for selection using a data-adaptive exponential mechanism, described next.

%%---------------------------------------------------------------------------
\subsection{Randomization scheme}
\label{sec:exp:mechanism}
%%------------------------------------------------

Introducing some notations, let $\bar{s}(t)= \dfrac{1}{|\Ek|}\sum_{E \in \mathcal{E}_k} s_E(T)$ denote the mean score over all subsets in $\mathcal{E}_k$. 
For a fixed $\varepsilon>0$, let 
$$
\snoise(t ) =\snoise(t; k, \varepsilon)= \sqrt{
    \frac{1}{|\mathcal{E}_k|}
    \sum_{E \in \mathcal{E}_k}
    \bigl(s_E(t) - \bar{s}(t)\bigr)^2 + \varepsilon^2}
$$ 
represent the dispersion of the scores; $\varepsilon$, chosen to be a small constant, ensures that this quantity is well defined, even when the scores of all candidate subsets are nearly equal.

For fixed $\tau > 0$ and $\varepsilon > 0$, the random set of selected winners $\Ehat$ is sampled according to the discrete probability distribution
\begin{equation}
  \label{eq:expmech}
  \begin{aligned}
  \Pp\rbracket{\Ehat = E \;\Big|\; T=t}
    \;=\;
  \dfrac{\exp\!\left\{\dfrac{s_E(t)}{\tau \cdot \snoise (t)}\right\}}
       {\displaystyle\sum_{E' \in \Ek}
        \exp\!\left\{\dfrac{s_{E'}(t)}{\tau \cdot \snoise (t)}\right\}} = \dfrac{\exp\!\left\{\dfrac{s_E(t) - \bar{s}(t)}{\tau \cdot \snoise (t)}\right\}}
       {\displaystyle\sum_{E' \in \Ek}
        \exp\!\left\{\dfrac{s_{E'}(t)-\bar{s}(t)}{\tau \cdot \snoise (t)}\right\}},  
  \end{aligned}
\end{equation}
where $\snoise(t)$ is as defined above.
Here, the second equality follows from subtracting the constant mean score from both the numerator and the denominator of the probabilities. 
Note that a practical concern may arise in implementing this randomization scheme, especially when $|\Ek| = \binom{p}{k} $ is large. 
In Section~\ref{sec:efficient}, we present a computationally efficient implementation for sampling from the distribution in \eqref{eq:expmech}.

For notational simplicity, from this point on we let $\tau(t)= \tau \cdot \snoise (t)$.
Put differently, our randomization scheme replaces the hard maximum in \eqref{eqn: standard selection} with a softmax over $\left\{s_E(t)/\tau(t): E\in \Ek\right\}$, where $\tau$, the temperature parameter of the data-adaptive exponential mechanism, controls the overall level of randomization.

\begin{remark}
Scaling the scores by their dispersion makes $\tau$ scale-free, so that it can be applied consistently across problems whose scores may have different magnitudes or scales.
This type of data-adaptive scaling is important because, without it, the effective temperature would depend on the absolute scale of the scores; with it, the selection probabilities defined in \eqref{eq:expmech} depend only on relative differences among subset scores.
\end{remark}

\subsection{Guarantees on the quality of randomized selection}

Let $s^*(T) = \displaystyle\max_{E \in \Ek} s_E(T)$ denote the optimal score, and let $$\Delta_+(T)=s^*(T) -
  \displaystyle\max_{E\notin \widehat{W}^*(T)}
  s_E(T)$$ 
denote the gap between the optimal score and the
next-best score, with the convention that $\Delta_+(T)=+\infty$ if $\widehat{W}^*(T)=\Ek$.

Recall the standard top-$k$ rule defined in \eqref{eqn: standard selection} selects one of the score-maximizers $\Eone \in \widehat{W}^*(T)$, attaining the optimal score $s_{\Eone}(T) = s^*(T)$. 
Given the observed data $\{T=t\}$, Proposition \ref{prop:mismatch} provides an explicit probabilistic bound on the event that our exponential mechanism fails to match the score achieved by the standard top-$k$ rule, i.e. $s_{\Ehat} (T)<s^{*}(T)$.

\begin{proposition}
\label{prop:mismatch}
Let $\Ehat$ be the set of top-$k$ winners selected by the randomization scheme in \eqref{eq:expmech} with temperature $\tau(T) = \tau \cdot \snoise(T)$. 
Conditional on ${T=t}$, we have
\begin{equation*}
\label{eq:mismatch-simple-general}
\Pp\rbracket{\Ehat \notin \Estar(T)\mid T=t}
\le (|\Ek|-1) \exp\left\{-
\frac{\Delta_+(t)}{\tau(t)}
\right\}, \ \text{ for any } \ t\in \real^p.
\end{equation*}
\end{proposition}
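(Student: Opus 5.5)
The plan is to bound the mismatch probability directly from the closed form \eqref{eq:expmech}, using the fact that the denominator always contains at least one term at the optimal score. Fix $t\in\real^p$ and write $\tau(t)=\tau\cdot\snoise(t)$. This quantity is strictly positive because $\snoise(t)\ge\varepsilon>0$, so the softmax is well defined.

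First, dispose of the degenerate case. If $\Estar(t)=\Ek$, then the event $\{\Ehat\notin\Estar(t)\}$ is empty and the left-hand side is zero. The right-hand side is $(|\Ek|-1)e^{-\infty}=0$ under the stated convention $\Delta_+(t)=+\infty$, so the inequality holds trivially. From now on, assume $\Estar(t)\neq\Ek$, so that $\Delta_+(t)$ is finite and positive: $\Ek$ is finite, so the maximum over non-maximizers is attained and is strictly less than $s^*(t)$.

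Next, write the mismatch probability as a sum over non-maximizers:
\begin{equation*}
\Pp\rbracket{\Ehat\notin\Estar(T)\mid T=t}=\frac{\sum_{E\notin\Estar(t)}\exp\{s_E(t)/\tau(t)\}}{\sum_{E'\in\Ek}\exp\{s_{E'}(t)/\tau(t)\}}.
\end{equation*}
\begin{itemize}
\item \emph{Denominator.} Drop every term except one maximizer $E^*\in\Estar(t)$. This gives the lower bound $\exp\{s^*(t)/\tau(t)\}$.
\item \emph{Numerator.} Each $E\notin\Estar(t)$ satisfies $s_E(t)\le s^*(t)-\Delta_+(t)$, by the definition of $\Delta_+$. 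There are $|\Ek|-|\Estar(t)|\le|\Ek|-1$ such terms, since $\Estar(t)$ is nonempty.
\end{itemize}
Dividing the numerator bound by the denominator bound gives
\begin{equation*}
\Pp\rbracket{\Ehat\notin\Estar(T)\mid T=t}\le(|\Ek|-1)\exp\{-\Delta_+(t)/\tau(t)\},
\end{equation*}
which is the claim.

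There is no real obstacle here. The only points needing care are these:
\begin{itemize}
\item The bound uses the dispersion-scaled temperature $\tau(t)$, which is legitimate because $\tau(t)$ depends on $t$ only and is constant across all subsets $E$ in the softmax.
\item The count of non-maximizers is bounded by $|\Ek|-1$ rather than $|\Ek|$.
\item The infinite-gap convention must be handled separately, as was done above.
\end{itemize}
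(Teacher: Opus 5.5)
Your proof is correct and follows essentially the same route as the paper: you bound each non-maximizer's weight by $\exp\{-\Delta_+(t)/\tau(t)\}$ relative to the optimal score, and you lower-bound the normalizer by a maximizer's contribution. The only difference is that the paper first keeps the sharper intermediate bound
$$
\frac{(|\Ek|-|\Estar(t)|)\exp\{-\Delta_+(t)/\tau(t)\}}{|\Estar(t)|+(|\Ek|-|\Estar(t)|)\exp\{-\Delta_+(t)/\tau(t)\}},
$$
obtained via monotonicity of $x\mapsto x/(|\Estar(t)|+x)$, and relaxes it only afterwards, whereas you drop those denominator terms immediately; this is a harmless shortcut, and your separate treatment of the case $\Estar(t)=\Ek$ is a welcome extra.
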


\begin{remark}
The bound in Proposition~\ref{prop:mismatch} implies that, as $\tau\to 0$, i.e., as the temperature parameter $\tau$ decreases, our mechanism selects a winner subset in $\widehat{W}^*(T)$ with probability converging to $1$.
This type of limiting behavior was also noted in prior work introducing this exponential randomization; see Proposition 3.1 in \cite{bakshi2025classification}.
\end{remark}

Proposition~\ref{prop:regret} additionally provides a guarantee for another measure of selection quality: the expected conditional standardized regret, defined as
$$\Ee \left[\frac{1}{\snoise(T)}(s^*(T)-s_{\Ehat}(T)) \Big\lvert T=t\right].$$ 
This quantity measures, in expectation, the deviation of the score under the randomized rule from that under the standard top-$k$ rule, standardized by dispersion.
\begin{proposition}
\label{prop:regret}
Consider the randomization scheme in \eqref{eq:expmech} with temperature $\tau(T) = \tau \cdot \snoise(T)$, and let $\Ehat$ denote its output, the set of top-$k$ winners. 
Then, for any $t\in \real^p$, we have
\begin{equation*}
\label{eq:regret-bound}
\Ee\rbracket{\frac{1}{\snoise(T)} (s^*(T)-s_{\Ehat}(T))\Big\lvert T=t}
\le
\tau\log\frac{|\Ek|}{|\Estar(t)|}
\le
\tau\log|\Ek|.
\end{equation*} 
\end{proposition}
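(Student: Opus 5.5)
The plan is to fix $t$ and treat \eqref{eq:expmech} as a Gibbs distribution over the finite set $\Ek$ with energies $s_E(t)$ and temperature $\tau(t)=\tau\,\snoise(t)$. The standard variational (Gibbs) identity then gives the bound directly. Write $\pi(E)=\Pp[\Ehat=E\mid T=t]$ and $Z=\sum_{E'}\exp\{s_{E'}(t)/\tau(t)\}$. Since $\snoise(t)$ is deterministic given $T=t$, the left-hand side equals $\snoise(t)^{-1}\bigl(s^*(t)-\sum_E \pi(E)s_E(t)\bigr)$. It therefore suffices to show
$$
s^*(t)-\sum_E\pi(E)s_E(t)\le \tau(t)\log\frac{|\Ek|}{|\Estar(t)|}
$$
and then divide by $\snoise(t)>0$, using $\tau(t)/\snoise(t)=\tau$.

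\textbf{Step 1: the Gibbs identity.} Taking logarithms of \eqref{eq:expmech} gives $s_E(t)=\tau(t)\{\log\pi(E)+\log Z\}$. Averaging this over $\pi$ yields
$$
\sum_E\pi(E)s_E(t)=\tau(t)\bigl\{\log Z-H(\pi)\bigr\},
$$
where $H(\pi)=-\sum_E\pi(E)\log\pi(E)$ is the Shannon entropy.

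\textbf{Step 2: bounding the two terms.} For $\log Z$, keep only the maximizers in the sum, which gives $Z\ge|\Estar(t)|\exp\{s^*(t)/\tau(t)\}$. Hence $\log Z\ge s^*(t)/\tau(t)+\log|\Estar(t)|$. For the entropy, use the standard inequality $H(\pi)\le\log|\Ek|$, which follows from Jensen's inequality or from nonnegativity of $\KL(\pi\,\|\,\mathrm{Unif})$. Combining the two,
$$
\sum_E\pi(E)s_E(t)\ge s^*(t)+\tau(t)\log|\Estar(t)|-\tau(t)\log|\Ek|,
$$
and rearranging gives the first inequality.

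\textbf{Step 3: the second inequality.} This is immediate, since $|\Estar(t)|\ge1$.

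I expect no real obstacle. The only points needing care are that $\snoise(t)\ge\varepsilon>0$, so the division is valid, and that the standardization commutes with the conditional expectation because $\snoise(T)$ is fixed given $T=t$.
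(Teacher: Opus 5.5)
Your proposal is correct and is essentially the paper's argument. Both rest on the identity $\sum_E\pi(E)s_E(t)+\tau(t)H(\pi)=\tau(t)\log Z$ together with $H(\pi)\le\log|\Ek|$. The only difference is cosmetic: you lower-bound $\log Z$ by keeping only the maximizers in the sum, whereas the paper obtains the same inequality $\tau(t)\log Z\ge s^*(t)+\tau(t)\log|\Estar(t)|$ by applying its Gibbs variational lemma with $q$ uniform on $\Estar(t)$, which for that $q$ is equivalent to $\KL(q\,\|\,\pi)\ge 0$.
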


As an immediate consequence of Proposition~\ref{prop:regret}, Corollary~\ref{cor: temp par} provides a simple, tuning-free choice of the temperature parameter $\tau$, ensuring that the standardized regret of randomized winner selection is no greater than a user-specified value $q$.

\begin{corollary}[Choice of temperature parameter]
For any $q>0$, setting the temperature parameter $\tau \leq (\log |\Ek|)^{-1}q$ ensures that 
$$
\Ee\rbracket{\frac{1}{\snoise(T)} (s^*(T)-s_{\Ehat}(T))\Big\lvert T=t} \leq q, \quad \forall t\in \real^p.
$$
 \label{cor: temp par}
 \end{corollary}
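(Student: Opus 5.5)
The corollary follows directly from Proposition~\ref{prop:regret}, so the plan is to substitute the prescribed temperature into that bound and use monotonicity in $\tau$. Fix $t\in\real^p$ and $q>0$, and let $0<\tau\le q/\log|\Ek|$.

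Proposition~\ref{prop:regret} gives, for every $\tau>0$,
\[
\Ee\rbracket{\frac{1}{\snoise(T)} (s^*(T)-s_{\Ehat}(T))\Big\lvert T=t}\le \tau\log|\Ek|.
\]
The right-hand side is nondecreasing in $\tau$ because $\log|\Ek|\ge 0$. Hence
\[
\tau\log|\Ek|\le \frac{q}{\log|\Ek|}\cdot\log|\Ek| = q,
\]
which is the claim. Since $t$ was arbitrary and the bound in Proposition~\ref{prop:regret} holds uniformly in $t$, the conclusion holds for all $t\in\real^p$.

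The only point needing care is the degenerate case $|\Ek|=1$, i.e.\ $k=p$. There $\log|\Ek|=0$, so the prescription $\tau\le(\log|\Ek|)^{-1}q$ is read as placing no restriction on $\tau$. In this case $\Ehat$ is deterministically the unique subset, so $s_{\Ehat}(t)=s^*(t)$ and the regret is $0\le q$ for any $\tau$. Alternatively, the sharper bound $\tau\log(|\Ek|/|\Estar(t)|)=0$ from Proposition~\ref{prop:regret} gives the same conclusion.

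Apart from this edge case, which is more a convention than a difficulty, there is no real obstacle.
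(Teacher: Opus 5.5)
Your proof is correct and follows the paper's route: the paper states the corollary as an immediate consequence of Proposition~\ref{prop:regret}, obtained by substituting $\tau\le q/\log|\Ek|$ into the bound $\tau\log|\Ek|$. Your treatment of the degenerate case $|\Ek|=1$ is a harmless addition that the paper leaves implicit.
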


%%---------------------------------------------------------------------------
\section{Inference method}
\label{sec:method}

To perform inference, we require assumptions on the selection statistics. 
We begin with the exact Gaussian model:
\begin{align}
\label{eq:gaussian_model}
T \sim \mathcal{N}_p(\mu ,\Sigma),
\end{align}
where $\mu = (\mu_1,\ldots, \mu_p)^\top \in \real^p$ are the unknown mean effect parameters.
For now, we assume that the covariance matrix $\Sigma$ is known. 
Asymptotic inferential guarantees for settings in which~\eqref{eq:gaussian_model} holds only approximately, including cases where $\Sigma$ is consistently estimable, are deferred to Section~\ref{sec:asymptotic}.

After observing the selected set $\left\{\Ehat = E_o\right\}$, recall that our goal is to provide conditionally valid inference for the effects of the selected winners, $\mu_{E_o}$. 
For the rest of this section, we fix a winner $j_o \in E_o$, and without loss of generality, we focus on inference for the scalar-valued parameter $\mu_{j_o}$. 

\subsection{Outline of the inferential workflow}

Following a standard inferential recipe, we construct a pivot, that is a function of $\mu_{j_o}$ and the selection statistics. 
Inverting this pivot, given in Theorem \ref{thm:pivot}, then directly yields interval estimates, as well as p-values for tests whose null hypotheses about $\mu_{j_o}$, as stated in Corollary \ref{cor:conditional-validity}.

To obtain conditional coverage guarantees, as discussed in Section \ref{sec:guarantees}, a seemingly natural approach is to derive the pivot from the conditional density of $T_{j_o} \mid \left\{\widehat{E}=E_o\right\}$.
However, obtaining such a pivot requires both eliminating nuisance parameters and characterizing the underlying conditional density. 
In contrast to most existing conditional approaches, both with and without randomization, our approach yields this density in closed form without ever requiring an analytical description of the selection event. 
We now outline our inferential workflow.

\noindent\textbf{Step 1: Conditioning to remove nuisance parameters.}
We first identify appropriate statistics $T_{j_o}^\perp$ such that, after further conditioning on them, the conditional density of 
\begin{equation}
\label{eqn: conditional:distribution}
    T_{j_o} \Big\lvert \left\{\widehat{E}=E_o, \; T_{j_o}^\perp\right\}.
\end{equation}
is free of nuisance parameters. 
Lemma~\ref{lem:conditionaldensity} states this result.

\noindent\textbf{Step 2: Obtaining a closed-form conditional density.} 
The key observation is that, by design of our randomization scheme, we obtain an exact expression for the conditional density in~\eqref{eqn: conditional:distribution}, as stated in Proposition~\ref{prop:cond_density}. Importantly, unlike earlier conditional approaches, the conditional selection probability $\mathbb{P}[\widehat E=E_o \mid T=t]$, which serves as the correction factor for selection, is available in closed form.
As a result, our method does not require an analytical characterization of the selection event, and is not tailored to the specific score-based criterion used in the selection rule.

\noindent\textbf{Step 3: Constructing test, interval and point estimate.}
We derive a pivot directly by applying the probability integral transform to the conditional density obtained in Step 2. This pivot yields valid tests and interval estimates. Moreover, as a byproduct, maximizing the conditional density yields a bias-corrected point estimate of $\mu_{j_o}$.

\subsection{Step 1: Conditioning to remove nuisance parameters}
\label{sec:key_stats}

To fix notation, let $R_{j_o} \in \real^{(p-1) \times p}$ denote the row-deletion matrix obtained by removing row $j_o$ from $I_p$, the $p\times p$ identity matrix, so that $R_{j_o} x = x_{-j_o}$ for any $x \in \real^p$, and 
let $e_{j_o}$ denote the vector with a $1$ in the $j_o$-th coordinate and zeros elsewhere.
Let $\phi(s; \mu, \sigma^2)$ denote the Gaussian density of a normal random variable with mean $\mu$ and variance $\sigma^2$, evaluated at $s$.

Define the population matrices
\begin{align*}
  \Sigma_{-j_o,\,j_o}
  &= R_{j_o} \Sigma\, e_{j_o}
   \in \real^{p-1},
  &
  \Sigma_{-j_o,-j_o}
  &= R_{j_o} \Sigma R_{j_o}^\top
   \in \real^{(p-1) \times (p-1)},
\end{align*}
and the statistics
\begin{equation}
  \label{eq:Tperp}
  T_{j_o}^\perp= R_{j_o} T- \Sigma_{-j_o,j_o} \sigma_{j_o}^{-2} e_{j_o}^\top T \in \real^{p-1}.
\end{equation}

In Lemma \ref{lem:reconstruction} and Lemma \ref{lem:decomp} we note that: (i) the full vector of selection statistics $T$ can be constructed from these statistics $(T_{j_o}, T_{j_o}^\perp)$, and (ii) $T_{j_o}$ and $T_{j_o}^\perp$ are independent normal variables. 
Using both (i) and (ii), Lemma~\ref{lem:conditionaldensity} shows that the conditional density in~\eqref{eqn: conditional:distribution} is free of nuisance parameters and depends only on the parameter of interest $\mu_{j_o}$.

\begin{lemma}
\label{lem:reconstruction}
It follows that $T=r(T_{j_o},T_{j_o}^{\perp})$, where
$r:\real \times \real^{p-1}\to\mathbb{R}^p$ is defined by
$r(z_1,z_2)=\Sigma e_{j_o}\sigma_{j_o}^{-2}z_1+ R_{j_o}^\top z_2$
for $z=(z_1,z_2^\top)^\top\in\mathbb{R}^p$.
\end{lemma}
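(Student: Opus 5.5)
The plan is to verify the identity $r(T_{j_o},T_{j_o}^\perp)=T$ by direct substitution, using only the definition of $T_{j_o}^\perp$ in \eqref{eq:Tperp} and two elementary facts about the row-deletion matrix $R_{j_o}$. No distributional assumption on $T$ is involved; the statement is a purely algebraic identity that holds for every realization, provided $\sigma_{j_o}^2=e_{j_o}^\top\Sigma e_{j_o}>0$ so that the inverse in \eqref{eq:Tperp} is defined.

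First I will record the identities
\[
R_{j_o}^\top R_{j_o} = I_p - e_{j_o}e_{j_o}^\top, \qquad R_{j_o}e_{j_o}=0 .
\]
The first holds because $R_{j_o}^\top R_{j_o}$ is the coordinate projection onto all coordinates except $j_o$. The second holds because deleting the only nonzero entry of $e_{j_o}$ leaves the zero vector. Writing $T_{j_o}=e_{j_o}^\top T$ and $\Sigma_{-j_o,j_o}=R_{j_o}\Sigma e_{j_o}$, substitution gives
\[
r(T_{j_o},T_{j_o}^\perp) = \Sigma e_{j_o}\sigma_{j_o}^{-2}T_{j_o} + R_{j_o}^\top R_{j_o}T - R_{j_o}^\top R_{j_o}\Sigma e_{j_o}\sigma_{j_o}^{-2}T_{j_o}.
\]

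Next I replace $R_{j_o}^\top R_{j_o}$ by $I_p-e_{j_o}e_{j_o}^\top$ in both places where it appears:
\[
R_{j_o}^\top R_{j_o}T = T - e_{j_o}T_{j_o}, \qquad
R_{j_o}^\top R_{j_o}\Sigma e_{j_o}\sigma_{j_o}^{-2}T_{j_o} = \Sigma e_{j_o}\sigma_{j_o}^{-2}T_{j_o} - e_{j_o}\bigl(e_{j_o}^\top\Sigma e_{j_o}\bigr)\sigma_{j_o}^{-2}T_{j_o} = \Sigma e_{j_o}\sigma_{j_o}^{-2}T_{j_o} - e_{j_o}T_{j_o}.
\]
The second equality uses $e_{j_o}^\top\Sigma e_{j_o}=\sigma_{j_o}^2$. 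Plugging these back in, the two $\Sigma e_{j_o}\sigma_{j_o}^{-2}T_{j_o}$ terms cancel, and the two $e_{j_o}T_{j_o}$ terms cancel, leaving exactly $T$.

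There is no real obstacle. The only point needing care is the bookkeeping of the $j_o$ coordinate: the term $R_{j_o}^\top z_2$ contributes nothing to coordinate $j_o$, so that coordinate must come entirely from the first term. Indeed, the first term contributes $\sigma_{j_o}^2\sigma_{j_o}^{-2}T_{j_o}=T_{j_o}$ there. As a sanity check, I could also verify the claim coordinatewise: the $j_o$ entry is $T_{j_o}$, and every other entry $i$ equals $\Sigma_{ij_o}\sigma_{j_o}^{-2}T_{j_o} + T_i - \Sigma_{ij_o}\sigma_{j_o}^{-2}T_{j_o} = T_i$.
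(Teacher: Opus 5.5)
Your proposal is correct and matches the paper's approach: the paper omits the proof, noting only that it follows by directly evaluating $r$ at $(T_{j_o},T_{j_o}^\perp)$. Your substitution using $R_{j_o}^\top R_{j_o}=I_p-e_{j_o}e_{j_o}^\top$ and $e_{j_o}^\top\Sigma e_{j_o}=\sigma_{j_o}^2$ carries out exactly that check.
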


The proof follows directly by evaluating $r$ at $(T_{j_o},T_{j_o}^{\perp})$ and is therefore omitted.

\begin{lemma}
\label{lem:decomp}
Under the model in \eqref{eq:gaussian_model}, $(T_{j_o}, T_{j_o}^\perp)$ is jointly normally distributed as
\begin{align}
  \label{eq:joint_gauss}
  \begin{pmatrix} T_{j_o} \\ T_{j_o}^\perp \end{pmatrix}
  \sim
  \mathcal{N}_p\!\left(
    \begin{pmatrix} \mu_{j_o} \\ \nu_{j_o} \end{pmatrix},
    \begin{pmatrix} \sigma_{j_o}^2 & 0 \\ 0 & \Gamma_{j_o}
    \end{pmatrix}
  \right),
\end{align}
where $\nu_{j_o}= R_{j_o} \mu
     - \dfrac{1}{\sigma_{j_o}^{2}}\Sigma_{-j_o,j_o} \mu_{j_o}
   \in \real^{p-1}$, and $\Gamma_{j_o}= \Sigma_{-j_o,-j_o}
     - \dfrac{1}{\sigma_{j_o}^{2}}\Sigma_{-j_o,j_o} \Sigma_{-j_o,j_o}^\top
   \in \real^{(p-1) \times (p-1)}$.
\end{lemma}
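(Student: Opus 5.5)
The plan is to write $(T_{j_o}, T_{j_o}^\perp)$ as a single linear transformation of $T$ and then apply the fact that affine images of Gaussian vectors are Gaussian. Define the $p\times p$ matrix
\[
A=\begin{pmatrix} e_{j_o}^\top \\ R_{j_o}-\sigma_{j_o}^{-2}\Sigma_{-j_o,j_o}e_{j_o}^\top\end{pmatrix},
\]
so that by \eqref{eq:Tperp} we have $(T_{j_o}, (T_{j_o}^\perp)^\top)^\top = AT$. Under \eqref{eq:gaussian_model}, $AT\sim\mathcal{N}_p(A\mu, A\Sigma A^\top)$, which gives joint normality at once. It remains to identify the mean vector and the blocks of the covariance matrix.

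\textbf{Mean.} The first entry of $A\mu$ is $e_{j_o}^\top\mu=\mu_{j_o}$. The remaining block is
\[
R_{j_o}\mu-\sigma_{j_o}^{-2}\Sigma_{-j_o,j_o}\mu_{j_o}=\nu_{j_o}.
\]

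\textbf{Covariance blocks.} Throughout, $\sigma_{j_o}^2=e_{j_o}^\top\Sigma e_{j_o}$ is the $(j_o,j_o)$ entry of $\Sigma$; this is the notation the paper uses implicitly. I will also use $e_{j_o}^\top\Sigma R_{j_o}^\top=\Sigma_{-j_o,j_o}^\top$, which holds by symmetry of $\Sigma$.
\begin{itemize}
\item The top-left block is $\sigma_{j_o}^2$.
\item The off-diagonal block is
\[
\mathrm{Cov}(T_{j_o}^\perp,T_{j_o})=R_{j_o}\Sigma e_{j_o}-\sigma_{j_o}^{-2}\Sigma_{-j_o,j_o}\,e_{j_o}^\top\Sigma e_{j_o}=\Sigma_{-j_o,j_o}-\Sigma_{-j_o,j_o}=0 .
\]
\item The bottom-right block expands into four terms:
\[
\Sigma_{-j_o,-j_o}-\sigma_{j_o}^{-2}\Sigma_{-j_o,j_o}\Sigma_{-j_o,j_o}^\top-\sigma_{j_o}^{-2}\Sigma_{-j_o,j_o}\Sigma_{-j_o,j_o}^\top+\sigma_{j_o}^{-4}\sigma_{j_o}^2\Sigma_{-j_o,j_o}\Sigma_{-j_o,j_o}^\top .
\]
This collapses to $\Gamma_{j_o}$.
\end{itemize}

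There is no real obstacle here; this is a routine Schur-complement computation. The only points needing care are the bookkeeping of transposes, via the symmetry of $\Sigma$, and noting that we need $\sigma_{j_o}^2>0$ for $T_{j_o}^\perp$ to be defined in the first place.

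It is worth stating explicitly what the zero off-diagonal block buys us: because the pair is jointly Gaussian, zero covariance gives independence of $T_{j_o}$ and $T_{j_o}^\perp$. That independence is the property used subsequently in Lemma~\ref{lem:conditionaldensity}.
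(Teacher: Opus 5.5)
Your proposal is correct and follows the paper's proof: your matrix $A$ is exactly the paper's $M_{j_o}$ (since $R_{j_o}\Sigma e_{j_o}=\Sigma_{-j_o,j_o}$), and the paper likewise concludes by applying this linear map to the Gaussian vector $T$. You also carry out the block computations that the paper leaves as a ``direct calculation,'' and you correctly use covariance $\Sigma$, whereas the paper's written proof has a stray $\tfrac{1}{n}\Sigma$.
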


\begin{lemma}
Assume the model in \eqref{eq:gaussian_model}. 
The conditional density of $T_{j_o} \Big\lvert \cbracket{\Ehat = E_o, T_{j_o}^\perp = t^\perp}$, evaluated at $t\in \real$, is proportional to
$\phi(t;\, \mu_{j_o}, \sigma_{j_o}^2) \times \Pp[\Ehat=E_o \mid T = r(t, t^\perp)]$.
\label{lem:conditionaldensity}
\end{lemma}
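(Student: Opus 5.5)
The plan is to write down the joint law of $(T_{j_o}, T_{j_o}^\perp, \Ehat)$ and then condition by Bayes' rule. By Lemma~\ref{lem:decomp}, under \eqref{eq:gaussian_model} the pair $(T_{j_o}, T_{j_o}^\perp)$ has a joint density. Since the covariance is block diagonal, this density factorizes as
$$
\phi(t;\mu_{j_o},\sigma_{j_o}^2)\, f_{\perp}(t^\perp;\nu_{j_o},\Gamma_{j_o}),
$$
where $f_\perp$ is the (possibly degenerate) Gaussian law of $T_{j_o}^\perp$. If $\Gamma_{j_o}$ is singular, I will work with densities with respect to the appropriate dominating measure on the support of $T_{j_o}^\perp$; equivalently, I will state everything via regular conditional distributions. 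The factorization is unaffected.

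Next I use the randomization scheme. The selection $\Ehat$ is drawn from \eqref{eq:expmech} given $T$, so its conditional law depends on the data only through $T$. By Lemma~\ref{lem:reconstruction}, $T=r(T_{j_o},T_{j_o}^\perp)$ is a deterministic function of the pair. Hence
$$
\Pp[\Ehat=E_o\mid T_{j_o}=t,\,T_{j_o}^\perp=t^\perp]=\Pp[\Ehat=E_o\mid T=r(t,t^\perp)].
$$
This is a measurable, explicitly computable function of $(t,t^\perp)$. It is strictly positive, since every softmax weight is positive; $\snoise>0$ because $\varepsilon>0$.

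The joint law of $(T_{j_o},T_{j_o}^\perp,\one\{\Ehat=E_o\})$ therefore has density
$$
\phi(t;\mu_{j_o},\sigma_{j_o}^2)\, f_\perp(t^\perp)\,\Pp[\Ehat=E_o\mid T=r(t,t^\perp)]
$$
on the event $\{\Ehat=E_o\}$. To condition on $\{\Ehat=E_o, T_{j_o}^\perp=t^\perp\}$, I divide by the marginal
$$
\int \phi(u;\mu_{j_o},\sigma_{j_o}^2)\,f_\perp(t^\perp)\,\Pp[\Ehat=E_o\mid T=r(u,t^\perp)]\,du .
$$
This is positive and finite, because the selection probability lies in $(0,1]$. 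The factor $f_\perp(t^\perp)$, which carries the nuisance parameter $\nu_{j_o}$, cancels between numerator and denominator. What remains is proportional in $t$ to $\phi(t;\mu_{j_o},\sigma_{j_o}^2)\,\Pp[\Ehat=E_o\mid T=r(t,t^\perp)]$, which is the claim.

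The main obstacle is rigor in conditioning on the measure-zero event $\{T_{j_o}^\perp=t^\perp\}$. I would handle it by verifying the regular-conditional-distribution identity directly. For bounded measurable $g,h$, compute
$$
\Ee\bigl[g(T_{j_o})\,h(T_{j_o}^\perp)\,\one\{\Ehat=E_o\}\bigr]
$$
using the tower property through $T$ and then Fubini with the factorized density. The result is that the proposed kernel integrates correctly against $h$, which identifies it as a version of the conditional law for almost every $t^\perp$. This argument uses independence from Lemma~\ref{lem:decomp} and the positivity of the exponential-mechanism probabilities.
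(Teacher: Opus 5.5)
Your proposal is correct and follows essentially the paper's route. The paper gives no separate proof of this lemma; its argument appears inside the proof of Proposition~\ref{prop:cond_density}, which applies Bayes' rule, uses the independence from Lemma~\ref{lem:decomp} to replace the density of $T_{j_o}\mid T_{j_o}^\perp$ by $\phi(t;\mu_{j_o},\sigma_{j_o}^2)$, and uses Lemma~\ref{lem:reconstruction} to identify the selection probability given $(T_{j_o},T_{j_o}^\perp)$ with $\Pp[\Ehat=E_o\mid T=r(t,t^\perp)]$. Your additional regular-conditional-distribution check and your argument that the normalizer is positive and finite make rigorous what the paper leaves implicit.
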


In the expression for the conditional density in Lemma~\ref{lem:conditionaldensity}, the conditional selection probability $\Pp[\Ehat=E_o \mid T = r(t, t^\perp)]$ is taken with respect to the distribution of the external randomization and is therefore free of model parameters. 
It follows immediately that this density depends only on $\mu_{j_o}$.

The following section establishes that this conditional selection probability is exactly the probability of selecting the winning subset $E_o$ under the proposed exponential mechanism.

\subsection{Step 2: Obtaining a closed-form conditional density}

\begin{definition}[Selection weight]
\label{def:Lambda}
For the map $r$ defined in Lemma~\ref{lem:reconstruction} and the observed selection $\Ehat = E_o$, define $\Lambda^{E_o} : \real \times \real^{p-1} \to (0,1)$ as
\begin{align}
  \label{eq:Lambda}
  \Lambda^{E_o}(u_1, u_2)
  =
  \dfrac{
    \exp\!\left\{
      \dfrac{s_{E_o}(r(u_1,u_2))}
            {\tau \cdot \snoise(r(u_1,u_2))}
    \right\}
  }{
    \displaystyle\sum_{E' \in \Ek}
    \exp\!\left\{
      \dfrac{s_{E'}(r(u_1,u_2))}
            {\tau \cdot \snoise(r(u_1,u_2))}
    \right\}
  }.
\end{align}
\end{definition}

\begin{lemma}
\label{lem:lambda_prob}
For all $u = (u_1, u_2^\top)^\top \in \real^p$, we have that
\begin{align*}
  \Lambda^{E_o}(u_1, u_2)
  = \Pp\rbracket{\Ehat = E_o \mid T = r(u_1, u_2)}.
\end{align*}
\end{lemma}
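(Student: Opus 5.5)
This is essentially a definitional identity, so the plan is to unwind the definitions. Fix $u=(u_1,u_2^\top)^\top\in\real^p$ and set $t:=r(u_1,u_2)\in\real^p$, with $r$ as in Lemma~\ref{lem:reconstruction}. The randomization scheme in \eqref{eq:expmech} specifies the conditional law of $\Ehat$ given $T=t$ for every $t\in\real^p$. That law depends on the data only through the value $t$, since the external randomization is drawn independently given $T$. Evaluating \eqref{eq:expmech} at the point $t=r(u_1,u_2)$ gives
\begin{equation*}
\Pp\bigl[\Ehat=E_o \mid T=r(u_1,u_2)\bigr]
=\frac{\exp\{s_{E_o}(r(u_1,u_2))/(\tau\cdot\snoise(r(u_1,u_2)))\}}{\sum_{E'\in\Ek}\exp\{s_{E'}(r(u_1,u_2))/(\tau\cdot\snoise(r(u_1,u_2)))\}}.
\end{equation*}
This is exactly the right-hand side of \eqref{eq:Lambda}, i.e.\ $\Lambda^{E_o}(u_1,u_2)$.

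Two side checks make the statement well posed. First, $\snoise(t)\ge\varepsilon>0$ for every $t$, so each exponent is finite. Each summand is then strictly positive and the denominator is a finite sum of positive terms, because $|\Ek|=\binom{p}{k}<\infty$. Hence $\Lambda^{E_o}$ is well defined. Second, it takes values in $(0,1)$: the numerator is one of the positive summands, and the value is strictly less than $1$ whenever $|\Ek|\ge2$. When $|\Ek|=1$ it equals $1$, and the codomain in Definition~\ref{def:Lambda} should then be read as $(0,1]$.

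The only point needing care, which I expect to be the main (though mild) obstacle, is the meaning of conditioning on a null event $\{T=t\}$ when $T$ is continuous. I would handle it by stating that $\Pp[\Ehat=E\mid T=t]$ denotes the Markov kernel defining the mechanism in \eqref{eq:expmech}. This kernel is a regular version of the conditional distribution, chosen everywhere rather than merely almost everywhere. It is jointly measurable in $t$ because $t\mapsto s_E(t)$ and $t\mapsto\snoise(t)$ are measurable. With this convention the identity holds for all $u\in\real^p$, not just almost everywhere, and that is exactly what Lemma~\ref{lem:conditionaldensity} needs when it substitutes $T=r(t,t^\perp)$.
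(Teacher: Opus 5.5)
Your proof is correct and takes the same approach as the paper: the paper notes that the identity follows directly from evaluating the definition of the exponential mechanism in \eqref{eq:expmech} at $t=r(u_1,u_2)$, with no distributional assumptions on $T$. Your side remarks are accurate refinements that the paper leaves implicit: well-definedness via $\snoise\ge\varepsilon$, the codomain being $(0,1]$ rather than $(0,1)$ in the degenerate case $|\Ek|=1$, and reading the conditional probability as the everywhere-defined kernel of the mechanism.
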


Note that the identity in Lemma~\ref{lem:lambda_prob} holds exactly and follows directly from the definition of the exponential randomization scheme in \eqref{eq:expmech}.
No distributional assumptions on $T$ are required, since the identity is conditional on $T$.

We are now ready to derive an exact expression for the conditional density of interest, which is formalized in Proposition \ref{prop:cond_density}. 

\begin{proposition}[Conditional Density]
\label{prop:cond_density}
Under~\eqref{eq:gaussian_model}, the conditional density function of $T_{j_o}$ given $\cbracket{\Ehat = E_o , T_{j_o}^\perp = t^\perp}$, evaluated at $t \in \real$, is
\[ \left(\displaystyle\int_{-\infty}^{\infty}
    \phi\nbracket{u;\, \mu_{j_o},\, \sigma_{j_o}^2}  \Lambda^{E_o}\nbracket{u, t^\perp}
    \, du \right)^{-1} \times  \phi\nbracket{t;\, \mu_{j_o},\, \sigma_{j_o}^2}
\Lambda^{E_o}\nbracket{t, t^\perp}.
\]
\end{proposition}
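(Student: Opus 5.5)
The proof combines Lemma~\ref{lem:conditionaldensity} with Lemma~\ref{lem:lambda_prob}, and then fixes the normalizing constant explicitly.

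\textbf{Joint law of the selection and the statistics.} I would first write down the joint law of $(T_{j_o}, T_{j_o}^\perp, \Ehat)$. By Lemma~\ref{lem:decomp}, under~\eqref{eq:gaussian_model} the pair $(T_{j_o},T_{j_o}^\perp)$ has joint density $\phi(t;\mu_{j_o},\sigma_{j_o}^2)\, g(t^\perp)$, where $g$ is the $\mathcal{N}_{p-1}(\nu_{j_o},\Gamma_{j_o})$ density. The external randomization in~\eqref{eq:expmech} depends on the data only through $T$. By Lemma~\ref{lem:reconstruction}, $T=r(T_{j_o},T_{j_o}^\perp)$, so the map $(T_{j_o},T_{j_o}^\perp)\mapsto T$ is a bijection. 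Hence
\[
\Pp[\Ehat=E_o\mid T_{j_o}=t,\,T_{j_o}^\perp=t^\perp]=\Pp[\Ehat=E_o\mid T=r(t,t^\perp)],
\]
and by Lemma~\ref{lem:lambda_prob} this equals $\Lambda^{E_o}(t,t^\perp)$.

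\textbf{Bayes' rule and normalization.} The joint density of $(T_{j_o},T_{j_o}^\perp)$ restricted to the event $\{\Ehat=E_o\}$ is therefore $\phi(t;\mu_{j_o},\sigma_{j_o}^2)\,g(t^\perp)\,\Lambda^{E_o}(t,t^\perp)$. Dividing by its integral over $t$ gives the conditional density of $T_{j_o}$ given $\{\Ehat=E_o,\,T_{j_o}^\perp=t^\perp\}$. In that ratio the factor $g(t^\perp)$, which carries the nuisance parameter $\nu_{j_o}$, appears in both numerator and denominator and cancels. This is exactly Lemma~\ref{lem:conditionaldensity}; what remains is to identify the normalizing constant as $\int\phi(u;\mu_{j_o},\sigma_{j_o}^2)\Lambda^{E_o}(u,t^\perp)\,du$.

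\textbf{Main obstacle: the normalizing integral is finite and positive.} This is the only step that needs care, and it is straightforward here. Since $\Lambda^{E_o}$ takes values in $(0,1)$, the integrand is bounded above by $\phi(u;\mu_{j_o},\sigma_{j_o}^2)$, so the integral is finite. The integrand is also strictly positive everywhere, so the integral is strictly positive. Measurability of $\Lambda^{E_o}$ holds because $\snoise\ge\varepsilon>0$ and $r$ is linear, which makes $\Lambda^{E_o}$ continuous.

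\textbf{Rigor of the conditioning.} Conditioning on $T_{j_o}^\perp=t^\perp$ is conditioning on a measure-zero event. I would handle this by computing $\Pp[T_{j_o}\in A,\,\Ehat=E_o,\,T_{j_o}^\perp\in B]$ via Fubini's theorem as
\[
\int_B\int_A \phi(t;\mu_{j_o},\sigma_{j_o}^2)\,g(t^\perp)\,\Lambda^{E_o}(t,t^\perp)\,dt\,dt^\perp,
\]
and then verifying that the proposed ratio is a version of the regular conditional density.
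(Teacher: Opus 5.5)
Your proposal is correct and follows essentially the same route as the paper: Bayes' rule combined with the independence of $T_{j_o}$ and $T_{j_o}^\perp$ from Lemma~\ref{lem:decomp}, and the identification of the selection probability with $\Lambda^{E_o}(t,t^\perp)$ via Lemmas~\ref{lem:reconstruction} and~\ref{lem:lambda_prob}. Your cancellation of $g(t^\perp)$ is the same step as the paper's use of $p(T_{j_o}=t\mid T_{j_o}^\perp=t^\perp)=\phi(t;\mu_{j_o},\sigma_{j_o}^2)$, and your checks on the normalizer and the Fubini-based regular conditional density fill in details the paper leaves implicit. One small caveat: your continuity claim for $\Lambda^{E_o}$ tacitly assumes the scores $s_E$ are continuous, but you only need measurability, which is already implicit in the mechanism being well defined.
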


\begin{proof}[Proof of Proposition~\ref{prop:cond_density}]
Applying Bayes’ rule, the conditional density in the claim is proportional to
\begin{align*}
  p(T_{j_o}=t \mid \Ehat=E_o, T_{j_o}^\perp=t^\perp) 
  &\propto p(T_{j_o}=t \mid T_{j_o}^\perp=t^\perp)\Pp[\Ehat=E_o \mid T_{j_o}=t, T_{j_o}^\perp=t^\perp]  \\
  &=  \phi(t;\, \mu_{j_o}, \sigma_{j_o}^2)  \Lambda^{E_o}(t, t^\perp),
\end{align*}
where $p(T_{j_o}=t \mid T_{j_o}^\perp=t^\perp)$, the density of $T_{j_o} \lvert T_{j_o}^{\perp}$, is simply the normal density of $T_{j_o}$ by Lemma~\ref{lem:decomp}, and the selection probability equals $\Lambda^{E_o}(t, t^\perp)$ by Lemma \ref{lem:lambda_prob}.
Dividing by the corresponding normalizing constant yields the claimed density.
\end{proof}

\subsection{Step 3: Constructing tests, interval and point estimates}
\label{sec:pivot}
%%---------------------------------------------------------------------------

\begin{definition}[Pivot]
\label{def:pivot}
Consider observing $\{\widehat{E}=E_o\}$. Let
\begin{align}
  \label{eq:pivot}
  \Pivot^{E_o}_{\mu_{j_o}}(T_{j_o};\, T_{j_o}^\perp,\, \sigma_{j_o}^2)
  =
  \dfrac{
    \int_{-\infty}^{T_{j_o}}
    \phi\nbracket{u;\, \mu_{j_o},\, \sigma_{j_o}^2}
    \Lambda^{E_o}\nbracket{u,\, T_{j_o}^\perp}
    \, du
  }{
    \int_{-\infty}^{\infty}
    \phi\nbracket{u;\, \mu_{j_o},\, \sigma_{j_o}^2}
    \Lambda^{E_o}\nbracket{u,\, T_{j_o}^\perp}
    \, du
  }.
\end{align}
\end{definition}

\begin{theorem}[Conditionally-valid pivot]
\label{thm:pivot}
Under~\eqref{eq:gaussian_model}, for any $j_o \in E_o$, we have
\begin{equation*}
  \label{eq:pivot_unif}
  \Pivot^{E_o}_{\mu_{j_o}}(T_{j_o};\, T_{j_o}^\perp,\, \sigma_{j_o}^2)
  \mid \cbracket{\Ehat = E_o}
  \;\sim\;
  \mathrm{Unif}(0,1).
\end{equation*}
\end{theorem}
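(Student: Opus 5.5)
The plan is to prove the stronger statement that the pivot is uniform conditional on $\{\Ehat = E_o,\ T_{j_o}^\perp = t^\perp\}$ for (almost) every $t^\perp$, and then integrate out $t^\perp$.

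\textbf{Step 1 (conditional density).} Fix $t^\perp \in \real^{p-1}$. By Proposition~\ref{prop:cond_density}, the conditional law of $T_{j_o}$ given $\{\Ehat = E_o, T_{j_o}^\perp = t^\perp\}$ has density
\[
f(t \mid t^\perp) \propto \phi(t;\mu_{j_o},\sigma_{j_o}^2)\,\Lambda^{E_o}(t,t^\perp).
\]
For this to be a genuine density, the normalizing integral must be finite and positive. Since $\Lambda^{E_o}$ takes values in $(0,1)$ by Definition~\ref{def:Lambda} and Lemma~\ref{lem:lambda_prob}, the integral lies in $(0,1]$. In particular $\Pp[\Ehat=E_o]>0$, so the conditioning is well defined.

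\textbf{Step 2 (continuity and strict monotonicity of the CDF).} The corresponding CDF is
\[
F(x;t^\perp) = \frac{\int_{-\infty}^{x}\phi\,\Lambda^{E_o}\,du}{\int_{-\infty}^{\infty}\phi\,\Lambda^{E_o}\,du},
\]
and $\Pivot^{E_o}_{\mu_{j_o}}(T_{j_o};t^\perp,\sigma_{j_o}^2) = F(T_{j_o};t^\perp)$. The integrand is strictly positive and integrable, so $F$ is continuous and strictly increasing in $x$. Only continuity is needed for the next step.

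\textbf{Step 3 (probability integral transform).} For a random variable $X$ with continuous CDF $F$, $F(X)\sim\mathrm{Unif}(0,1)$. Applying this to $X = T_{j_o}$ under the conditional law gives
\[
\Pp\rbracket{\Pivot \le a \mid \Ehat=E_o,\ T_{j_o}^\perp=t^\perp} = a \quad\text{for all } a\in[0,1].
\]

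\textbf{Step 4 (removing the conditioning on $T_{j_o}^\perp$).} By the tower property,
\[
\Pp[\Pivot\le a\mid \Ehat=E_o] = \Ee\rbracket{\,\Pp[\Pivot\le a\mid \Ehat=E_o, T_{j_o}^\perp]\;\middle|\;\Ehat=E_o} = a.
\]
This holds because the inner conditional probability equals the constant $a$ almost surely.

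\textbf{Main obstacle.} The delicate point is measure-theoretic rather than conceptual. The conditional density in Proposition~\ref{prop:cond_density} must hold for almost every $t^\perp$ with respect to the law of $T_{j_o}^\perp$ given $\Ehat=E_o$. I would justify this by writing the joint density of $(T_{j_o},T_{j_o}^\perp,\Ehat)$ explicitly as
\[
\phi(t)\,g(t^\perp)\,\Lambda^{E_o}(t,t^\perp),
\]
which uses the independence in Lemma~\ref{lem:decomp}. The stated conditional density and the regular conditional distribution then follow directly.

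Measurability of $\Lambda^{E_o}$ is immediate: $s_E$ and $\snoise$ are measurable (continuous in the common case $s_E(t)=\sum_{j\in E} t_j$), $\snoise\ge\varepsilon>0$, and $r$ is linear.
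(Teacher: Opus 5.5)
Your proposal is correct and follows the same route as the paper's proof. Both identify the pivot as the conditional CDF from Proposition~\ref{prop:cond_density}, apply the probability integral transform conditionally on $\{\Ehat = E_o, T_{j_o}^\perp\}$, and then integrate out $T_{j_o}^\perp$. Your remarks on the positive normalizer and on building the joint law from Lemma~\ref{lem:decomp} make explicit details the paper leaves implicit; if $\Gamma_{j_o}$ were singular, you would use the law of $T_{j_o}^\perp$ in place of a density $g$, and the argument is unchanged.
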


\begin{proof}[Proof of Theorem~\ref{thm:pivot}]
Observe that the pivot $\Pivot^{E_o}_{\mu_{j_o}}(T_{j_o}; T_{j_o}^\perp,\sigma_{j_o}^2)$ is the conditional CDF corresponding to the density in Proposition~\ref{prop:cond_density}, evaluated at $T_{j_o}$. 
Therefore, by the probability integral transform,
$\Pivot^{E_o}_{\mu_{j_o}}(T_{j_o};T_{j_o}^\perp,\sigma_{j_o}^2)\mid {\widehat{E}=E_o,T_{j_o}^\perp}$
is distributed as a $\mathrm{Unif}(0,1)$ random variable. 
The result follows by further marginalizing over $T_{j_o}^\perp$.
\end{proof}

As an immediate consequence of Theorem~\ref{thm:pivot}, the pivot can be used to construct tests and interval estimates that are valid conditional on the selection event $\{\Ehat = E_o\}$. 
We first define the relevant $p$-value and confidence interval, and then state their conditional guarantees in Corollary \ref{cor:conditional-validity}.

\begin{definition}[Conditionally-valid $p$-value and confidence interval]
\label{def:selective-inference}
Fix $\alpha \in (0,1)$.
\begin{enumerate}
  \item \emph{(Two-sided $p$-value)}. For $\mu_0 \in \real$, define
  \[
    p^{E_o}_{\mu_0}(T_{j_o}; T_{j_o}^\perp, \sigma_{j_o}^2)
    =
    2\min\cbracket{
      \Pivot^{E_o}_{\mu_0}(T_{j_o}; T_{j_o}^\perp, \sigma_{j_o}^2),\;
      1 - \Pivot^{E_o}_{\mu_0}(T_{j_o}; T_{j_o}^\perp, \sigma_{j_o}^2)
    }.
  \]
  To test $H_0 : \mu_{j_o} = \mu_0$ against $H_1 : \mu_{j_o} \neq \mu_0$,
  reject $H_0$ whenever
  $p^{E_o}_{\mu_0}\rbracket{T_{j_o}; T_{j_o}^\perp, \sigma_{j_o}^2} \leq \alpha$. 
  Tests for one-sided alternatives can be defined analogously.
  \item \emph{(Confidence interval)}. A $100\times (1-\alpha)\%$ 
  confidence interval for $\mu_{j_o}$ is obtained by inverting the pivot:
  \[
    C^{E_o}_{j_o}
    =
    \cbracket{
      \mu \in \real :
      \Pivot^{E_o}_{\mu}(T_{j_o}; T_{j_o}^\perp, \sigma_{j_o}^2)
      \in
      \rbracket{\tfrac{\alpha}{2},\; 1 - \tfrac{\alpha}{2}}
    }.
  \]
\end{enumerate}
\end{definition}

\begin{corollary}[Conditional validity]
\label{cor:conditional-validity}
Fix $\alpha \in (0,1)$. Under the assumptions of Theorem~\ref{thm:pivot}, the
inferential procedures in Definition~\ref{def:selective-inference} are valid
conditional on $\{\Ehat = E_o\}$:
\begin{enumerate}[label=(\roman*),topsep=2pt,itemsep=2pt,leftmargin=*]
  \item under $H_0 : \mu_{j_o} = \mu_0$, \;
  $\mathbb{P}\!\left[
p_{\mu_{j_o}}(T_{j_o};T_{j_o}^\perp,\sigma_{j_o}^2) \leq \alpha
\,\middle|\, \widehat{E}=E_o
\right]= \alpha$,
  \item $ \Pp\!\left[
      \mu_{j_o} \in C^{E_o}_{j_o}
      \mid \Ehat = E_o\right]
    = 1-\alpha$.
\end{enumerate}
\end{corollary}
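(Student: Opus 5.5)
The plan is to derive both parts directly from Theorem~\ref{thm:pivot}, which says that, conditional on $\{\Ehat=E_o\}$, the pivot evaluated at the true parameter is $\mathrm{Unif}(0,1)$.

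\textbf{Part (i).} Under $H_0:\mu_{j_o}=\mu_0$, write $P=\Pivot^{E_o}_{\mu_0}(T_{j_o};T_{j_o}^\perp,\sigma_{j_o}^2)$. By Theorem~\ref{thm:pivot}, $P\mid\{\Ehat=E_o\}\sim\mathrm{Unif}(0,1)$. The event $\{2\min(P,1-P)\le\alpha\}$ is the union of $\{P\le\alpha/2\}$ and $\{P\ge 1-\alpha/2\}$. These two events are disjoint because $\alpha<1$. Each has conditional probability $\alpha/2$, since a uniform law has no atoms. Summing gives exactly $\alpha$.

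\textbf{Part (ii).} By Definition~\ref{def:selective-inference}, the event $\{\mu_{j_o}\in C^{E_o}_{j_o}\}$ equals $\{\Pivot^{E_o}_{\mu_{j_o}}\in(\alpha/2,1-\alpha/2)\}$, with the pivot evaluated at the true $\mu_{j_o}$. The coverage event is therefore literally a statement about the pivot at the true parameter, and no monotonicity of the pivot in $\mu$ is needed. By Theorem~\ref{thm:pivot} its conditional probability is $1-\alpha$, again using that a uniform law puts no mass on the endpoints.

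\textbf{Main subtlety.} The main thing to check is that the pivot is a genuine, well-defined CDF value, so that Theorem~\ref{thm:pivot} applies with its exact uniformity and no atoms. This holds because $\Lambda^{E_o}\in(0,1)$ is strictly positive. Hence the normalizing integral is finite and positive, and the conditional law of $T_{j_o}$ is continuous. With that confirmed, both equalities follow immediately.
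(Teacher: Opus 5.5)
Your proof is correct and matches the paper's argument: both parts follow from the conditional uniformity of the pivot at the true parameter (Theorem~\ref{thm:pivot}), with the coverage event identified exactly as $\{U\in(\alpha/2,1-\alpha/2)\}$. The only cosmetic difference is in part (i): the paper notes that $2\min\{U,1-U\}$ is itself $\mathrm{Unif}(0,1)$, whereas you split the rejection region into two disjoint tails of mass $\alpha/2$ each, which is the same computation.
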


\noindent{\textbf{Point estimate}}. \quad From the univariate conditional density in Proposition~\ref{prop:cond_density}, we obtain the log-likelihood for $\mu_{j_o}$ given the observed values $(T_{j_o}, T_{j_o}^\perp)=(t,t^\perp)$, which equals
\begin{align}
  \label{eq:sel_loglik}
  \ell(\mu_{j_o}; t)
  = \log \phi\nbracket{t;\, \mu_{j_o},\, \sigma_{j_o}^2}
   + \log \Lambda^{E_o}\nbracket{t, t^\perp}
   - \log \mathcal{Z}(\mu_{j_o}, t^\perp),
\end{align} where  $\mathcal{Z}(\mu_{j_o}, t^\perp)$ denotes the  normalizing constant 
\begin{align}
\label{eq:norm_const}
  \mathcal{Z}^{E_o}(\mu_{j_o},t^\perp)
  =
  \int_{-\infty}^{\infty}
  \phi(u;\mu_{j_o},\sigma_{j_o}^2)
  \Lambda^{E_o}(u,t^\perp)
  \,du.
\end{align}
Although this normalizing constant lacks a closed form, it is a one-dimensional integral and can therefore be evaluated accurately by numerical quadrature.
Maximizing this log-likelihood yields the univariate maximum likelihood estimate (MLE), which we use as our point estimate for $\mu_{j_o}$.
Proposition~\ref{prop:smle} gives the estimating equation for this MLE and establishes that it has a unique solution.

\begin{proposition}[Univariate MLE]
\label{prop:smle}
For $\mu_{j_o}\in \real$, define the map
\begin{align*}
  e(\mu_{j_o})
  = \frac{
       \int_{\real} u\, \phi\nbracket{u;\, \mu_{j_o},\, \sigma_{j_o}^2}\,
       \Lambda^{E_o}\nbracket{u, t^\perp}\, du
     }{
       \mathcal{Z}(\mu_{j_o}, t^\perp)
     }.
\end{align*}
Then the selective MLE, 
$\widehat{\mu}_{j_o}^{E_o}
   = \argmax_{\mu_{j_o} \in \real} \ell(\mu_{j_o}; t),$ is the unique solution of $e(\mu_{j_o}) = t$.
\end{proposition}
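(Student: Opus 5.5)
The plan is to recognize the conditional density in Proposition~\ref{prop:cond_density} as a one-parameter exponential family in $\mu_{j_o}$ with sufficient statistic $t$, and then use the standard fact that the log-partition function is strictly convex.

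\textbf{Step 1: exponential-family form.} Write $\phi(u;\mu,\sigma^2)=(2\pi\sigma^2)^{-1/2}\exp\{-u^2/(2\sigma^2)\}\exp\{u\mu/\sigma^2-\mu^2/(2\sigma^2)\}$, abbreviating $\mu=\mu_{j_o}$ and $\sigma^2=\sigma_{j_o}^2$. Set $\theta=\mu/\sigma^2$ and define the base measure $h(u)=\exp\{-u^2/(2\sigma^2)\}\Lambda^{E_o}(u,t^\perp)$. This is a positive function, since $\Lambda^{E_o}\in(0,1)$, and it is integrable. Then $\mathcal Z(\mu,t^\perp)=(2\pi\sigma^2)^{-1/2}e^{-\mu^2/(2\sigma^2)}\int e^{\theta u}h(u)\,du$. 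Let $A(\theta)=\log\int e^{\theta u}h(u)\,du$. Since $\Lambda^{E_o}\le 1$, this integral is finite for every $\theta$, so the natural parameter space is all of $\real$. Up to terms free of $\mu$, the log-likelihood \eqref{eq:sel_loglik} becomes
\[
\ell=-\frac{(t-\mu)^2}{2\sigma^2}+\frac{\mu^2}{2\sigma^2}-A(\mu/\sigma^2)=\frac{t\mu}{\sigma^2}-A(\mu/\sigma^2)+\text{const}.
\]

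\textbf{Step 2: derivatives.} Because $h\le e^{-u^2/(2\sigma^2)}$, the integrands and their $\theta$-derivatives are dominated locally uniformly in $\theta$ by Gaussian-tailed functions. This justifies differentiating under the integral sign. We get $A'(\theta)=\int u\,e^{\theta u}h(u)\,du/\int e^{\theta u}h(u)\,du$, which equals $e(\mu)$, since the extra Gaussian factors cancel between numerator and denominator. We also get $A''(\theta)=\Var_\theta(U)>0$, where $U$ has density proportional to $e^{\theta u}h(u)$. This variance is strictly positive because $h>0$ on all of $\real$, so the tilted law is not degenerate. Therefore $\partial\ell/\partial\mu=(t-e(\mu))/\sigma^2$ and $\partial^2\ell/\partial\mu^2=-A''(\mu/\sigma^2)/\sigma^4<0$. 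It follows that $\ell$ is strictly concave, that $e$ is strictly increasing, and that any root of $e(\mu)=t$ is the unique maximizer.

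\textbf{Step 3: existence of a root.} This is the main obstacle. I will show that $e(\mu)\to\pm\infty$ as $\mu\to\pm\infty$; then continuity and monotonicity of $e$ give a root for every $t\in\real$. The approach is to bound $h(u)$ from below on a region. For any $M$, positivity and continuity of $\Lambda^{E_o}(\cdot,t^\perp)$ give $\inf_{u\in[M,M+1]}h(u)=c_M>0$. Continuity holds because $s_E$ and $\snoise\ge\varepsilon>0$ are continuous, assuming the scores $s_j$ are continuous. Hence $P_\theta(U\ge M)\ge c_M e^{\theta M}\big/\int_{-\infty}^{\infty}e^{\theta u}h(u)\,du$. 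I would then split the denominator over $u<M/2$ and $u\ge M/2$ and use $h\le e^{-u^2/(2\sigma^2)}$ to show $P_\theta(U<M/2)\to 0$ as $\theta\to\infty$. From this, $\liminf E_\theta U\ge M/2$ for every $M$, so $e(\mu)\to\infty$, and the case $\theta\to-\infty$ is symmetric.

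An alternative for Step 3 uses the general theory directly. Since $h>0$ everywhere, the convex support of the base measure is all of $\real$, and for a regular (steep) full exponential family the mean map $A'$ is onto the interior of that convex support. Either route gives $e(\real)=\real$, which completes the proof.
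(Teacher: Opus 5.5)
Your proposal is correct and follows the paper's proof. Both derive the score equation $e(\mu_{j_o})=t$ by differentiating $\log\mathcal Z$ under the integral, get strict monotonicity of $e$ from $e'(\mu_{j_o})=\Var_\theta(U)/\sigma_{j_o}^2>0$ (your $A''>0$ is the same fact in exponential-family notation), and get existence from $e(\mu_{j_o})\to\pm\infty$ together with the intermediate value theorem. Your Step~3 makes rigorous the tail limit that the paper only asserts; one small addition is needed when passing to $\liminf_{\theta\to\infty}\Ee_\theta U\ge M/2$, namely the bound $\Ee_\theta[U^-]\le\sigma_{j_o}^2/(c_M e^{\theta M})\to 0$, which follows from the same lower bound on the denominator.
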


\section{Efficient implementation}
\label{sec:efficient}
We now outline an efficient implementation of the randomized selection rule and the subsequent inference; full details are deferred to Appendix~\ref{app:effcient}.

The main computational bottleneck is the normalizing constant
\[
  Z_{p,k}(t)
  =
  \sum_{E\in\mathcal E_k}
  \exp\left\{\frac{s_E(t)}{\tau(t)}\right\},
\]
in the exponential mechanism, where $\tau(t)$ is the data-adaptive temperature defined in \eqref{eq:expmech}.
Evaluating $Z_{p,k}(t)$ directly requires summing over all $|\Ek|=\binom{p}{k}$ candidate subsets.
The additive score $s_E(t)=\sum_{j\in E}t_j$, however, allows us write it as an elementary symmetric polynomial,
\[
  Z_{p,k}(t)
  =
  \sum_{E\in\mathcal E_k}
  \prod_{j\in E}\exp\eta_j(t),
  \qquad \text{where }\ \eta_j(t)=\frac{t_j}{\tau(t)},\quad \forall j\in [p],
\]
which enables exact evaluation in $O(pk)$ time by dynamic programming, see \cite{chen1997statistical}; Algorithm~\ref{alg:topk_log_partition_dp} implements it on the log scale for numerical stability.
Related work has leveraged various forms of score structure to implement joint exponential mechanisms over exponentially large output spaces. 
In particular, \citet{gillenwater2021differentially} use a forward--backward dynamic program for a chain-factorized quantile score, while \citet{gillenwater2022topk} group candidate outputs by utility value and use multiplicities to sample efficiently for private top-$k$ selection. 
By contrast, our recursion exploits the additive structure of the score criterion under a fixed-cardinality constraint, reducing the computation to an elementary symmetric polynomial that can be evaluated in $O(pk)$ time.

This routine serves two purposes.
First, it enables sampling of the selected set: Algorithm~\ref{alg:exact_randomized_topk_sampler} evaluates Algorithm~\ref{alg:topk_log_partition_dp} at the observed statistic and returns an exact draw $E_o$ from the exponential mechanism by backward sampling, in $O(pk)$ time (Proposition~\ref{prop:seq_sampling_correct}), following weighted sampling without replacement.
Second, after observing $\widehat E=E_o$, it supplies the selection weights~\eqref{eq:Lambda} that enter the pivot in Definition~\ref{def:pivot}.

For a target $j_o\in E_o$ and conditioning value $t_{j_o}^\perp$, Algorithm~\ref{alg:selective_pivot_computation} approximates the pivot in Definition~\ref{def:pivot} by one-dimensional weighted quadrature in the target statistic $T_{j_o}$, over nodes $u_1,\ldots,u_G$, with selection weights
\[
  \Lambda_g^{E_o}
  =
  \Lambda^{E_o}(u_g,t_{j_o}^\perp)
  =
  \frac{\prod_{j\in E_o}\exp\{\eta_j(t_g)\}}{Z_{p,k}(t_g)},
  \qquad t_g = r(u_g,t_{j_o}^\perp),\quad g=1,\ldots,G.
\]
Algorithm~\ref{alg:selective_weights_grid} precomputes these weights in $O(Gpk)$ time through repeated calls to Algorithm~\ref{alg:topk_log_partition_dp}.
Since they do not depend on $\mu_{j_o}$, the same weights are reused to invert the pivot for confidence intervals and to maximize the selective likelihood.

\section{Asymptotic Theory}
\label{sec:asymptotic}

This section extends our method to asymptotically-linear selection statistics, enabling flexible conditional inference for a broad class of nonparametric problems.
For the asymptotic theory, we make the dependence on sample size explicit, writing $T_n=(T_{n,1},\ldots,T_{n,p})^\top$ for the selection statistics and $\mu_n=(\mu_{n,1},\ldots,\mu_{n,p})^\top$ for the population parameter vector, for which $T_n$ is consistent in the absence of selection. 
Let $\Ehat_n$ denote the randomized selected set obtained by applying the data-adaptive exponential mechanism in \eqref{eq:expmech} to $T_n$.

The main result, Theorem \ref{thm:asymp-main}, shows that the pivot used in Section~\ref{sec:method} to derive conditionally valid tests and confidence intervals remains asymptotically valid conditional on $\cbracket{\Ehat_n=E_o}$.
Before stating the result, we impose two conditions: an asymptotic linear representation (ALR) for the selection statistics and a smoothness condition on the score function, given in Assumptions \ref{ass:linearizable} and \ref{ass:score-deriv}, respectively.

\begin{assumption}[ALR for selection statistics]
\label{ass:linearizable}
We assume that the statistic $T_n\in\real^p$ admits the asymptotic linear representation
\begin{align}
\label{eq:Talr}
  \Sigma^{-1/2}(T_n-\mu_n)
  =
  \frac{1}{\sqrt n}\sum_{i=1}^n a_{i,n}Y_{i,n}+R_n,
\end{align}
where (i) $a_{i,n}\in\real^{p\times d}$ and the remainder $R_n=o_p(1)$ satisfies 
\begin{align*}
  \frac{1}{n}\sum_{i=1}^n a_{i,n}a_{i,n}^{\top}=I_p,
  \qquad
  \sup_{n}\max_{1\le i\le n}\|a_{i,n}\|_{\op}<\infty,
  \qquad
  \sup_n\,\Ee\!\left[\exp\{4C_R\|R_n\|\}\right]<\infty,
\end{align*}
with $C_R:=c_f^{(1)}\max\!\big\{\sigma_{j_o},\,
\|\Gamma_{j_o}^{1/2}\|_{\op}\}$, where
$c_f^{(1)}$ is a constant depending on the smoothness of the score in the selection rule and defined explicitly in Proposition~\ref{prop:f_bounded_deriv}, 
and (ii) $\{Y_{i,n}:1\le i\le n\}$ is a triangular array of independent, standardized and uniformly sub-Gaussian random vectors in $\real^d$,  i.e., satisfying, for some $K_0<\infty$,
\begin{align*}
  \Ee[Y_{i,n}]=0_d,
  \quad
  \Var(Y_{i,n})=I_d,
  \quad
  \sup_{n}\max_{1\le i\le n}
  \Ee[\exp\{s^\top Y_{i,n}\}]
  \le
  \exp\{K_0^2\|s\|^2/2\},
  \qquad
  \forall s\in\real^d.
\end{align*}
\end{assumption}

\begin{remark}
The identity $n^{-1}\sum_{i=1}^n a_{i,n}a_{i,n}^\top=I_p$ can be relaxed to hold only in the limit; i.e., it suffices that $V_n = n^{-1}\sum_{i=1}^n a_{i,n}a_{i,n}^\top \to I_p$.
In this case, one can re-normalize the linear representation so that the identity holds exactly, while absorbing the resulting discrepancy into the remainder term in the ALR.
\end{remark}

\begin{remark}
    We verify the conditions in Assumption~\ref{ass:linearizable} for all examples considered later; see Section~\ref{sec:examples}.
\end{remark}
%\begin{remark}
%\label{rem:alr-sufficient}
%Assumption~\ref{ass:linearizable} holds whenever
%$\|R_n\|\le n^{-1/2}\Xi_n$ with $\Xi_n$ uniformly sub-Gaussian; then
%$\Ee\|R_n\|^4=O(n^{-2})$ and the exponential moment is bounded uniformly
%in $n$. This is satisfied in standard examples, including the
%Bradley--Terry--Davidson model of Section~\ref{sec:examples}, where the
%remainder inherits sub-Gaussian tails from the uniform boundedness of
%the per-match scores.
%\end{remark}

\begin{assumption}[Bounded score derivatives]
\label{ass:score-deriv}
For $l=1,2,3$, the subset score functions $s_E:\real^p\to\real$ in the selection rule satisfy: $c_s^{(l)}
  =\max_{E\in\mathcal E_k}\sup_{t\in\real^p}
   \|\nabla^{(l)}s_E(t)\|_{\op}<\infty$.
\end{assumption}

\begin{remark}
Assumption~\ref{ass:score-deriv} requires only that derivatives up to third order be globally bounded; notably the score $s_E$ itself need not be bounded. This condition holds for scores of the form $s_E(T)=\sum_{j\in E} T_j$ and, more generally, for additive scores $s_E(T)=\sum_{j\in E}s_j(T)$ where each $s_j$ has bounded derivatives.
\end{remark}

\begin{theorem}[Asymptotic conditional validity]
\label{thm:asymp-main}
Under Assumptions~\ref{ass:linearizable} and~\ref{ass:score-deriv}, for any $j_o\in E_o$, we have
\begin{align*}
  \Pivot^{E_o}_{\mu_{n,j_{o}}}\!\big(T_{n, j_{o}};\,T^\perp_{n, j_{o}},\,
    \sigma_{j_o}^2\big)
  \;\big|\;\left\{\widehat E_n=E_o\right\}
  \;\xrightarrow{d}\;
  \mathrm{Unif}(0,1),
  \qquad \text{as } \ n\to\infty.
\end{align*}
\end{theorem}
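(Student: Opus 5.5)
The plan is to reduce the conditional statement to convergence of unconditional expectations of smooth functionals, and then to use the randomization to remove the division by the selection probability. Write $Z_n=\Sigma^{-1/2}(T_n-\mu_n)$ and let $Z\sim\mathcal N_p(0,I_p)$. For a bounded Lipschitz test function $g$ on $[0,1]$, the conditional expectation we need is
\[
\Ee\big[g(\Pivot_n)\mid \widehat E_n=E_o\big]=\frac{\Ee\big[g(\Pivot_n)\,\Lambda^{E_o}(T_{n,j_o},T^\perp_{n,j_o})\big]}{\Ee\big[\Lambda^{E_o}(T_{n,j_o},T^\perp_{n,j_o})\big]},
\]
which follows from Lemma~\ref{lem:lambda_prob} and the fact that the randomization is independent given $T_n$.

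Both numerator and denominator are expectations of a function $H(T_n)=h(\mu_n+\Sigma^{1/2}Z_n)$. Since the pivot involves only $\mu_{n,j_o}$, both are in fact functions of $Z_n$ alone once we recenter. The target is to show that each difference $\Ee[H(T_n)]-\Ee[H(\mu_n+\Sigma^{1/2}Z)]$ tends to $0$, uniformly in $\mu_n$.

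The Gaussian limit then finishes the argument. The Gaussian analogue of the ratio equals $\Ee[g(U)]$ with $U\sim\mathrm{Unif}(0,1)$, by Theorem~\ref{thm:pivot}. For the ratio itself to converge we also need the denominator, the Gaussian selection probability, to stay bounded away from $0$, and I will try to avoid needing that.

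The analytic core uses the structure of $\Lambda^{E_o}$. After centering, it is a softmax of $(s_E-\bar s)/(\tau\snoise)$, and $\snoise\ge\varepsilon$. Under Assumption~\ref{ass:score-deriv}, $\log\Lambda^{E_o}$ and its first three derivatives in $t$ are therefore bounded by constants (this is the role of Proposition~\ref{prop:f_bounded_deriv}, with constant $c_f^{(1)}$). Consequently $\Lambda^{E_o}(t')\le \Lambda^{E_o}(t)\exp\{c_f^{(1)}\|t'-t\|\}$ on ratios.

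The pivot $\Pivot_{\mu}(t_{j_o};t^\perp)$ is a smooth function of $(t_{j_o},t^\perp)$. Its derivatives in $t^\perp$ pass through $\Lambda$ via $\nabla\log\Lambda$, and its derivative in $t_{j_o}$ is the conditional density. So $g\circ\Pivot\cdot\Lambda$ has controlled derivatives up to order three in $Z$, with bounds of the form (polynomial)$\times e^{c\|z\|}$ at worst.

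I will then run a Lindeberg swapping argument. Write $Z_n=n^{-1/2}\sum_i a_{i,n}Y_{i,n}+R_n$. Replace the summands one at a time by $n^{-1/2}a_{i,n}G_i$ with $G_i$ independent standard Gaussians. A third-order Taylor expansion, together with matched first and second moments (using $\Var(Y_{i,n})=I_d$, $n^{-1}\sum a_{i,n}a_{i,n}^\top=I_p$ and bounded $\|a_{i,n}\|_{\op}$), gives a total error of order $n\cdot n^{-3/2}=n^{-1/2}$. The exponential weights in the derivative bounds are integrable thanks to uniform sub-Gaussianity. The remainder $R_n$ is handled through the Lipschitz-in-log property. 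The quantity $|H(\cdot+\Sigma^{1/2}R_n)-H(\cdot)|$ is bounded by $\|R_n\|e^{C_R\|R_n\|}$ times a function of the linear part. By Cauchy--Schwarz and the assumed bound on $\Ee e^{4C_R\|R_n\|}$, together with $R_n=o_p(1)$ and uniform integrability, this term vanishes.

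The main obstacle is the denominator. The selection probability $\Ee[\Lambda^{E_o}(T_n)]$ may tend to $0$ when $\mu_n$ drifts (e.g. $\sqrt n$-separated effects), so additive errors of order $o(1)$ are not enough. What I would try first is to prove \emph{relative} error bounds. The bounded derivatives of $\log\Lambda$ mean that $\Lambda$ changes by at most a factor $e^{c\|\delta\|}$ under a perturbation $\delta$. I would use this to bound each Lindeberg increment by $n^{-3/2}$ times $\Ee[\Lambda\cdot e^{c\|\text{local terms}\|}\cdots]$, so every error is a multiple of the denominator itself. This is exactly why $\Lambda$ is bounded below by $\exp\{-c/\tau\}$ relative to its neighbours and why the $4C_R$ exponential moment appears: it lets Hölder's inequality return a factor comparable to the selection probability. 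The result would be ratio convergence uniformly over parameter sequences, and hence the claimed $\mathrm{Unif}(0,1)$ limit by the portmanteau theorem.
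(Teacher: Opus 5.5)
Your proposal is correct and follows essentially the same route as the paper. Both use the ratio representation of the conditional expectation, a Lindeberg swap with third-order Taylor remainders, H\"older's inequality plus the $4C_R$ exponential moment to handle $R_n$, and relative error bounds; the paper implements the last of these by bounding every numerator error by a multiple of $\exp\{f(\widetilde\mu_n)\}$ and the denominator from below by $c_0\exp\{f(\widetilde\mu_n)\}$ via Jensen.

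Two small remarks. First, the Taylor step needs test functions in $\mathcal C^3_b$, as the paper uses, rather than merely bounded Lipschitz ones. Second, the vanishing denominator you worry about cannot occur here: $|s_E-\bar s|\le\sqrt{|\Ek|}\,\snoise$ gives $\Lambda^{E_o}\ge|\Ek|^{-1}\exp\{-2\sqrt{|\Ek|}/\tau\}$ uniformly in $t$, which is just your own claim that $\log\Lambda^{E_o}$ is bounded.
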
 

As a consequence of Theorem~\ref{thm:asymp-main}, the tests and interval estimates derived from our pivot are asymptotically valid.
The proof relies on a Lindeberg-type argument applied to expectations reweighted by the likelihood ratio between the conditional (or, post-selection) and pre-selection distributions of the selection statistics. 
Details and supporting results for developing the proof are deferred to Appendix~\ref{app:asymp}.

\section{Applications}
\label{sec:examples}

We illustrate our proposed inferential method through three applications: A/B/n testing with binary outcomes (Section~\ref{sec:dosage}), ranking via Bradley--Terry--Davidson models (Section~\ref{sec:BT}), and nonparametric feature importance (Section~\ref{sec:VI}). All three are non-Gaussian, so the exact pivot of Section~\ref{sec:pivot} no longer applies and the validity rests on the asymptotic guarantee of Theorem~\ref{thm:asymp-main}. For each, we specify the data-generating model, the target parameters, and the asymptotically linear selection statistic, and verify Assumption~\ref{ass:linearizable}.

% We illustrate our proposed inferential method through three examples: (i) ranking via Bradley--Terry--Davidson models (Section~\ref{sec:BT}), (ii) nonparametric feature importance (Section~\ref{sec:VI}), and A/B/n testing with binary outcomes (Section~\ref{sec:dosage}). 

\subsection{A/B/n testing with binary outcomes}
\label{sec:dosage}

Consider a treatment with $p$ levels and $n$ independent subjects in each level\footnote{Extensions to unequal group sizes with group proportions converging to fixed limits are straightforward.}. 
For a treatment level $j$, we denote the observations in level $j$ as $O_{ij}\sim \mathrm{Bernoulli}(\pi_j)$, where $O_{ij}=1$ denotes a favorable outcome, such as recovery.
Two parameters of interest for each level $j$ are the success probability $\pi_j$ and the corresponding log-odds $\log(\pi_j/(1-\pi_j))$.

For success probabilities, let $\mu_n=\sqrt {n} \pi$ where $\pi=c(\pi_1, \ldots, \pi_p)^\top$, $\Sigma=\operatorname{Diag}(\pi_j(1-\pi_j))_{j=1}^p$, and $T_{n}=\sqrt{n}\widehat\pi_n$ where $\widehat\pi_{n,j}=n^{-1}\sum_{i=1}^n O_{ij}$. 
For $i\in[n]$, denote $Y_{i,n}=\Sigma^{-1/2}(O_{i1}-\pi_1,\ldots,O_{ip}-\pi_p)^\top\in\mathbb R^d$, where $d=p$. 
In Proposition~\ref{prop:ALR.ABn.prob}, we verify that Assumption~\ref{ass:linearizable} holds in this setting.
For log-odds, for some $\varepsilon' > 0$, let
$T_{n,\varepsilon'}=\sqrt n\log\{\widehat\pi_{n,\varepsilon'}/(1-\widehat\pi_{n,\varepsilon'})\}$ where $\widehat\pi_{n,\varepsilon'} = \min \{1-\varepsilon', \max\{\varepsilon', \widehat\pi_{n}\}\}$,
$\mu_n=\sqrt n\log\{\pi/(1-\pi)\}$, and
$\Sigma=\operatorname{Diag}\left(1/\{\pi_j(1-\pi_j)\}\right)_{j=1}^p$. For $i\in[n]$, denote
$Y_{i,n}=\Sigma^{-1/2}\left(\frac{O_{i1}-\pi_1}{\pi_1(1-\pi_1)},\ldots,\frac{O_{ip}-\pi_p}{\pi_p(1-\pi_p)}\right)^\top\in\mathbb R^d$, where again $d=p$.
In Proposition~\ref{prop:ALR.ABn.odds}, we verify that Assumption~\ref{thm:asymp-main} holds in this setting, provided that $\pi_j \in (\varepsilon, 1-\varepsilon)$ for some $\varepsilon > 0$.

\subsection{Ranking via Bradley--Terry--Davidson models}
\label{sec:BT}

The Bradley--Terry (BT) model \citep{bradley1952rank} is commonly used for analyzing paired-comparison data to rank abilities, and we work with its extension, the Bradley--Terry--Davidson (BTD) model \citep{davidson1977extending}, which further accommodates draws.
Explicitly, suppose there are $p$ players competing in a round-robin tournament.
Let 
$\theta = (\theta_1,\ldots,\theta_p) \in \mathbb{R}^p$, 
$\sum_{j=1}^p \theta_j = 0$ be the ability parameters, and 
$\eta \in \mathbb{R}$ be the draw-propensity parameter.
Index the $n$ matches by
$i \in [n]\equiv (i_1,i_2) \in [p] \times [p]$, where match $i$ takes place between players
$i_1 < i_2$.
The BTD model assumes the outcome corresponding to a win for player
$i_1$, a win for player $i_2$, or a draw,
$(O_i^{(1)}, O_i^{(2)}, O_i^{(3)})$, follows
$\mathrm{Multinomial}(1;\, p_i^{(1)}, p_i^{(2)}, p_i^{(3)})$, with
$p_i^{(1)} \propto \exp\{\theta_{i_1}\}$, $p_i^{(2)} \propto \exp\{\theta_{i_2}\}$, and
$p_i^{(3)} \propto \exp\{\eta + (\theta_{i_1}+\theta_{i_2})/2\}$.
Let $\widehat{\theta}_n$ and $\widehat{\eta}_n$ be the maximum likelihood estimators subject to 
$\sum_j \theta_j = 0$, 
\begin{align}
  \label{eq:BTD_loglik}
  \argmax_{\substack{(\theta,\eta)}}
  \sum_{i=1}^n
  \Bigl\{
    O_i^{(1)}\theta_{i_1}
    + O_i^{(2)}\theta_{i_2}
    + O_i^{(3)}\!\left(\eta+\dfrac{\theta_{i_1}+\theta_{i_2}}{2}\right)
    - \log S_i(\theta,\eta)
  \Bigr\}, 
\end{align}
where $S_i(\theta,\eta) = \exp\{\theta_{i_1}\}+\exp\{\theta_{i_2}\}
+2\,\exp\{\eta+(\theta_{i_1}+\theta_{i_2})/2\}$.
% \ell_n(\mu,\eta), \quad\ell_n(\mu,\eta)
%   \;=\;
We define the selection statistic by $T_n=\sqrt{n}\widehat{\theta}_n$ and let $\mu_n=\sqrt{n}\theta$. 
For $1\leq j<l\leq p$, let $n_{jl}$ denote the number of matches played between players $j$ and $l$, and thus $n=\sum_{1\leq j<l\leq p}n_{jl}$.
Let $\Sigma=(I_{\theta\theta}-I_{\theta\eta}I_{\eta\eta}^{-1}I_{\eta\theta})^{-1}$, the inverse of the Schur complement of $I_{\eta\eta}$ in the limiting Fisher information matrix.
In Proposition~\ref{prop:BTD_normal}, we verify that Assumption~\ref{thm:asymp-main} holds in this setting under mild regularity conditions.

\subsection{Non-parametric feature importance}
\label{sec:VI}

Let $(X_i, O_i) \sim \Pp$, $i \in [n]$ be i.i.d. observations, where \(X_i = (X_{i1}, \ldots, X_{ip}) \in \mathbb{R}^p\) denotes the $p$ features, \(O_i \in \mathbb{R}\) denotes the outcome,  \(\Pp\) denotes an unknown distribution.
Following \citet{williamson2023general}, the importance of the $j$-th feature is measured by
\begin{align*}
    \psi_j
    = \frac{\mathbb{E}\left[\left(O_{i} - \mathbb{E}[O_i \mid X_{i,-j}]\right)^2\right]}{\Var(O_i)} - \frac{\mathbb{E}\left[\left(O_i - \mathbb{E}[O_i \mid X_i]\right)^2\right]}{\Var(O_i)},
\end{align*}
where $X_{i,-j}$ denotes $X_i$ excluding $X_{ij}$.
A larger value of $\psi_j$ indicates that a larger proportion of the variation in $O_i$ cannot be explained without $X_{ij}$, indicating that $X_{ij}$ is more important.

For $j\in[p]$, define
\begin{align*}
\psi_{ij}
= \frac{
\left({m}(X_i)-{m}_{-j}(X_{i})\right)^2
+
2(O_i-{m}(X_i))({m}(X_i)-{m}_{-j}(X_{i}))
}{\Var(O_i)
} - \psi_j\frac{(O_i-\Ee[O_i])^2}{\Var(O_i)},
\end{align*}
where $m(x)=\mathbb{E}[O \mid X=x]$,
$m_{-j}(x_{-j})=\mathbb{E}[O \mid X_{-j}=x_{-j}]$.
Let $\Sigma
=\operatorname{Diag}(\Var(\psi_{ij}))_{j=1}^p$, and define
$Y_{i,n} = \Sigma^{-1/2} (\psi_{i1} - \psi_1, \ldots, \psi_{ip} - \psi_p)^\top$.
We estimate $\psi_j$ by
\begin{align*}
\widehat{\psi}_{n,j}
=
\frac{1}{n}\sum_{i=1}^n \widehat{\psi}_{ij},\quad
\widehat{\psi}_{ij} =
\frac{
\left(\widehat{m}(X_i)-\widehat{m}_{-j}(X_{i})\right)^2
+
2(O_i-\widehat{m}(X_i))(\widehat{m}(X_i)-\widehat{m}_{-j}(X_{i}))
}{
(1/n)\sum_{i=1}^n (O_i-\bar{O}_n)^2
},
\end{align*}
where $\widehat{m}(x)$ and $\widehat{m}_{-j}(x_{-j})$ denote estimates of $m(x)$ and $m_{-j}(x_{-j})$ obtained using flexible nonparametric methods on an independent hold-out dataset. 
Here the term
% $\psi_{j}(O_i-\bar O_n)^2/(n^{-1}\sum_{i=1}^n (O_i-\bar O_n)^2)$ in $\widehat \psi_{ij}$ cancels when averaging over $i\in[n]$.
If no hold-out dataset is available, we use cross-fitting as described in Algorithm~\ref{alg:feature.importance}.

We define the selection statistic by
$T_n = \sqrt{n}\widehat{\psi}_n$, and let
$\mu_n = \sqrt{n} \psi$.
We use $T_{n,\varepsilon'}$ to denote the counterpart of $T_n$ where $\Var(O_i)$ is estimated by $\max\{\varepsilon', (1/n)\sum_{i=1}^n (O_i-\bar{O}_n)^2\}$.
%We set $a_{i,m}=I_p$.
In Proposition~\ref{prop:VIM_normal}, we verify that Assumption~\ref{thm:asymp-main} holds in this setting when $\Var(O_i)$ is bounded away from zero and the nuisance estimates are obtained from an independent hold-out sample. 
The result can be extended to the cross-fitted estimator following \cite{chernozhukov2018double}.
%The following proposition shows the asymptotic linear representation of $T_n$ when the nuisance estimates are obtained from an independent hold-out sample, and the result can be extended to the cross-fitted estimator following \cite{chernozhukov2018double}.

% \begin{figure}[t]
%   \centering
%   \includegraphics[width=\textwidth]{Figures/sim_gaussian.png}
%   \vspace{-0.9cm}
%   \caption{Gaussian design ($p=20$, $k=3$). Left: selection quality
%     (standardized regret; Standard is zero by construction, smaller is better).
%     Center: marginal coverage, with the nominal level $0.95$ marked. Right:
%     average interval length. Points are Monte Carlo means with $\pm1$ standard
%     error, shown for the Low- and High-signal regimes.}
%   \label{fig:sim-gaussian}
% \end{figure}

\section{Simulation experiments}
\label{sec:simulations}

We evaluate our method (``Randomized PSI'') against three baselines across four designs: an exact Gaussian-model design and three other designs corresponding to the applications in Section~\ref{sec:examples}—binomial dosage, Bradley--Terry--Davidson comparisons, and nonparametric feature importance using a spline-GAM learner.
Within each design we consider two regimes, defined by the standardized gap $\Delta_{\mathrm{std}}$ between the $k$-th and $(k{+}1)$-th largest population scores: \emph{Weak separation} ($\Delta_{\mathrm{std}}=0.3$) and \emph{Strong separation} ($\Delta_{\mathrm{std}}=2.0$), corresponding to less and more reliable selection, respectively.

\paragraph{Baselines and evaluation summaries.}
Across all designs, we select the top $k=3$ winners under the additive score $s_E(T)=\sum_{j\in E}T_j$, and all interval estimates target the nominal level $1-\alpha=0.95$. For our method, the temperature is set without tuning by taking the regret budget $q=1$ in Corollary~\ref{cor: temp par}, giving $\tau=(\log|\Ek|)^{-1}$.

Note that ``Standard'' denotes the top-$k$ rule in~\eqref{eqn: standard selection}. The two nonrandomized inferential baselines, the conditionally valid  ``Polyhedral PSI'' method of \citet{reid2017post} and the marginally valid ``Zoom Correction'' method of \citet{zrnic2025flexible}, perform inference after selection by this rule. ``Data Splitting'' follows the common rule of thumb of using one half of the data for selecting the winners and the other half for inference.

We report three Monte Carlo summaries: (i) selection quality, measured by regret relative to the full-data top-$k$ rule, $s^*(T)-s_{\Ehat}(T)$,
standardized by the score dispersion $\snoise(T)$; as defined in \eqref{eq:regret-bound}, (ii) marginal coverage of the selected winners, as defined in \eqref{eq:marg_cov_avg}, and (iii) average interval length of intervals for selected winners. 

Note that ``Standard'' has zero regret by construction, and smaller values indicate closer agreement with the selection based on the full sample.
We report only marginal coverage in this section because, as Figure~\ref{fig:example1} already illustrates, methods that guarantee only marginal validity may fail to attain the nominal conditional coverage. 
Results for the Gaussian design are summarized in Figure~\ref{fig:example2}, while results for the Binomial dosage model, the Bradley--Terry--Davidson paired-comparison model, and the nonparametric feature-importance model are reported in Figures~\ref{fig:example3}, \ref{fig:example4}, and~\ref{fig:example5}, respectively.

\paragraph{Results and takeaways.}
The key findings from Figures~\ref{fig:example2}, \ref{fig:example3}, \ref{fig:example4} and \ref{fig:example5}, which are consistent across all four designs, are as follows:
\begin{enumerate}[leftmargin=*]
\item \textbf{Selection quality.} ``Randomized PSI'' closely matches the standard top-$k$ rule in terms of the empirical score attained by the selected subset, with low regret across all designs and signal regimes. ``Data Splitting'' performs substantially worse and exhibits much higher regret; allocating more data to selection can reduce this regret, but this would leave less data for inference, and therefore produce intervals wider than those shown in the third panels and further reduce inferential power relative to ``Randomized PSI''.
\item \textbf{Marginal validity of inference.} All baselines and ``Randomized PSI'' attain the nominal coverage level of $0.95$ across all designs, as expected. Zoom Correction, however, is somewhat conservative and exceeds the nominal coverage level in most experiments.
\item \textbf{Inferential power.} ``Randomized PSI'' produces the shortest valid intervals across all designs. Its intervals are consistently and substantially shorter than those from the widely used polyhedral approach, offering a clear improvement over this conditional baseline. Note that the intervals produced by ``Polyhedral PSI'' are so wide that including them in the plots obscures comparisons with the other methods; instead, we report in the figure captions the improvement in power achieved by our approach relative to this conditional baseline.
Our interval estimates are also comparable in length to, and often shorter than, those from ``Data Splitting''. As noted earlier, however, ``Data Splitting'' achieves this only at the cost of substantially lower selection quality.
Finally, ``Randomized PSI'' also produces intervals no longer than those from ``Zoom Correction'', despite providing the stronger conditional guarantee. As emphasized earlier, this guarantee yields honest interval estimates for the selected winners actually observed, which marginally valid intervals do not provide.
In particular, our interval estimates become narrower as the separation gap increases, showcasing the ability of conditional intervals to adapt to the strength of the signal in the data.
\end{enumerate}

\begin{figure}[H]
  \centering
  \includegraphics[width=\textwidth]{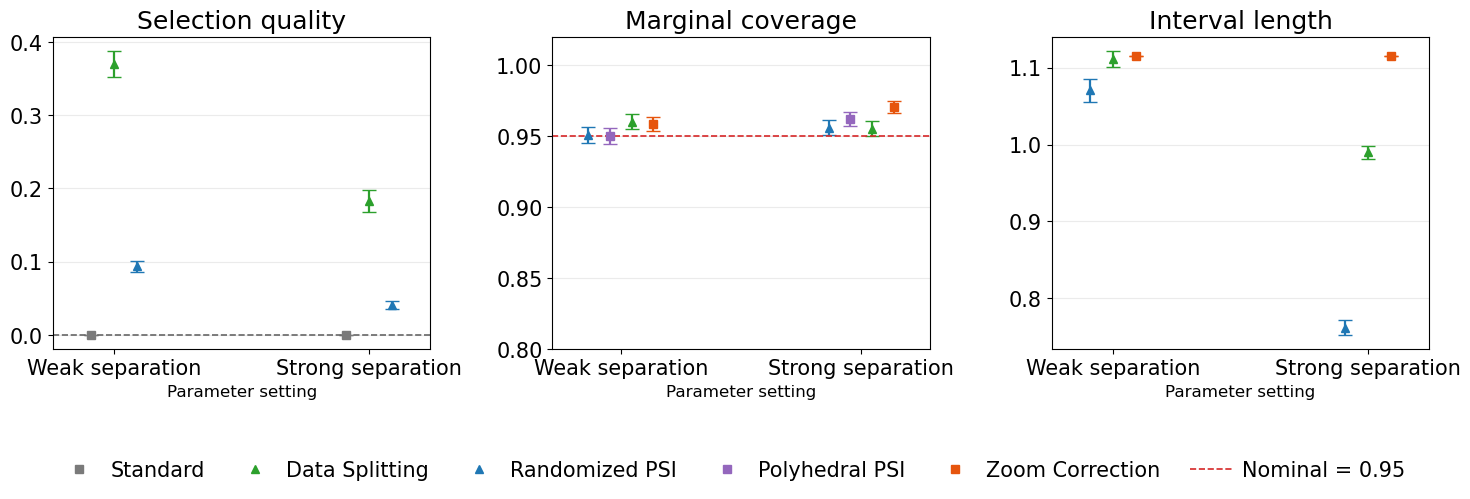}
  \vspace{-1cm}
  \caption{Comparisons on Gaussian data. The left, middle, and right panels show selection equality, marginal coverage, and interval width, respectively. In the low- and high-signal settings, the ``Polyhedral PSI'' intervals are \textcolor{red}{452.5\%} and \textcolor{red}{98.9\%} longer, respectively, than the proposed ``Randomized PSI'' intervals.}
  \label{fig:example2}
\end{figure}

\begin{figure}[H]
  \centering
  \includegraphics[width=\textwidth]{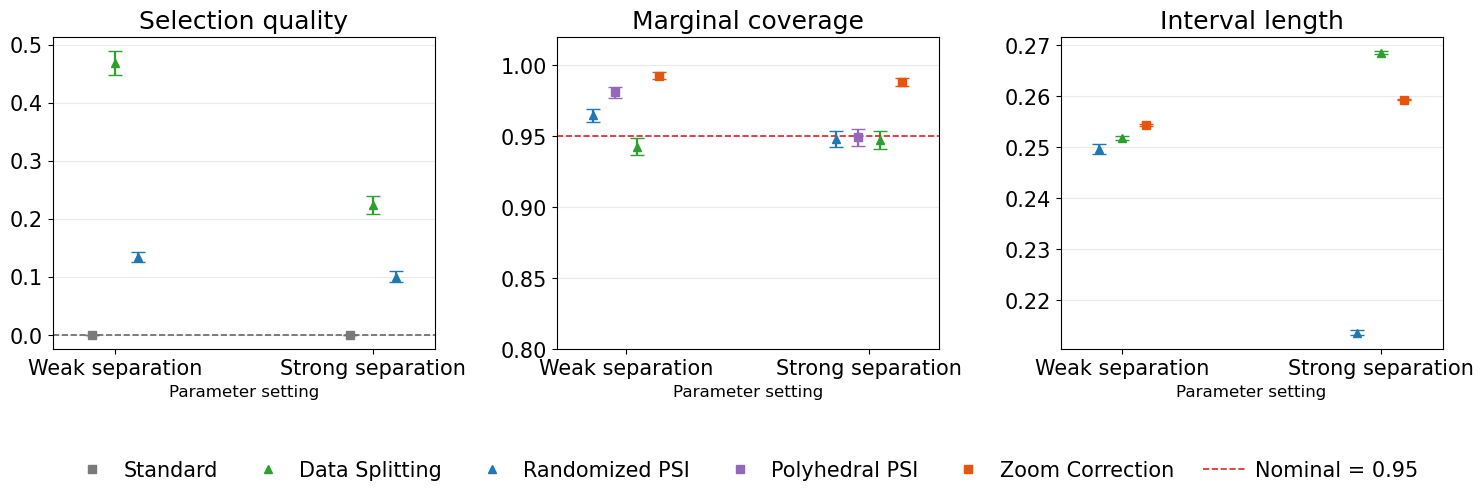}
  \vspace{-1cm}
  \caption{Comparisons on Binomial data. The left, middle, and right panels show selection equality, marginal coverage, and interval width, respectively. In the low- and high-signal settings, the ``Polyhedral PSI'' intervals are \textcolor{red}{70.3\%} and \textcolor{red}{18.5\%} longer, respectively, than the proposed ``Randomized PSI'' intervals.}
  \label{fig:example3}
\end{figure}

\begin{figure}[H]
  \centering
  \includegraphics[width=\textwidth]{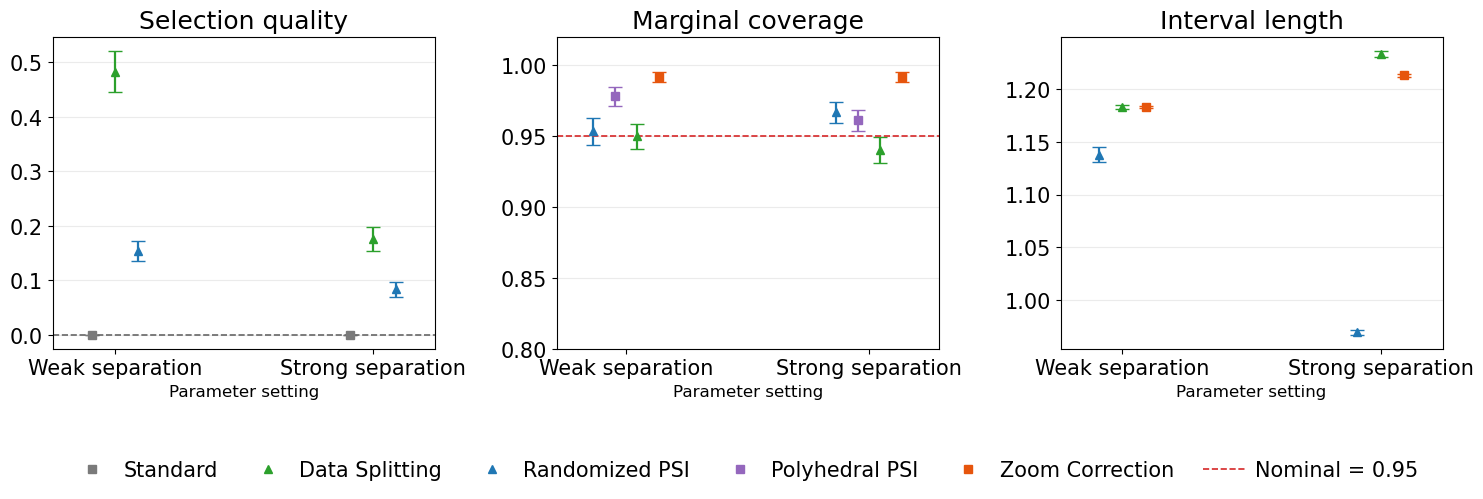}
  \vspace{-1cm}
  \caption{Comparisons on pairwise-comparisons data via Bradley--Terry--Davidson model. The left, middle, and right panels show selection equality, marginal coverage, and interval width, respectively. In the low- and high-signal settings, the ``Polyhedral PSI'' intervals are \textcolor{red}{101.5\%} and \textcolor{red}{29.6\%} longer, respectively, than the proposed ``Randomized PSI'' intervals.}
  \label{fig:example4}
\end{figure}

\begin{figure}[H]
  \centering
  \includegraphics[width=\textwidth]{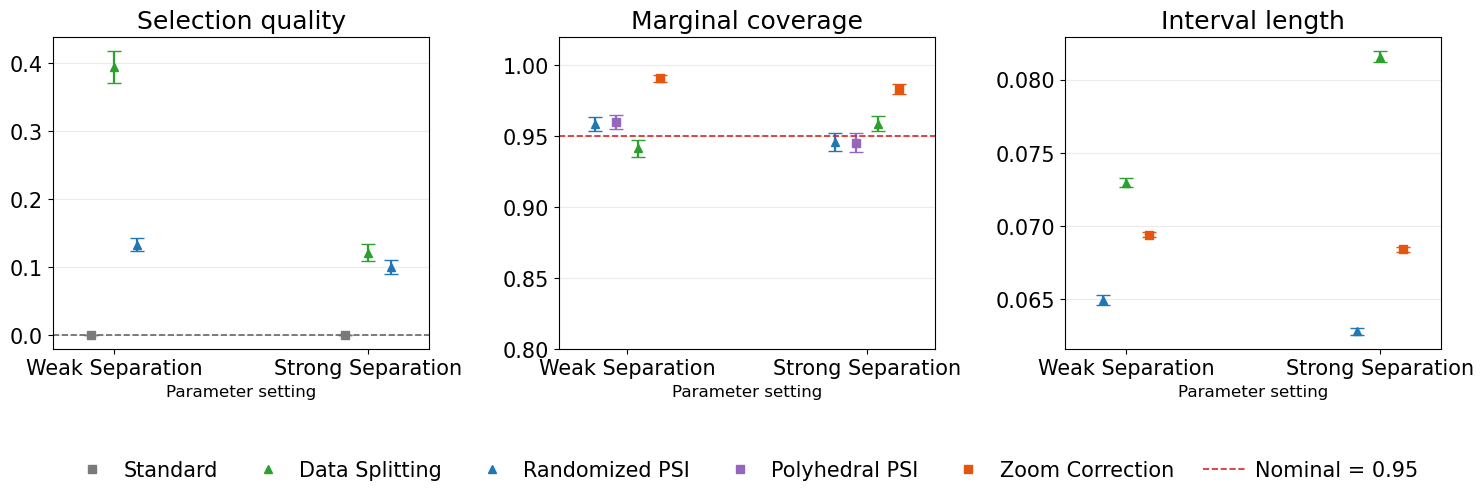}
  \vspace{-1cm}
  \caption{Comparisons on non-parametric feature importance: The left, middle, and right panels show selection equality, marginal coverage, and interval width, respectively. In the low- and high-signal settings, the ``Polyhedral PSI'' intervals are \textcolor{red}{254.0\%} and \textcolor{red}{133.5\%} longer, respectively, than the proposed ``Randomized PSI'' intervals.}
  \label{fig:example5}
\end{figure}

\paragraph{Code availability.}
The code used to reproduce the numerical experiments and empirical results in this paper is publicly available at
\url{https://github.com/BakshiSoham/top-k-winners-with-confidence}.

\section{Conclusion}

This work provides a principled approach to drawing valid inference about winners selected from the observed data.
Among existing methods, simultaneous approaches do not guarantee valid inference for the winners realized in a given dataset, whereas conditional approaches are often tailored to specific selection rules, lack broad applicability across data types, and produce intervals that are too wide to be practically useful. 
Our work fills these gaps with a data-adaptive randomization method based on the exponential mechanism, enabling conditionally valid tests, intervals, and point estimates for selected winners.

The proposed method makes randomized inference more broadly applicable in practice by providing a principled approach to selecting the randomization level and enabling assumption-lean, nonparametric inference.
In particular, our tuning-free choice of randomization, guided by a newly introduced notion of conditional regret, ensures that gains in inferential power do not come at the expense of selection quality. Furthermore, we provide statistical guarantees showing that the same randomization scheme enables assumption-lean inference without restrictions on the form of the selection event, making the proposed method, to our knowledge, one of the most flexible approaches in the post-selection inference literature.
For these reasons, we believe the randomization approach in our paper provides a promising framework for future research not only on the winner’s curse but also on a broad range of post-selection inference problems. Promising directions include improving power further by avoiding extra conditioning and addressing follow-up questions that compare the selected winners, such as between the winner and the first runner-up, as further queries on the same data.

\bibliographystyle{apalike}
\bibliography{references}

\begin{appendices}

%%---------------------------------------------------------------------------

%---------------------------------------------------------------------------
\section{Efficient Implementation Details}
\label{app:effcient}

The dynamic program for the normalizing constant $Z_{p,k}(t)$ computes partial normalizing constants
\[
  Z_{i,\ell}(t)
  =
  \sum_{\substack{
      E\subseteq \{1,\ldots,i\}\\
      |E|=\ell
  }}
  \prod_{j\in E}\exp\eta_j(t),
  \qquad
  i=0,\ldots,p,\quad \ell=0,\ldots,k
\] with boundary conditions $Z_{i,0}(t)=1, \ Z_{0,l}(t) = 0$ for all $i, l \geq 1$, following the recursion 
\begin{align}\label{eq:recursive.rule}
  Z_{i,\ell}(t)
  =
  Z_{i-1,\ell}(t)
  +
  \exp\eta_i (t) Z_{i-1,\ell-1}(t), \quad \forall i\leq p, \  l\leq k.
\end{align}
Intuitively, the first term in Eq.~\eqref{eq:recursive.rule} corresponds to subsets that exclude coordinate $i$, while
the second term corresponds to subsets that include coordinate $i$ and
choose the remaining $\ell-1$ coordinates from $\{1,\ldots,i-1\}$. For numerical stability, the recursion is implemented on the log scale in Algorithm~\ref{alg:topk_log_partition_dp},
\[ L_{i,\ell}(t)=\log Z_{i,\ell}(t), \quad
  L_{i,\ell}(t)
  =
  \operatorname{logaddexp}
  \left\{
    L_{i-1,\ell}(t),
    \eta_i(t)+L_{i-1,\ell-1}(t)
  \right\}
\] where $\mathrm{logaddexp}(a,b) = \log(\exp a + \exp b)$ computed in a numerically stable manner, and invalid states are assigned $-\infty$.

\begin{proposition}[Correctness of Algorithm~\ref{alg:exact_randomized_topk_sampler}]
\label{prop:seq_sampling_correct}
Algorithm~\ref{alg:exact_randomized_topk_sampler} returns a subset $\widehat{E} \in \mathcal{E}_k$ satisfying
\[
  \Pp[\widehat{E} = E \mid T = t]
  \;=\;
  \frac{\exp\{\tau^{-1}s_E(t)\}}{\sum_{E' \in \mathcal{E}_k} \exp\{\tau^{-1}s_{E'}(t)\}},
  \qquad\forall E \in \mathcal{E}_k.
\]
\end{proposition}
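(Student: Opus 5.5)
We have not seen Algorithm~\ref{alg:exact_randomized_topk_sampler}, but the text calls it backward sampling driven by the table $Z_{i,\ell}(t)$ of Algorithm~\ref{alg:topk_log_partition_dp}. We assume it works as follows. Start from state $(i,\ell)=(p,k)$. At each step, include coordinate $i$ with probability
\[
\pi_{i,\ell}=\frac{\exp\eta_i(t)\,Z_{i-1,\ell-1}(t)}{Z_{i,\ell}(t)}
\]
and move to $(i-1,\ell-1)$. Otherwise exclude it and move to $(i-1,\ell)$. Stop when $\ell=0$. We also read the displayed target with $\tau^{-1}$ meaning $1/\tau(t)$, so that $\exp\{s_E(t)/\tau(t)\}=\prod_{j\in E}\exp\eta_j(t)$, which uses the additive score $s_E(t)=\sum_{j\in E}t_j$. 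With these readings, the plan is to show that the probability of any path ending in a given $E$ telescopes to $\prod_{j\in E}\exp\eta_j(t)/Z_{p,k}(t)$.

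\textbf{Step 1: the transitions are valid probabilities.} Since all the $\exp\eta_j(t)$ are positive, each $Z_{i,\ell}(t)$ is either positive or zero. The recursion \eqref{eq:recursive.rule} gives $\pi_{i,\ell}+Z_{i-1,\ell}(t)/Z_{i,\ell}(t)=1$, so the include and exclude probabilities sum to one. The boundary conditions keep the chain on states with $Z_{i,\ell}(t)>0$. If $\ell=i$, then $Z_{i-1,\ell}(t)=0$ and inclusion is forced. If $\ell=0$, the process stops. Hence the algorithm terminates, and it outputs a set of exactly $k$ coordinates, so $\widehat E\in\mathcal E_k$.

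\textbf{Step 2: paths are in bijection with subsets.} Each $E\in\mathcal E_k$ corresponds to exactly one decision path. Scanning $i=p,p-1,\dots$, we include $i$ exactly when $i\in E$. Write $\ell_i=|E\cap\{1,\dots,i\}|$ for the state count when coordinate $i$ is considered.

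\textbf{Step 3: telescoping.} The probability of that path is the product over $i$ of
\[
\frac{\exp\eta_i(t)\,Z_{i-1,\ell_{i}-1}(t)}{Z_{i,\ell_i}(t)} \text{ if } i\in E, \qquad \frac{Z_{i-1,\ell_i}(t)}{Z_{i,\ell_i}(t)} \text{ if } i\notin E.
\]
The state reached after step $i$ is $(i-1,\ell_{i-1})$, which is exactly the denominator of the next factor. So the product telescopes to
\[
\prod_{j\in E}\exp\eta_j(t)\cdot\frac{Z_{i_0,0}(t)}{Z_{p,k}(t)},
\]
where $i_0$ is the index at which $\ell$ first hits $0$. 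Since $Z_{i_0,0}(t)=1$, this equals the claimed softmax probability. Finally, $Z_{p,k}(t)=\sum_{E'\in\mathcal E_k}\exp\{s_{E'}(t)/\tau(t)\}$, which follows from \eqref{eq:recursive.rule} by induction on $i$ using the include/exclude partition of subsets.

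\textbf{Main obstacle.} The hardest part is bookkeeping rather than mathematics. We must make sure the telescoping runs over the right index chain, including the early stop at $\ell=0$ and the forced inclusions when $\ell=i$. We must also reconcile the data-adaptive temperature $\tau(t)=\tau\,\snoise(t)$ with the $\tau^{-1}$ written in the statement. Because $\snoise(t)$ is fixed once $T=t$ is observed, it only rescales the $\eta_j$ and does not affect the argument. The log-scale implementation computes the same ratios through $\exp(L\text{-differences})$, so the exactness carries over.
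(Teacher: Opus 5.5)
Your proposal is correct. The transition you assumed is exactly the one in Algorithm~\ref{alg:exact_randomized_topk_sampler}, since $q_i=\exp\{\eta_i+L_{i-1,m-1}-L_{i,m}\}=\exp\{\eta_i(t)\}\,Z_{i-1,m-1}(t)/Z_{i,m}(t)$. Your telescoping along the unique decision path of $E$ then gives $\prod_{j\in E}\exp\{\eta_j(t)\}/Z_{p,k}(t)$, as required.

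The paper gives no written proof of this proposition; it only points to backward sampling and the elementary-symmetric-polynomial recursion. Your argument is the standard one that this pointer presupposes.

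Two of your bookkeeping choices are right:
\begin{itemize}
\item Reading $\tau^{-1}$ as $1/\tau(t)=1/\{\tau\,\snoise(t)\}$ is the correct interpretation, because Algorithm~\ref{alg:topk_log_partition_dp} sets $\eta_j=t_j/\tau(t)$.
\item The terminal factor $Z_{\min E-1,0}(t)=1$ also covers the case $1\in E$, because the algorithm initializes $L_{0,0}=0$.
\end{itemize}
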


\begin{algorithm}[H]
\caption{\textsc{TopKLogPartitionDP}: log-partition dynamic program}
\label{alg:topk_log_partition_dp}
\begin{algorithmic}[1]
\Require Selection statistic value $t\in\mathbb R^p$; subset size $k$; temperature function $\tau(\cdot)$
% \Ensure Log-weights $\eta=(\eta_1,\ldots,\eta_p)$; log-DP table $\nbracket{L_{i,l}: l \in [k] \leq i\in[p]}$ with log-partition value $L_{p,k}$ 
\State Compute $\eta_j \gets t_j/\tau(t)$ for $j=1,\ldots,p$

\State Initialize $L_{i,\ell}\gets -\infty$ for $i=0,\ldots,p$ and $\ell=0,\ldots,k$
\State Set $L_{i,0}\gets 0$ for $i=0,\ldots,p$

\For{$i=1,\ldots,p$}
    \For{$\ell=1,\ldots,\min\{i,k\}$}
        \State
        $L_{i,\ell}
        \gets
        \operatorname{logaddexp}
        \left(
          L_{i-1,\ell},
          \eta_i + L_{i-1,\ell-1}
        \right)$
    \EndFor
\EndFor

\State \Return Log-weights $\eta=(\eta_1,\ldots,\eta_p)$; log-partition DP table $L=\nbracket{L_{i,l}: l \in [k], i\in[p]}$ 
\end{algorithmic}
\end{algorithm}

\begin{algorithm}[H]
\caption{Exact sampling from the randomized top-$k$ rule}
\label{alg:exact_randomized_topk_sampler}
\begin{algorithmic}[1]
\Require Observed statistic $t_{\mathrm{obs}}\in\mathbb R^p$; subset size $k$; temperature function $\tau(\cdot)$

\State $(\eta,L)\gets \textsc{TopKLogPartitionDP}(t_{\mathrm{obs}},k,\tau(\cdot))$
\State $\widehat E\gets\emptyset$, \quad $m\gets k$

\For{$i=p,p-1,\ldots,1$ \textbf{while} $m>0$}
    \State $q_i \gets
    \exp\{\eta_i+L_{i-1,m-1}-L_{i,m}\}$
    \State Draw $B_i\sim\operatorname{Bernoulli}(q_i)$
    \If{$B_i=1$}
        \State $\widehat E\gets \widehat E\cup\{i\}$, \quad $m\gets m-1$
    \EndIf
\EndFor

\State \Return $\widehat E$
\end{algorithmic}
\end{algorithm}

\begin{algorithm}[H]
\caption{\textsc{SelectiveWeightsOnGrid}: selective weights on a quadrature grid}
\label{alg:selective_weights_grid}
\begin{algorithmic}[1]
\Require Observed selected set $E_o$; nuisance value $t^\perp$; grid points $u_1,\ldots,u_G$; reconstruction map $r$; subset size $k$; temperature function $\tau(\cdot)$
\Ensure Selective weights 
$\Lambda^{E_o}_1,\ldots,\Lambda^{E_o}_G$
\For{$g=1,\ldots,G$}
    \State $t_g \gets r(u_g,t^\perp)$

    \State
    $(\eta^{(g)},L^{(g)})
    \gets \textsc{TopKLogPartitionDP}(t_g,k,\tau(\cdot))$ 

    \State
    $\log \Lambda^{E_o}_g
    \gets
    \displaystyle
    \sum_{j\in E_o}\eta^{(g)}_j
    -
    L^{(g)}_{p,k}$

    \State $\Lambda^{E_o}_g\gets \exp\{\log\Lambda^{E_o}_g\}$
\EndFor

\State \Return $\Lambda^{E_o}_1,\ldots,\Lambda^{E_o}_G$
\end{algorithmic}
\end{algorithm}

\begin{algorithm}[H]
\caption{\textsc{SelectivePivot}: numerical computation of the selective pivot}
\label{alg:selective_pivot_computation}
\begin{algorithmic}[1]
\Require Observed target statistic $T_{j_o}$; null value $\mu_{j_o}$; variance $\sigma_{j_o}^2$; observed selected set $E_o$; nuisance value $t^\perp$; grid points $u_1,\ldots,u_G$; quadrature weights $a_1,\ldots,a_G$; reconstruction map $r$; subset size $k$; temperature function $\tau(\cdot)$
\Ensure Approximation to $\Pivot^{E_o}_{\mu_{j_o}}(T_{j_o})$

\State
$(\Lambda^{E_o}_1,\ldots,\Lambda^{E_o}_G)
\gets
\textsc{SelectiveWeightsOnGrid}
(E_o,t^\perp,\{u_g\}_{g=1}^G,r,k,\tau(\cdot))$

\State
$N
\gets
\displaystyle
\sum_{g:u_g\le T_{j_o}}
a_g\,
\phi(u_g;\mu_{j_o},\sigma_{j_o}^2)
\Lambda^{E_o}_g$

\State
$D
\gets
\displaystyle
\sum_{g=1}^{G}
a_g\,
\phi(u_g;\mu_{j_o},\sigma_{j_o}^2)
\Lambda^{E_o}_g$

\State \Return $N/D$
\end{algorithmic}
\end{algorithm}

\subsection{Proofs for Section~\ref{sec:exp:mechanism}}

\begin{proof}[Proof of Proposition~\ref{prop:mismatch}]
For each $E \in \Ek$, let $\Delta_E(t) = s^*(t)-s_E(t)\geq 0$. 
Dividing both the numerator and denominator in~\eqref{eq:expmech} by $\exp\{s^*(t)/\tau(t)\}$, we obtain, for each $E\in\Ek$,
\[
\Pp\rbracket{\Ehat =E\mid T=t} \propto \exp\!\left\{
  \frac{s_E(t)-s^*(t)}{\tau(t)}
  \right\}
  =
  \exp\!\left\{
  -\frac{\Delta_E(t)}{\tau(t)}
  \right\}.
\]
Combining this with the fact that $\Delta_E(t)=0$ for all $E\in\Estar(t)$ yields
\[
\Pp\rbracket{\Ehat \notin \Estar(T)\mid T=t}
=
\frac{
\displaystyle
\sum_{E\notin\Estar(t)}
\exp\!\left\{
-\frac{\Delta_E(t)}{\tau(t)}
\right\}
}{
\displaystyle
|\Estar(t)|
+
\sum_{E\notin\Estar(t)}
\exp\!\left\{
-\frac{\Delta_E(t)}{\tau(t)}
\right\}
}.
\] 

Note that, for every $E\notin\Estar(T)$, we have $\Delta_E(t)\geq \Delta_+(t)$.
Furthermore, since the map $x\mapsto \dfrac{x}{(|\Estar(t)|+x)}$ is increasing for $x\geq 0$, we have
\begin{align*}
 & \sum_{E\notin\Estar(t)}
  \exp\!\left\{
  -\frac{\Delta_E(t)}{\tau(t)}
  \right\}
  \le
  \bigl(|\Ek|-|\Estar(t)|\bigr)
  \exp\!\left\{
  -\frac{\Delta_+(t)}{\tau(t)}
  \right\}.
\end{align*}  
Using the bound in the previous display, we conclude that
 \begin{align*} 
\Pp\rbracket{\Ehat \notin \Estar(T)\mid T=t}
\leq
\dfrac{
\bigl(|\Ek|-|\Estar(t)|\bigr)
  \exp\!\left\{
  -\dfrac{\Delta_+(t)}{\tau(t)}
  \right\}
}{
\displaystyle
|\Estar(t)|
+
\bigl(|\Ek|-|\Estar(t)|\bigr)
  \exp\!\left\{
  -\frac{\Delta_+(t)}{\tau(t)}
  \right\}
}.
\end{align*}  

The final bound on the mismatch probability in the claim follows by noting that $|\Estar(t)|\ge 1$, and hence the denominator is bounded below by $1$.
\end{proof}

\begin{proof}[Proof of Proposition~\ref{prop:regret}]
We begin by noting that
\[
  \Ee\!\left[s^*(T)-s_{\Ehat}(T)\mid T=t\right]
  =
  s^*(t)
  -
  \sum_{E\in\Ek}
  \Pp\{\Ehat=E\mid T=t\}\,s_E(t).
\]
Therefore, it suffices to obtain a lower bound on the weighted score
$\sum_{E\in\Ek}\Pp\{\Ehat=E\mid T=t\}s_E(t)$. Given $\{T=t\}$, applying Lemma \ref{lem:gibbs-variational} with $a_E=s_E(t)$, $\lambda=\tau(t)$, and $q$ equal to the discrete uniform distribution on $\Estar(t)$, i.e.,
$q_E=|\Estar(t)|^{-1}\mathbf{1}\{E\in\Estar(t)\}$,
gives
\[
  s^*(t)+\tau(t)\log |\Estar(t)| \leq \sum_{E\in\Ek}
  \pi_E s_E(t)
  +
  \tau(t) H(\pi),
\]
where 
$$\pi_E=\Pp\left[\Ehat=E\mid T=t\right] = \dfrac{\exp\!\left\{\dfrac{s_E(t)}{\tau(t)}\right\}}
       {\displaystyle\sum_{E' \in \Ek}
        \exp\!\left\{\dfrac{s_{E'}(t)}{\tau(t)}\right\}}$$
is the probability of selecting the subset $E$ under the exponential randomization scheme defined in \eqref{eq:expmech}. Therefore, it follows that
\[
  s^*(t)
  -
  \sum_{E\in\Ek}
  \Pp\left[\Ehat=E\mid T=t\right]s_E(t)
  \le
  \tau(t)\{H(\pi)-\log |\Estar(t)|\}.
\]
Finally, noting that $H(\pi)\le \log |\Ek|$, since $\pi$ is supported on $\Ek$, which has $|\Ek|$ elements, leads to the regret bound:
\[
  \Ee\!\left[s^*(T)-s_{\Ehat}(T)\mid T=t\right]
  =
  s^*(t)
  -
  \sum_{E\in\Ek}
  \Pp\{\Ehat=E\mid T=t\}s_E(t)
  \le
  \tau(t)\log\frac{|\Ek|}{|\Estar(t)|}.
\]
Equivalently, we have
\[
  \Ee\!\left[\frac{1}{\snoise(T)}(s^*(T)-s_{\Ehat}(T))\Big\lvert T=t\right]\le
  \tau\log\frac{|\Ek|}{|\Estar(t)|}.
\]
\end{proof}

The following finite-dimensional Gibbs variational principle follows from
the variational characterization of the Kullback-Leibler divergence 
established by \cite{Csiszar1975}.

\begin{lemma}[Adopted from \cite{Csiszar1975}]
\label{lem:gibbs-variational}
Let $\{a_E:E\in\Ek\}$ be a finite collection of real numbers, and let
$\lambda>0$. 
Consider the following map defined on the probability simplex $\Delta(\Ek)$
\[
  q\in\Delta(\Ek) \mapsto \sum_{E\in\Ek}q_Ea_E+\lambda H(q),
\]
where $H(q)=-\sum_{E\in\Ek}q_E\log q_E$.
The maximizer of this map equals $\pi \in \Delta(\Ek)$, given by
\[
  \pi_E
  =
  \frac{\exp\{a_E/\lambda\}}
       {\sum_{E'\in\Ek}\exp\{a_{E'}/\lambda\}},
  \qquad E\in\Ek .
\]

\end{lemma}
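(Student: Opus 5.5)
Lemma~\ref{lem:gibbs-variational} is the classical Gibbs variational principle. I will prove it by a direct KL-divergence identity, which avoids Lagrange multipliers and also gives uniqueness of the maximizer.

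Let $Z=\sum_{E'\in\Ek}\exp\{a_{E'}/\lambda\}$ and define $\pi_E=\exp\{a_E/\lambda\}/Z$. Then $\pi_E>0$ for every $E$, and taking logs gives $a_E=\lambda\log\pi_E+\lambda\log Z$.

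Fix any $q\in\Delta(\Ek)$. Using the convention $0\log 0=0$, substitute this expression for $a_E$:
\[
\sum_{E}q_Ea_E+\lambda H(q)=\lambda\sum_E q_E\log\pi_E+\lambda\log Z-\lambda\sum_E q_E\log q_E=\lambda\log Z-\lambda\,\KL(q\,\|\,\pi),
\]
where $\KL(q\,\|\,\pi)=\sum_E q_E\log(q_E/\pi_E)$. This quantity is finite because $\pi$ has full support.

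The key step is Gibbs' inequality: $\KL(q\,\|\,\pi)\ge 0$, with equality if and only if $q=\pi$. I will prove it with Jensen's inequality applied to the strictly concave function $\log$, summing over the support $S$ of $q$:
\[
-\KL(q\,\|\,\pi)=\sum_{E\in S}q_E\log\frac{\pi_E}{q_E}\le\log\sum_{E\in S}\pi_E\le 0.
\]
Equality in Jensen forces $\pi_E/q_E$ to be constant on $S$. Equality in the second inequality forces $\sum_{E\in S}\pi_E=1$, and since $\pi$ has full support this means $S=\Ek$. Together these give $q=\pi$.

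Combining the two displays, the objective equals $\lambda\log Z-\lambda\,\KL(q\,\|\,\pi)\le\lambda\log Z$, with equality exactly at $q=\pi$ (here $\lambda>0$ is used). So $\pi$ is the unique maximizer, and the maximum value is $\lambda\log Z$.

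This identity also yields exactly the inequality used in the proof of Proposition~\ref{prop:regret}. There is no real obstacle in this argument; the only care needed is in handling the zero entries of $q$ through the $0\log 0=0$ convention.
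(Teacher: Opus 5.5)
Your proof is correct and follows essentially the same route as the paper: both rewrite the objective as $\lambda\log Z-\lambda\,\KL(q\,\|\,\pi)$ and conclude from the nonnegativity of the KL divergence. Your version also proves Gibbs' inequality via Jensen and derives uniqueness of the maximizer, both of which the paper takes for granted.
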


\begin{proof}
Consider $\pi \in \Delta(\Ek)$ such that
\[
  \pi_E
  =
  \frac{\exp\{a_E/\lambda\}}
       {\sum_{E'\in\Ek}\exp\{a_{E'}/\lambda\}},
  \qquad E\in\Ek.
\]
Then, for every $E\in\Ek$, we have
\begin{equation}
  \lambda\log\pi_E
  = a_E - \lambda\log\sum_{E'\in\Ek}\exp\{a_{E'}/\lambda\}.
  \label{eqn: identity:pi}
\end{equation}

For any $q \in \Delta(\Ek)$, define the KL divergence between the probability distributions $q$ and $\pi$ by
$\KL(q\|\pi)=\sum_{E\in\Ek}q_E\log(q_E/\pi_E)$. 
Then, substituting the expression for $\lambda\log\pi_E$ from the identity in \eqref{eqn: identity:pi} into the expression for $\lambda \KL(q|\pi)$, we obtain:
\begin{equation}
  \sum_{E\in\Ek}q_Ea_E+\lambda H(q)
  =
  \lambda\log\sum_{E'\in\Ek}\exp\{a_{E'}/\lambda\}
  -
  \lambda\KL(q\|\pi).
  \label{eqn: KL:q}
\end{equation}
Setting $q=\pi$ in the preceding display give
\begin{equation}
  \sum_{E\in\Ek}\pi_Ea_E+\lambda H(\pi)
  =
  \lambda\log\sum_{E'\in\Ek}\exp\{a_{E'}/\lambda\}.
  \label{eqn: KL:pi}
\end{equation} Finally, subtracting the identity in \eqref{eqn: KL:q} from the identity in \eqref{eqn: KL:pi} yields
\[
  \left\{
  \sum_{E\in\Ek}\pi_Ea_E+\lambda H(\pi)
  \right\}
  -
  \left\{
  \sum_{E\in\Ek}q_Ea_E+\lambda H(q)
  \right\}
  =
  \lambda\KL(q\|\pi)\ge 0.
\]
This proves that the maximizer of the map stated in the lemma is $\pi$.
\end{proof}

\subsection{Proofs for Section~\ref{sec:method}}

\begin{proof}[Proof of Lemma~\ref{lem:decomp}]
Define the linear map
\begin{align*}
  M_{j_o}
  &=
  \begin{pmatrix}
    e_{j_o}^\top \\[3pt]
    R_{j_o}\nbracket{I_p - \Sigma e_{j_o}\sigma_{j_o}^{-2}e_{j_o}^\top}
  \end{pmatrix}
  \in \real^{p\times p},
\end{align*}
so that, by construction, $M_{j_o}T = (T_{j_o}, T_{j_o}^\perp)^\top$.
Since $T \sim \mathcal {N}_p\nbracket{\mu, \tfrac{1}{n}\Sigma}$ and $M_{j_o}$ is a
fixed and linear, $M_{j_o}T$ is distributed as a normal variable with mean $M_{j_o}\mu$ and
covariance $\dfrac{1}{n}M_{j_o}\Sigma M_{j_o}^\top$; a direct calculation of these two quantities yields the parameters in the claimed normal distribution.
\end{proof}

\begin{proof}[Proof of Corollary \ref{cor:conditional-validity}]
By Theorem~\ref{thm:pivot}, conditional on $\{\Ehat = E_o\}$ the pivot evaluated at
the true parameter, $U = \Pivot^{E_o}_{\mu_{j_o}}(T_{j_o}; T_{j_o}^\perp, \sigma_{j_o}^2)$, is distributed as a $\mathrm{Uniform}(0,1)$ random variable. 

For Type-I error, under $H_0$ the p-value is $p_{\mu_{j_o}}(T_{j_o};T_{j_o}^\perp,\sigma_{j_o}^2) = 2\min\{U, 1-U\}$; since $2\min\{U,1-U\} \sim
\mathrm{Uniform}(0,1)$, we have $\Pp[\,p_{\mu_{j_o}}(T_{j_o};T_{j_o}^\perp,\sigma_{j_o}^2) \le \alpha \mid \Ehat = E_o\,] =\alpha$.
For coverage,
$\{\mu_{j_o} \in C^{E_o}_{j_o}\} = \{U \in (\alpha/2,\, 1-\alpha/2)\}$, which
has conditional probability $1-\alpha$. 
\end{proof}

\begin{proof}[Proof of Proposition~\ref{prop:smle}]
Since $\log\Lambda^{E_o}(t, t^\perp)$ does not depend on
$\mu_{j_o}$, differentiating~\eqref{eq:sel_loglik} gives the score
\begin{align*}
  \frac{\partial \ell}{\partial \mu_{j_o}}
  = \frac{t - \mu_{j_o}}{\sigma_{j_o}^2}
  - \frac{\partial}{\partial \mu_{j_o}}
    \log \mathcal{Z}(\mu_{j_o}, t^\perp).
\end{align*}
Differentiating under the integral sign in~\eqref{eq:norm_const}, we have that
\begin{align*}
  \frac{\partial}{\partial \mu_{j_o}}
  \log \mathcal{Z}(\mu_{j_o}, t^\perp)
  =
  \frac{1}{\sigma_{j_o}^2}
  \nbracket{e(\mu_{j_o})- \mu_{j_o}.
  },
\end{align*}
From here, it immediately follows that $e(\widehat{\mu}_{j_o}^{E_o}) = t$.

It remains to establish uniqueness of the MLE.
To this end, note that 
\begin{align*}
  e(\mu_{j_o})=
  \frac{
    \int_{-\infty}^{\infty}
    u\, \phi\nbracket{u;\, \mu_{j_o},\, \sigma_{j_o}^2}
    \Lambda^{E_o}\nbracket{u, t^\perp} du
  }{
    \mathcal{Z}(\mu_{j_o}, t^\perp)
  }  =
  \Ee\rbracket{T_{j_o} \mid \Ehat = E_o,\; T_{j_o}^\perp = t^\perp}.
\end{align*}
Therefore, we have that
\begin{align*}
  \frac{d}{d\mu} e(\mu_{j_o})
  =
  \frac{
    \Var\rbracket{T_{j_o} \mid \Ehat = E_o,\; T_{j_o}^\perp = t^\perp}
  }{
    \sigma_{j_o}^2
  }
  > 0,
\end{align*}
which establishes strict monotonicity of the map $e$. 
As $\mu_{j_o} \to \pm\infty$, the normal density function $\phi(u;\mu_{j_o},\sigma_{j_o}^2)$ places its mass at $\pm\infty$, and hence $e(\mu_{j_o})\to\pm\infty$. 
By continuity, the intermediate value theorem gives existence of a solution, while strict monotonicity gives uniqueness. Hence $\ell$ is strictly concave with a unique maximizer.
\end{proof}

%##########################################################################
\section{Proofs and supporting results for Section~\ref{sec:asymptotic}}
\label{app:asymp}
%##########################################################################

%This appendix proves Theorem~\ref{thm:asymp-main}. We first introduce
%the standardized representation of $T_n$ and transfer the asymptotic
%linear representation to standardized coordinates, then define the
%selection weights and the Gaussian comparator, reduce the target weak
%convergence to two relative-difference quantities and show they vanish, and finally collect the supporting derivative and Lindeberg
%bounds.

To prove Theorem~\ref{thm:asymp-main}, we first introduce notations and establish several preliminaries which are used throughout the proofs.

\paragraph{Notations.} Let $\|\cdot\|$ denote the Euclidean norm. For a matrix $A$ regarded as a linear
map, $\|A\|_{\op}=\sup_{\|u\|\le 1}\|Au\|$ is its largest singular value. For the
$k$th derivative $\nabla^{(k)}g(z)$, regarded as a symmetric $k$-linear form $\nabla^{(k)}g(z)[u_1,\dots,u_k]$,
\[
  \big\|\nabla^{(k)}g(z)\big\|_{\op}
   =\sup\Big\{\,\big|\nabla^{(k)}g(z)[u_1,\dots,u_k]\big| : \|u_j\|\le 1,\ j=1,\dots,k\,\Big\};
\]
for $k=1$ this reduces to $\|\nabla g(z)\|$, and for $k=2$ to the spectral norm of
the Hessian. These two conventions agree on symmetric matrices. To consolidate notations, let
\begin{align}
\label{eq:scaled}
  \widetilde T_n
  =\begin{pmatrix}T_{n,j_o}\\T_{n,j_o}^\perp\end{pmatrix}
  \in\real^p,
  \qquad
  \widetilde\mu_n
  =\begin{pmatrix}\mu_{n,j_o}\\\nu_{n,j_o}\end{pmatrix},
  \qquad
  \widetilde\Sigma
  =\begin{pmatrix}\sigma_{j_o}^2 & 0\\0 & \Gamma_{j_o}\end{pmatrix},
\end{align}
where the subscript $n$ emphasizes the dependence of the key inferential statistics and corresponding parameters on the sample size $n$.

\subsection{Preliminaries}
We define the \emph{standardized selection statistics} 
\begin{align}
  \label{eq:zeta}
  \zeta_n
  =\widetilde\Sigma^{-1/2}\big(\widetilde T_n-\widetilde\mu_n\big)
   \in\real^p,
\end{align}
so that $\widetilde T_n=\widetilde\Sigma^{1/2}\zeta_n+\widetilde\mu_n$. In particular, by the block-diagonal structure of $\widetilde\Sigma$, we have
\begin{align}
\label{eq:blocks-zeta}
T_{n,j_o}=\sigma_{j_o}\zeta_{n,1}+\mu_{n,j_o}\in\real,
  \qquad
  T_{n,j_o}^\perp
  =(\Gamma_{j_o})^{1/2}\zeta_{n,2}+\nu_{n,j_o}\in\real^{p-1}.
\end{align}

\begin{proposition}[Transfer of asymptotic linear representation]
\label{prop:clt-decomp}
Under Assumption~\ref{ass:linearizable}, we have
\begin{align}
\label{eq:zeta-alr}
  \zeta_n
  =\frac{1}{\sqrt n}\sum_{i=1}^n\widetilde a_{i,n}Y_{i,n}+\widetilde R_n
  \;\xrightarrow{d}\;\mathcal N_p(0,I_p),
\end{align}
where $\widetilde a_{i,n}=Q_{j_o}a_{i,n}$ and $\widetilde R_n=Q_{j_o}R_n$ for a fixed orthogonal matrix $Q_{j_o}$ depending only on $\Sigma$ and $j_o$, i.e., 
$\|\widetilde a_{i,n}\|_{\op}=\|a_{i,n}\|_{\op}$ and
$\|\widetilde R_n\|=\|R_n\|$.
\end{proposition}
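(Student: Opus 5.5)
The plan is to show that $\zeta_n$ is an orthogonal transformation of $\Sigma^{-1/2}(T_n-\mu_n)$, push the ALR of Assumption~\ref{ass:linearizable} through that transformation, and then apply a triangular-array CLT.

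\textbf{Step 1: linear structure.} Both $\widetilde T_n$ and $\widetilde\mu_n$ are obtained from $T_n$ and $\mu_n$ by the same fixed linear map $M_{j_o}$ from the proof of Lemma~\ref{lem:decomp}. That map satisfies $M_{j_o}T_n=(T_{n,j_o},T_{n,j_o}^\perp)^\top$ and $M_{j_o}\mu_n=\widetilde\mu_n$, because $\nu_{n,j_o}$ is defined by exactly that transformation. The computation in Lemma~\ref{lem:decomp} is purely algebraic, so it gives $M_{j_o}\Sigma M_{j_o}^\top=\widetilde\Sigma$. It follows that
\[
\zeta_n=\widetilde\Sigma^{-1/2}M_{j_o}(T_n-\mu_n)=Q_{j_o}\,\Sigma^{-1/2}(T_n-\mu_n),
\qquad Q_{j_o}:=\widetilde\Sigma^{-1/2}M_{j_o}\Sigma^{1/2}.
\]
We then check that $Q_{j_o}$ is orthogonal:
\[
Q_{j_o}Q_{j_o}^\top=\widetilde\Sigma^{-1/2}M_{j_o}\Sigma M_{j_o}^\top\widetilde\Sigma^{-1/2}=\widetilde\Sigma^{-1/2}\widetilde\Sigma\,\widetilde\Sigma^{-1/2}=I_p,
\]
and since $Q_{j_o}$ is square, $Q_{j_o}^\top Q_{j_o}=I_p$ as well. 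This step needs $\widetilde\Sigma$ to be invertible, that is, $\sigma_{j_o}>0$ and $\Gamma_{j_o}\succ 0$. Both hold if $\Sigma\succ0$, which is implicit because $\Sigma^{-1/2}$ appears in \eqref{eq:Talr}. Indeed, $M_{j_o}$ is invertible, with inverse given by the map $r$ of Lemma~\ref{lem:reconstruction}, and congruence by an invertible matrix preserves positive definiteness.

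\textbf{Step 2: transfer the representation.} Multiplying \eqref{eq:Talr} by $Q_{j_o}$ gives the stated representation with $\widetilde a_{i,n}=Q_{j_o}a_{i,n}$ and $\widetilde R_n=Q_{j_o}R_n$. Orthogonality of $Q_{j_o}$ preserves operator and Euclidean norms, so $\|\widetilde a_{i,n}\|_{\op}=\|a_{i,n}\|_{\op}$ and $\|\widetilde R_n\|=\|R_n\|$. It also gives
\[
n^{-1}\sum_i\widetilde a_{i,n}\widetilde a_{i,n}^\top=Q_{j_o}I_pQ_{j_o}^\top=I_p .
\]
Hence every condition of Assumption~\ref{ass:linearizable} carries over unchanged, and in particular $\widetilde R_n=o_p(1)$.

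\textbf{Step 3: CLT.} Set $S_n=n^{-1/2}\sum_i\widetilde a_{i,n}Y_{i,n}$. Its summands are independent with mean zero, and
\[
\Var(S_n)=n^{-1}\sum_i\widetilde a_{i,n}\widetilde a_{i,n}^\top=I_p .
\]
We verify the Lindeberg condition, or more simply the Lyapunov condition with third moments, through the Cramér--Wold device. For fixed $\lambda\in\real^p$, write $X_{i,n}=n^{-1/2}\lambda^\top\widetilde a_{i,n}Y_{i,n}$. Then
\[
\sum_i\Ee|X_{i,n}|^3\le n^{-3/2}\sum_i\|\lambda\|^3\|\widetilde a_{i,n}\|_{\op}^3\,\Ee\|Y_{i,n}\|^3=O(n^{-1/2})\to0 .
\]
This uses the uniform bound on $\|a_{i,n}\|_{\op}$ together with the fact that uniform sub-Gaussianity bounds $\Ee\|Y_{i,n}\|^3$ by a constant depending only on $d$ and $K_0$. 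The Lindeberg--Feller theorem therefore gives $S_n\xrightarrow{d}\mathcal N_p(0,I_p)$, and Slutsky's lemma with $\widetilde R_n=o_p(1)$ completes the proof.

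The main obstacle is Step 1: confirming that the covariance identity $M_{j_o}\Sigma M_{j_o}^\top=\widetilde\Sigma$ holds algebraically without any Gaussian model (note that Lemma~\ref{lem:decomp}'s proof contains a stray $1/n$ factor), and confirming that $\widetilde\Sigma$ is invertible. Once these are settled, the remaining steps are routine.
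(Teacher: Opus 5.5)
Your proposal is correct and follows the paper's own argument: both define $Q_{j_o}=\widetilde\Sigma^{-1/2}M_{j_o}\Sigma^{1/2}$, get orthogonality from the algebraic identity $M_{j_o}\Sigma M_{j_o}^\top=\widetilde\Sigma$, push the representation in Assumption~\ref{ass:linearizable} through $Q_{j_o}$, and conclude with Lindeberg--Feller. You additionally make explicit the points the paper leaves implicit: invertibility of $\widetilde\Sigma$ (via $r=M_{j_o}^{-1}$), the Lyapunov third-moment check, the Slutsky step for $\widetilde R_n$, and the stray $1/n$ in the proof of Lemma~\ref{lem:decomp}.
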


\begin{proof}[Proof of Proposition~\ref{prop:clt-decomp}]
Let $M_{j_o}$ denote the linear map sending $T_n$ to $(T_{n,j_o},T_{n,j_o}^\perp)^\top$, as in proof of Lemma~\ref{lem:decomp}. Then $\widetilde T_n=M_{j_o}T_n$, $\widetilde\mu_n=M_{j_o}\mu_n$, and $\widetilde\Sigma=M_{j_o}\Sigma M_{j_o}^\top$. 
The proof follows by setting $Q_{j_o}=\widetilde\Sigma^{-\frac{1}{2}}M_{j_o}\Sigma^{\frac{1}{2}}$, so that $\widetilde a_{i,n}=Q_{j_o}a_{i,n}$ and $\widetilde R_n=Q_{j_o}R_n$, and noting that $Q_{j_o}$ is orthogonal. Furthermore, weak convergence follows by a direct application of the Lindeberg--Feller central limit theorem.
\end{proof}

\begin{definition}
\label{def:gamma}
Based on the linear representation in Proposition \ref{prop:clt-decomp}, let $\gamma_n=\frac{1}{\sqrt n}\sum_{i=1}^n\widetilde a_{i,n}Y_{i,n}$.
\end{definition}

\begin{definition}
Define the Gaussian counterpart of $\zeta_n$ by 
\begin{align}
\label{eq:Zn}
Z_n=\frac{1}{\sqrt n}\sum_{i=1}^n\widetilde a_{i,n}Z_{i,n},
\end{align} 
where $\{Z_{i,n}\}$ are independent $\mathcal N_d(0,I_d)$ random vectors. 
Since $\frac1n\sum_{i=1}^n\widetilde a_{i,n}\widetilde a_{i,n}^\top=I_p$, it follows that $Z_n\sim\mathcal N_p(0,I_p)$.
\end{definition}

We redefine the selection weights and pivot from Definitions~\ref{def:Lambda} and \ref{def:pivot}, respectively, in terms of the standardized statistic $\zeta_n$.

\begin{definition}[Selection weight in terms of standardized statistics]
\label{def:weights-std}
Based on the selection weight $\Lambda^{E_o}:\real^p\to(0,1)$ from Definition~\ref{def:Lambda}, define the log-selection weight $f:\real^p\to\real$ by
\begin{align}
  \label{eq:f}
  f(u)
  =\log\Lambda^{E_o}(u)
  =\frac{s_{E_o}(r(u))}{\tau\cdot\snoise(r(u))}
   -\log\sum_{E\in\mathcal E_k}
     \exp\!\left\{\frac{s_E(r(u))}{\tau\cdot\snoise(r(u))}\right\},
\end{align}
where $r(u)=r(u_1,u_2)$ is the reconstruction map from Lemma~\ref{lem:reconstruction}. We then define the standardized version of the selection weight $F:\real^p\to(0,1)$ by
\begin{align}
  \label{eq:F}
  F(v)
  =\exp\!\big\{f\big(\widetilde\Sigma^{1/2}v+\widetilde\mu_n\big)\big\}
  =\Lambda^{E_o}\!\big(\widetilde\Sigma^{1/2}v+\widetilde\mu_n\big).
\end{align} 
\end{definition}

\begin{definition}[Pivot in terms of standardized statistics]
\label{def:pivot-std}
For $v=(v_1,v_2^\top)^\top\in\real^p$, redefine the pivot from Definition \ref{def:pivot} as
\begin{align*}
  \cP^{E_o}_{\mu_{n,j_o}}(v;\widetilde\Sigma)
  =
  \frac{
    \int_{-\infty}^{\sigma_{j_o}v_1+\mu_{n,j_o}}
    \phi\big(u;\mu_{n,j_o},\sigma_{j_o}^2\big)\,
    \Lambda^{E_o}\!\big(u,(\Gamma_{j_o})^{1/2}v_2+\nu_{n,j_o}\big)\,du
  }{
    \int_{-\infty}^{\infty}
    \phi\big(u;\mu_{n,j_o},\sigma_{j_o}^2\big)\,
    \Lambda^{E_o}\!\big(u,(\Gamma_{j_o})^{1/2}v_2+\nu_{n,j_o}\big)\,du
  }.
\end{align*}
For the asymptotic analysis, we prove validity of the pivot for any $E_o$, any $j_o\in E_o$, and covariance matrix $\widetilde\Sigma$. 
Henceforth, to simplify notation, we let
\begin{equation}
    \label{eq:pivot-std}
    \cP(v)= \cP^{E_o}_{\mu_{n,j_o}}(v;\widetilde\Sigma).
\end{equation} Note that $\cP(\zeta_n)=\Pivot^{E_o}_{\mu_{n,j_o}}\!\big(T_{n,j_o};T_{n,j_o}^\perp,\sigma_{j_o}^2\big)$.
\end{definition}

\begin{remark}\label{rem:gaussian:comp}
Since $Z_n\sim\mathcal N_p(0,I_p)$, Theorem~\ref{thm:pivot} gives, in the notation of \eqref{eq:pivot-std},
\begin{align*}
\cP(Z_n)
=\Pivot^{E_o}_{\mu_{n,j_o}}\!\big(\sigma_{j_o}Z_{n,1}+\mu_{n,j_o};\,\Gamma_{j_o}^{1/2}Z_{n,2}+\nu_{n,j_o},\,\sigma_{j_o}^2\big)
=\Pivot^{E_o}_{\mu_{n,j_o}}\!\big(T_{n,j_o};\,T_{n,j_o}^\perp,\,\sigma_{j_o}^2\big),
\end{align*}
and consequently $\cP(Z_n)\mid\{\widehat E_n=E_o\}\sim\mathrm{Unif}(0,1)$.
\end{remark}

\subsection{Proof of Theorem \ref{thm:asymp-main}}
 The proof relies on two main results, Propositions~\ref{prop:relative-diff} and~\ref{prop:diff-vanish} stated below, together with the supporting results in Appendices~\ref{app:aux-results} and~\ref{app:deriv-bounds}.

\begin{proof}[Proof of Theorem~\ref{thm:asymp-main}]
Since $\cP(Z_n)\Big\lvert \left\{\widehat E_n=E_o\right\}$ is exactly distributed as $\mathrm{Unif}(0,1)$, it suffices to show that, for any $h\in\mathcal C^3_b(\real)=\{g\in\mathcal C^3(\real): \sup_x|g^{(l)}(x)|<\infty,\ l=0,1,2,3\}$: 
\begin{align}
  \label{eq:target-wc}
  \lim_{n\to\infty}
  \left|
    \Ee\big[h\circ\cP(\zeta_n)\mid\{\widehat E_n=E_o\}\big]
    -\Ee\big[h\circ\cP(Z_n)\mid\{\widehat E_n=E_o\}\big]
  \right|=0,
\end{align}
Following arguments similar to those in \cite{randomizedresponse, bakshi2024selective}, this implies that $\cP(\zeta_n)$ converges weakly to $\mathrm{Unif}(0,1)$.

By Proposition~\ref{prop:relative-diff}, it further suffices to show that $\RD_n^{(1)}\to 0$ and $\RD_n^{(2)}\to0$, as defined therein, to establish \eqref{eq:target-wc}.

Our proof is finally complete by verifying in Proposition~\ref{prop:diff-vanish} that these two conditions hold under Assumptions~\ref{ass:linearizable} and~\ref{ass:score-deriv}.
\end{proof}

\begin{proposition}[Sufficient condition for weak convergence]
\label{prop:relative-diff}
Let $F$ and $\cP$ be defined according to Definition \ref{def:weights-std} and \ref{def:pivot-std}. 
For $h\in\mathcal C^3_b(\real)$, define
\begin{align*}
\begin{gathered}
  \RD_n^{(1)}
  =\frac{\big|\Ee[F(\gamma_n)]-\Ee[F(Z_n)]\big|}{\Ee[F(Z_n)]},\ \RD_n^{(2)}
  =\frac{\big|\Ee[h\circ\cP(\gamma_n)\,F(\gamma_n)]
           -\Ee[h\circ\cP(Z_n)\,F(Z_n)]\big|}{\Ee[F(Z_n)]}.
\end{gathered}
\end{align*}
If $\displaystyle\lim_{n  \to \infty} \RD_n^{(1)}=0$  and $\displaystyle\lim_{n  \to \infty} \RD_n^{(2)}=0$, then~\eqref{eq:target-wc} holds.
\end{proposition}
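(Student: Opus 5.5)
The plan is to write both conditional expectations in \eqref{eq:target-wc} as ratios, reweighted by the selection probability. Since the randomization is independent of the data given $T_n$, Lemma~\ref{lem:lambda_prob} gives $\Pp[\widehat E_n=E_o\mid \zeta_n]=F(\zeta_n)$. Hence
\[
\Ee\big[h\circ\cP(\zeta_n)\mid \widehat E_n=E_o\big]=\frac{\Ee[h\circ\cP(\zeta_n)F(\zeta_n)]}{\Ee[F(\zeta_n)]},
\]
and the same identity holds with $\zeta_n$ replaced by $Z_n$ for the Gaussian comparator (Remark~\ref{rem:gaussian:comp}).

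Writing $A_\zeta=\Ee[h\circ\cP(\zeta_n)F(\zeta_n)]$, $B_\zeta=\Ee[F(\zeta_n)]$, and analogously $A_Z,B_Z$, the elementary identity
\[
\frac{A_\zeta}{B_\zeta}-\frac{A_Z}{B_Z}=\frac{A_\zeta-A_Z}{B_Z}-\frac{A_\zeta}{B_\zeta}\cdot\frac{B_\zeta-B_Z}{B_Z},
\]
together with $|A_\zeta/B_\zeta|\le\|h\|_\infty$, bounds the target difference by
\[
\frac{|A_\zeta-A_Z|}{B_Z}+\|h\|_\infty\,\frac{|B_\zeta-B_Z|}{B_Z}.
\]
If $\zeta_n$ were exactly $\gamma_n$, these two terms would be $\RD_n^{(2)}$ and $\|h\|_\infty\RD_n^{(1)}$, and the claim would follow.

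It therefore remains to replace $\zeta_n=\gamma_n+\widetilde R_n$ by $\gamma_n$ in both numerator quantities, relative to $\Ee[F(Z_n)]$. For $F$, I would use that $\log F=f(\widetilde\Sigma^{1/2}\cdot+\widetilde\mu_n)$ is Lipschitz with constant $C_R$. This should follow from Proposition~\ref{prop:f_bounded_deriv}, using Assumption~\ref{ass:score-deriv} and the fact that $\snoise\ge\varepsilon$, so that $c_f^{(1)}$ bounds $\|\nabla f\|$, and the scaling by $\widetilde\Sigma^{1/2}$ contributes $\max\{\sigma_{j_o},\|\Gamma_{j_o}^{1/2}\|_\op\}$. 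Then
\[
|F(\gamma_n+\widetilde R_n)-F(\gamma_n)|\le F(\gamma_n)\big(e^{C_R\|R_n\|}-1\big).
\]
By Cauchy--Schwarz this is controlled by $\Ee[F(\gamma_n)^2]^{1/2}\,\Ee[(e^{C_R\|R_n\|}-1)^2]^{1/2}$. The second factor tends to zero, because $R_n=o_p(1)$ and the exponential moment condition in Assumption~\ref{ass:linearizable} gives uniform integrability. For $h\circ\cP$, I would similarly bound $|\cP(\gamma_n+\widetilde R_n)-\cP(\gamma_n)|$. Since $\cP$ is a ratio of integrals of $\phi\,\Lambda^{E_o}$, perturbing $v_2$ changes $\Lambda^{E_o}$ multiplicatively by at most $e^{\pm C_R\|R_n\|}$ uniformly in $u$, and perturbing $v_1$ moves the upper limit. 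Both effects are small as $R_n\to0$ in probability, and $h$ is bounded, so dominated convergence applies once the ratio normalization is handled.

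The main obstacle is the normalization by $\Ee[F(Z_n)]$, which may tend to zero because no lower bound on the selection probability is assumed. Absolute errors must therefore be converted into relative ones. My first attempt would use the log-Lipschitz property to compare $F(\gamma_n)$ with $F$ evaluated at a Gaussian point, through a change-of-measure or tilting bound of the form $\Ee[F(\gamma_n)^2]\lesssim \Ee[F(Z_n)]\cdot C$ with $C$ uniform in $n$. The sub-Gaussianity of $Y_{i,n}$ (constant $K_0$) and the factor $4C_R$ in the moment condition suggest that exactly such a bound is intended. If that fails, I would instead absorb the remainder into the Lindeberg argument that the paper uses for $\RD_n^{(1)},\RD_n^{(2)}$, treating $\widetilde R_n$ as an extra perturbation there.
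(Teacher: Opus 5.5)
Your reduction is sound and follows the paper's. You use the ratio representation via Proposition~\ref{prop:condexpectation}, split the target into a numerator difference plus $\|h\|_\infty$ times a denominator difference, and recognize that the claim does not follow from $\RD_n^{(1)},\RD_n^{(2)}\to0$ alone. The swap $\zeta_n=\gamma_n+\widetilde R_n\to\gamma_n$ is also needed, and the paper handles it in Proposition~\ref{prop:alr-remainder} under Assumptions~\ref{ass:linearizable}--\ref{ass:score-deriv}.

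The gap is in that swap, which is the only step with real content, and you leave it open. The bound you propose to aim for, $\Ee[F(\gamma_n)^2]\lesssim C\,\Ee[F(Z_n)]$, is of the wrong order. Inserted into your Cauchy--Schwarz step it yields only
\[
\frac{|\Ee F(\zeta_n)-\Ee F(\gamma_n)|}{\Ee F(Z_n)}\lesssim\frac{o(1)}{\Ee[F(Z_n)]^{1/2}},
\]
which is uncontrolled in exactly the regime $\Ee F(Z_n)\to0$ you are worried about. What you need is $\Ee[F(\gamma_n)^2]^{1/2}\lesssim\Ee[F(Z_n)]$. The same defect affects the $h\circ\cP$ term, where dominated convergence gives only an absolute $o(1)$, not one relative to $\Ee F(Z_n)$. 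The Lindeberg fallback is too vague to count as an argument.

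The missing idea is to compare numerator and denominator to the common deterministic scale $F(0)=\exp\{f(\widetilde\mu_n)\}$, rather than to each other through a change of measure.
\begin{itemize}
\item Log-Lipschitzness gives $F(z)\le F(0)e^{C_R\|z\|}$. Uniform sub-Gaussianity of $\gamma_n$ then gives $\Ee[F(\gamma_n)^2]^{1/2}\le F(0)\sup_n\Ee[e^{2C_R\|\gamma_n\|}]^{1/2}$.
\item For the pivot term, $\Ee\bigl[|h\circ\cP(\zeta_n)-h\circ\cP(\gamma_n)|\,F(\zeta_n)\bigr]\le F(0)\,\Ee[e^{2C_R\|\zeta_n\|}]^{1/2}\,o(1)$. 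This uses boundedness of $h$ and $\cP(\zeta_n)-\cP(\gamma_n)\to0$ in probability.
\item In the other direction, the mean-value theorem and Jensen give $\Ee[F(Z_n)]\ge F(0)e^{-C_R\sqrt p}$.
\end{itemize}
The unknown factor $F(0)$ then cancels in every ratio. This is how Proposition~\ref{prop:alr-remainder} and the bound~\eqref{bound:den} work. The paper normalizes the swap by $\Ee F(\zeta_n)$, which costs an extra uniform-integrability step for $\|\zeta_n\|$; your normalization by the exactly Gaussian $\Ee F(Z_n)$ avoids that step.

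For this particular mechanism your obstacle does not actually arise. Since $|s_E-\bar s|\le\sqrt{|\Ek|}\,\snoise$, every selection weight satisfies $\Lambda^{E_o}\ge|\Ek|^{-1}e^{-2\sqrt{|\Ek|}/\tau}$. So $\Ee F(Z_n)$ is bounded away from zero uniformly in $n$, and your Cauchy--Schwarz bound with the trivial estimate $F\le1$ would already close the argument.
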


\begin{proof}[Proof of Proposition~\ref{prop:relative-diff}]
By Proposition~\ref{prop:condexpectation}, the expression whose limit appears on the left-hand side of \eqref{eq:target-wc} is equal to
\begin{align*}
  \left|
    \frac{\Ee[h\circ\cP(\zeta_n)\,F(\zeta_n)]}{\Ee[F(\zeta_n)]}
    -\frac{\Ee[h\circ\cP(Z_n)\,F(Z_n)]}{\Ee[F(Z_n)]}
  \right|.
\end{align*}

By the triangle inequality, we have
\begin{align*}
  \left|
    \frac{\Ee[h\circ\cP(\zeta_n)\,F(\zeta_n)]}{\Ee[F(\zeta_n)]}
    -\frac{\Ee[h\circ\cP(Z_n)\,F(Z_n)]}{\Ee[F(Z_n)]}
  \right|
  &\le
  \left|
    \frac{\Ee[h\circ\cP(\zeta_n)\,F(\zeta_n)]}{\Ee[F(\zeta_n)]}
    -\frac{\Ee[h\circ\cP(\gamma_n)\,F(\gamma_n)]}{\Ee[F(\gamma_n)]}
  \right|\\
  &\;+
  \left|
    \frac{\Ee[h\circ\cP(\gamma_n)\,F(\gamma_n)]}{\Ee[F(\gamma_n)]}
    -\frac{\Ee[h\circ\cP(Z_n)\,F(Z_n)]}{\Ee[F(Z_n)]}
  \right|,
\end{align*}
where the first term on the right-hand side converges to $0$ by Prop.~\ref{prop:alr-remainder}.
It remains to show that the second term on the right-hand side converges to $0$.

Adding and subtracting
$\Ee[h\circ\cP(\gamma_n)\,F(\gamma_n)]/\Ee[F(Z_n)]$ in the second term
and applying the triangle inequality yields the bound
\begin{align*}
  \left|
    \frac{\Ee[h\circ\cP(\gamma_n)\,F(\gamma_n)]}{\Ee[F(\gamma_n)]}
    -\frac{\Ee[h\circ\cP(Z_n)\,F(Z_n)]}{\Ee[F(Z_n)]}
  \right|
  &\le
  \underbrace{
    \left|
      \frac{\Ee[h\circ\cP(\gamma_n)\,F(\gamma_n)]}{\Ee[F(\gamma_n)]}
      -\frac{\Ee[h\circ\cP(\gamma_n)\,F(\gamma_n)]}{\Ee[F(Z_n)]}
    \right|
  }_{(\mathrm{term}_1)}\\
  &+ \underbrace{
    \left|
      \frac{\Ee[h\circ\cP(\gamma_n)\,F(\gamma_n)]}{\Ee[F(Z_n)]}
      -\frac{\Ee[h\circ\cP(Z_n)\,F(Z_n)]}{\Ee[F(Z_n)]}
    \right|
  }_{(\mathrm{term}_2)}.
\end{align*}
For $h\in\mathcal C^3_b(\real)$, write
$\|h\|_{\mathcal C^3_b}=\max_{0\le l\le 3}\sup_{x\in\real}|h^{(l)}(x)|$. 
To conclude the proof, note that
\begin{align*}
  (\mathrm{term}_1)
  &=\frac{\Ee[h\circ\cP(\gamma_n)\,F(\gamma_n)]}{\Ee[F(\gamma_n)]}
   \times
   \frac{\big|\Ee[F(\gamma_n)]-\Ee[F(Z_n)]\big|}{\Ee[F(Z_n)]}
   \le\|h\|_{\mathcal C^3_b}\times\RD_n^{(1)},\\
  (\mathrm{term}_2)
  &=\RD_n^{(2)}.
\end{align*}
Here, in bounding $(\mathrm{term}_1)$, we use Proposition~\ref{prop:condexpectation} to note that
$\dfrac{\Ee[h\circ\cP(\gamma_n)\,F(\gamma_n)]}{\Ee[F(\gamma_n)]}
=\Ee[h\circ\cP(\gamma_n)\mid\{\widehat E_n=E_o\}]$, which is trivially bounded by $\|h\|_{\mathcal C^3_b}$. 

This yields
$$
\left|
    \frac{\Ee[h\circ\cP(\zeta_n)\,F(\zeta_n)]}{\Ee[F(\zeta_n)]}
    -\frac{\Ee[h\circ\cP(Z_n)\,F(Z_n)]}{\Ee[F(Z_n)]}
  \right| \le \|h\|_{\mathcal C^3_b}\times \RD_n^{(1)}+\RD_n^{(2)},
$$
which, in limit, equals $0$ if $\RD_n^{(1)}\to0$ and $\RD_n^{(2)}\to 0$.
\end{proof}

\begin{proposition}[]
\label{prop:diff-vanish}
Let $\RD_n^{(1)}$ and $\RD_n^{(2)}$ be defined according to Proposition \ref{prop:relative-diff}.
Under Assumptions~\ref{ass:linearizable} and~\ref{ass:score-deriv}, we have
   $$
   \lim_{n  \to \infty} \RD_n^{(1)}=0, \quad \lim_{n  \to \infty}  \RD_n^{(2)}=0. 
   $$ 
\end{proposition}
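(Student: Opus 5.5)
We will prove both limits by a Lindeberg replacement argument: swap the summands $\widetilde a_{i,n}Y_{i,n}$ in $\gamma_n$ for their Gaussian counterparts $\widetilde a_{i,n}Z_{i,n}$ one at a time. The resulting discrepancy has to be controlled \emph{relative} to $\Ee[F(Z_n)]$, which may be exponentially small in $n$ because $\widetilde\mu_n$ grows like $\sqrt n$. So an absolute bound will not do. The key device is to bound every third-order remainder by a multiple of $F$ itself, evaluated at a nearby point. Write $g_1(v)=F(v)$ and $g_2(v)=h\circ\cP(v)\,F(v)$.

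\textbf{Step 1: derivative bounds relative to $F$.} Since $F=\exp\{f(\widetilde\Sigma^{1/2}v+\widetilde\mu_n)\}$, each derivative of $F$ is $F$ times a polynomial in the derivatives of $f$. I will use Assumption~\ref{ass:score-deriv} and the scaling $\snoise\ge\varepsilon$ to show that $f$ has globally bounded derivatives of orders $1,2,3$, with constants $c_f^{(l)}$ uniform in $n$ (this is the content referenced as Proposition~\ref{prop:f_bounded_deriv}). For $f$ the concern is the quotient $s_E/\snoise$: its derivatives involve $s_E\nabla\snoise/\snoise^2$, and these stay bounded because $|s_E-\bar s|\le C\snoise$ and the shift by $\bar s$ cancels in the softmax. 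This yields
\[
\|\nabla^{(l)}F(v)\|_{\op}\le C\,F(v),\qquad l\le 3,
\]
and, because $f$ is Lipschitz with constant $c_f^{(1)}\max\{\sigma_{j_o},\|\Gamma_{j_o}^{1/2}\|_{\op}\}=C_R$, also $F(v+w)\le e^{C_R\|w\|}F(v)$.

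The pivot $\cP(v)$ is a ratio of integrals of $\phi\,\Lambda^{E_o}$. Its derivatives in $v_1$ involve $\phi\Lambda/\int\phi\Lambda$, and its derivatives in $v_2$ involve conditional covariances of $\nabla_2 f$. Both are bounded by the Lipschitz bound above together with Gaussian tails, and the Gaussian tail factors in $v_1$ are absorbed using boundedness of $h$ and its derivatives. Combining these, $\|\nabla^{(3)} g_m(v)\|_{\op}\le C'F(v)$ for $m=1,2$.

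\textbf{Step 2: telescoping.} Let $W_i=n^{-1/2}\sum_{l<i}\widetilde a_{l,n}Z_{l,n}+n^{-1/2}\sum_{l>i}\widetilde a_{l,n}Y_{l,n}$, so that $\gamma_n$ and $Z_n$ are the two ends of the chain. Taylor-expand $g_m(W_i+n^{-1/2}\widetilde a_{i,n}Y_{i,n})$ and $g_m(W_i+n^{-1/2}\widetilde a_{i,n}Z_{i,n})$ to second order around $W_i$. The first- and second-order terms cancel by independence and matching first two moments. The remainder is at most
\[
C' n^{-3/2}\,\Ee\big[\|\xi_i\|^3 e^{C_R\|\widetilde a_{i,n}\|\|\xi_i\|/\sqrt n}F(W_i)\big],\qquad \xi_i\in\{Y_{i,n},Z_{i,n}\},
\]
and by independence and uniform sub-Gaussianity this is $\le C''n^{-3/2}\Ee[F(W_i)]$.

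\textbf{Step 3: the main obstacle, comparing $\Ee[F(W_i)]$ to $\Ee[F(Z_n)]$.} Summing over $i$ gives the bound $C''n^{-1/2}\max_i\Ee[F(W_i)]/\Ee[F(Z_n)]$. It therefore remains to show that this ratio is $O(1)$ uniformly in $i$. I would first apply the same replacement scheme to $F$ itself, now with remainders carrying only the $F$-relative factor $Cn^{-3/2}$. This yields a discrete Gronwall-type inequality $\Ee[F(W_{i})]\le(1+Cn^{-3/2})\Ee[F(W_{i+1})]$, hence $\Ee[F(W_i)]\le e^{C/\sqrt n}\Ee[F(Z_n)]$. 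This gives both $\RD_n^{(1)}=O(n^{-1/2})$ and $\RD_n^{(2)}=O(n^{-1/2})$.

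I expect Step 3, together with the relative derivative bound for $h\circ\cP$ in Step 1, to be the delicate part, since everything hinges on the errors being multiplicative in $F$ rather than additive. The exponential-moment condition on $R_n$ in Assumption~\ref{ass:linearizable}, with constant $4C_R$, is not needed here. It is reserved for Proposition~\ref{prop:alr-remainder}, which handles the passage from $\zeta_n$ to $\gamma_n$.
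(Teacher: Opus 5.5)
Your proof is correct, but its key step differs from the paper's. The paper never compares $\Ee[F(W_i)]$ with $\Ee[F(Z_n)]$. Instead it anchors both numerator and denominator at the single value $F(0)=\exp\{f(\widetilde\mu_n)\}$. Your own inequality $F(v+w)\le e^{C_R\|w\|}F(v)$, applied with one of the two points at the origin, gives $e^{-C_R\|z\|}F(0)\le F(z)\le e^{C_R\|z\|}F(0)$. The upper half, with sub-Gaussian exponential moments, bounds the Lindeberg numerators by $C_{\mathrm{Lin}}n^{-1/2}F(0)$ (Corollary~\ref{cor:lindeberg-final}). The lower half, with Jensen, gives $\Ee[F(Z_n)]\ge e^{-C_R\sqrt p}F(0)$. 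So your Step~3 collapses to one line in the paper's route.

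What your Gronwall chain buys is a sharper constant. Your ratio $\max_i\Ee[F(W_i)]/\Ee[F(Z_n)]$ is at most $1+O(n^{-1/2})$. Moreover, $C_R$ enters exponentially only through $e^{C_R B\|\xi_i\|/\sqrt n}$, so for large $n$ your constant is polynomial in the $c_f^{(l)}=O(\tau^{-l})$. By contrast, the paper's $C_{\mathrm{Lin}}/c_0$ grows like $\exp\{O(C_R^2)\}$ with $C_R=O(\tau^{-1})$, which matters in the small-$\tau$, low-regret regime the paper recommends. Your factorization $F(W_i+th)\le e^{C_R\|h\|}F(W_i)$, followed by independence, is also the clean way to bound the Taylor remainder. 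Lemma~\ref{lem:lindeberg-swap} instead pulls $m_3$ out of an expectation whose other factor, $\nabla^{(3)}g(\xi^\alpha_{i,n})$, still depends on $Y_{i,n}$.

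One correction to Step~3. $W_i$ and $W_{i+1}$ are not related by a moment-matched swap: passing from one to the other deletes the $(i+1)$-th $Y$-summand and inserts the $i$-th $Z$-summand. The second-order terms therefore cancel only when $\widetilde a_{i,n}\widetilde a_{i,n}^\top=\widetilde a_{i+1,n}\widetilde a_{i+1,n}^\top$. In general you get a per-step factor $1+O(1/n)$, not $1+O(n^{-3/2})$.

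To recover the sharper rate, run the recursion on the chain $S_i=W_i+n^{-1/2}\widetilde a_{i,n}Z_{i,n}$, which satisfies $S_{i-1}=W_i+n^{-1/2}\widetilde a_{i,n}Y_{i,n}$, $S_0=\gamma_n$ and $S_n=Z_n$. The pure swap at position $i$ gives $\Ee[F(S_{i-1})]\le\Ee[F(S_i)]+Cn^{-3/2}\Ee[F(W_i)]$. A second-order expansion of $S_i$ around $W_i$ gives $\Ee[F(W_i)]\le(1-C/n)^{-1}\Ee[F(S_i)]$. Together these yield a factor $1+C'n^{-3/2}$ per step, and hence $\max_i\Ee[F(W_i)]\le(1+O(n^{-1/2}))\,\Ee[F(Z_n)]$. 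Even the cruder $1+O(1/n)$ per step already gives an $O(1)$ ratio, which is all the proposition needs.

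A minor point in Step~1: the uniform bounds on $\partial_{v_1}^k\cP$ come from the tilted density itself, not from boundedness of $h$. The ratio $\Lambda^{E_o}(u,\cdot)/\Lambda^{E_o}(u_0,\cdot)\ge e^{-L|u-u_0|}$ keeps $\phi\Lambda^{E_o}/\int\phi\Lambda^{E_o}$ and its $u$-derivatives bounded. The function $h$ enters only through the chain rule.
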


\begin{proof}[Proof of Proposition~\ref{prop:diff-vanish}]
We obtain upper bounds on the numerators of $\RD_n^{(1)}$ and $\RD_n^{(2)}$ in \eqref{bounds:num}, and a lower bound on their common denominator in \eqref{bound:den}. 
Combining these bounds yields
\begin{align*}
  \RD_n^{(1)}\le\frac{C_{\mathrm{Lin}}}{c_0\sqrt n}\to0,
  \qquad
  \RD_n^{(2)}\le\frac{C_{\mathrm{Lin}}}{c_0\sqrt n}\to0.
\end{align*}

\noindent\emph{Bounds on the numerators.} By Corollary~\ref{cor:lindeberg-final}, we have 
\begin{equation}
\begin{aligned}
  \big|\Ee[F(\gamma_n)]-\Ee[F(Z_n)]\big|
  &\le\frac{C_{\mathrm{Lin}}}{\sqrt n}\exp\{f(\widetilde\mu_n)\},\\
  \big|\Ee[h\circ\cP(\gamma_n)\,F(\gamma_n)]
       -\Ee[h\circ\cP(Z_n)\,F(Z_n)]\big|
  &\le\frac{C_{\mathrm{Lin}}}{\sqrt n}\exp\{f(\widetilde\mu_n)\}.
\end{aligned}
\label{bounds:num}
\end{equation}

\medskip\noindent\emph{Bound on the denominator.} 
Note $Z_n\sim\mathcal N_p(0,I_p)$ exactly and 
$F(Z_n)=\exp\{f(\widetilde\Sigma^{1/2}Z_n+\widetilde\mu_n)\}$ by~\eqref{eq:F}. Applying the mean-value theorem to $f$ along the segment joining
$\widetilde\mu_n$ and $\widetilde\Sigma^{1/2}Z_n+\widetilde\mu_n$, and using
$\|\nabla f\|_{\op}\le c_f^{(1)}$ from~\eqref{eq:f-bound} together with
$\|\widetilde\Sigma^{1/2}Z_n\|\le\lambda_{\max}(\widetilde\Sigma)^{1/2}\|Z_n\|$,
gives
\begin{align*}
  f\big(\widetilde\Sigma^{1/2}Z_n+\widetilde\mu_n\big)
  \ge f(\widetilde\mu_n)
     -c_f^{(1)}\lambda_{\max}(\widetilde\Sigma)^{1/2}\|Z_n\|.
\end{align*}
Exponentiating, taking expectations, and applying Jensen's inequality to the
convex map $x\mapsto e^{-c_f^{(1)}\lambda_{\max}(\widetilde\Sigma)^{1/2}x}$
then yields
\begin{equation}
\begin{aligned}
  \Ee[F(Z_n)]
  &\ge\exp\{f(\widetilde\mu_n)\}\,
       \exp\!\big\{-c_f^{(1)}\lambda_{\max}(\widetilde\Sigma)^{1/2}\,
       \Ee\|Z_n\|\big\}\\
   &\ge\exp\{f(\widetilde\mu_n)\}\,
       \exp\!\big\{-c_f^{(1)}\lambda_{\max}(\widetilde\Sigma)^{1/2}\sqrt p\big\},
\end{aligned}
\label{bound:den}
\end{equation}
using $\Ee\|Z_n\|\le(\Ee\|Z_n\|^2)^{1/2}=\sqrt p$. 
Hence
$\Ee[F(Z_n)]\ge c_0\exp\{f(\widetilde\mu_n)\}$ for a strictly positive
constant $c_0$ independent of $n$. 
\end{proof}

\subsection{Auxiliary results}
\label{app:aux-results}

\begin{proposition}
\label{prop:alr-remainder}
Let $F$ and $\cP$ be defined according to Definition \ref{def:weights-std} and \ref{def:pivot-std}. 
Under Assumptions~\ref{ass:linearizable} and~\ref{ass:score-deriv}, for
every $h\in\mathcal C^3_b(\real)$, we have
\begin{align}
  \label{eq:alr-swap}
  \lim_{n\to\infty}
  \left|
    \frac{\Ee[h\circ\cP(\zeta_n)\,F(\zeta_n)]}{\Ee[F(\zeta_n)]}
    -\frac{\Ee[h\circ\cP(\gamma_n)\,F(\gamma_n)]}{\Ee[F(\gamma_n)]}
  \right|=0.
\end{align}
\end{proposition}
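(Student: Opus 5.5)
We need to show that replacing $\zeta_n=\gamma_n+\widetilde R_n$ by $\gamma_n$ changes both the numerator $\Ee[h\circ\cP(\cdot)F(\cdot)]$ and the denominator $\Ee[F(\cdot)]$ by $o(\exp\{f(\widetilde\mu_n)\})$. We will then compare with a lower bound of order $\exp\{f(\widetilde\mu_n)\}$ for $\Ee[F(\gamma_n)]$. The ratio difference is controlled as in the proof of Proposition~\ref{prop:relative-diff}:
\[
\left|\frac{A_\zeta}{B_\zeta}-\frac{A_\gamma}{B_\gamma}\right|\le \frac{|A_\zeta-A_\gamma|}{B_\gamma}+\|h\|_{\mathcal C^3_b}\frac{|B_\zeta-B_\gamma|}{B_\gamma},
\]
where $A_\zeta/B_\zeta$ is a conditional expectation of $h\circ\cP$, hence bounded by $\|h\|_{\mathcal C^3_b}$. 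So it suffices to bound $|B_\zeta-B_\gamma|$ and $|A_\zeta-A_\gamma|$ relative to $\exp\{f(\widetilde\mu_n)\}$, and to lower bound $B_\gamma$ by $c\exp\{f(\widetilde\mu_n)\}$.

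\textbf{Step 1 (lower bound on $B_\gamma$).} Mimic \eqref{bound:den}. Write $\log F(v)=f(\widetilde\Sigma^{1/2}v+\widetilde\mu_n)$, which is $c_f^{(1)}\lambda_{\max}(\widetilde\Sigma)^{1/2}$-Lipschitz in $v$ by \eqref{eq:f-bound} (Proposition~\ref{prop:f_bounded_deriv}). Then Jensen gives $\Ee[F(\gamma_n)]\ge \exp\{f(\widetilde\mu_n)\}\exp\{-C\,\Ee\|\gamma_n\|\}$. Here $\Ee\|\gamma_n\|\le\sqrt p$, since $\Var(\gamma_n)=I_p$ by the normalization $n^{-1}\sum\widetilde a_{i,n}\widetilde a_{i,n}^\top=I_p$.

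\textbf{Step 2 (denominator difference).} By the same Lipschitz bound, with $L=c_f^{(1)}\max\{\sigma_{j_o},\|\Gamma_{j_o}^{1/2}\|_{\op}\}=C_R$ (blockwise, $\lambda_{\max}(\widetilde\Sigma)^{1/2}$ is exactly this max),
\[
|F(\gamma_n+\widetilde R_n)-F(\gamma_n)|\le F(\gamma_n)\big(e^{C_R\|R_n\|}-1\big),
\]
using $|e^{x}-1|\le |x|e^{|x|}$ and $e^{x}-1\le$ that bound. Also $F(\gamma_n)\le \exp\{f(\widetilde\mu_n)\}e^{C_R\|\gamma_n\|}$. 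Cauchy--Schwarz then gives
\[
|B_\zeta-B_\gamma|\le e^{f(\widetilde\mu_n)}\big(\Ee e^{2C_R\|\gamma_n\|}\big)^{1/2}\big(\Ee(e^{C_R\|R_n\|}-1)^2\big)^{1/2}.
\]
The first factor is bounded uniformly in $n$ by sub-Gaussianity of $Y_{i,n}$ and $\sup\|a_{i,n}\|_{\op}<\infty$ (the MGF of $\gamma_n$ is at most $e^{K^2\|s\|^2/2}$, and we sum over a net of directions). For the second, note $(e^{C_R\|R_n\|}-1)^2\to0$ in probability since $R_n=o_p(1)$. It is uniformly integrable because $\sup_n\Ee e^{4C_R\|R_n\|}<\infty$, which is exactly why the exponent $4C_R$ appears in Assumption~\ref{ass:linearizable}. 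Hence this factor tends to $0$ by Vitali.

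\textbf{Step 3 (numerator difference).} Split
\[
A_\zeta-A_\gamma=\Ee[h\circ\cP(\zeta_n)(F(\zeta_n)-F(\gamma_n))]+\Ee[(h\circ\cP(\zeta_n)-h\circ\cP(\gamma_n))F(\gamma_n)].
\]
The first term is bounded by $\|h\|_{\mathcal C^3_b}$ times the Step 2 bound. The second is the main obstacle, because $\cP$ need not be globally Lipschitz. I would handle it with the bound $|h\circ\cP(\zeta_n)-h\circ\cP(\gamma_n)|\le 2\|h\|_{\mathcal C^3_b}\wedge \|h'\|_\infty|\cP(\zeta_n)-\cP(\gamma_n)|$ and Cauchy--Schwarz against $e^{C_R\|\gamma_n\|}$. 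This reduces the problem to showing $\cP(\zeta_n)-\cP(\gamma_n)\to0$ in probability.

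For that, use continuity of $\cP$ together with local uniform continuity. Since $\gamma_n\xrightarrow{d}\mathcal N_p(0,I_p)$ is tight, it suffices to have equicontinuity of $v\mapsto\cP(v)$ on compacts, uniformly in $n$. This follows from the derivative bounds on $f$. The integrand ratio $\Lambda^{E_o}(u,\Gamma^{1/2}v_2+\nu)/\Lambda^{E_o}(u,\Gamma^{1/2}v_2'+\nu)$ lies in $e^{\pm C_R\|v_2-v_2'\|}$. The upper limit changes by $\sigma_{j_o}|v_1-v_1'|$, and the conditional density is bounded relative to the Gaussian density up to the same exponential factors. Bounded convergence then finishes the second term. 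Combining Steps 1--3 yields \eqref{eq:alr-swap}.
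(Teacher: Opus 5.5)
Your proof is correct. Its skeleton matches the paper's: a ratio triangle inequality, a Jensen lower bound on a denominator, and exponential moments of $R_n$ together with sub-Gaussianity of $\gamma_n$ to control increments. The difference is in how you handle the numerator increment.

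\textbf{The paper's route.} The paper treats $g_2=h\circ\cP\cdot F$ as a single function. It applies the mean-value theorem with the global gradient bound $\|\nabla g_2(z)\|_{\op}\le \widetilde C_2^{(1)}F(z)$ from Proposition~\ref{prop:deriv-bounds}(i), then H\"older with exponents $(\tfrac12,\tfrac14,\tfrac14)$. Everything reduces to $\Ee\|R_n\|^4\to0$, giving an explicit bound of order $(\Ee\|R_n\|^4)^{1/4}$.

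\textbf{Your route.} You split off the $F$-increment and control it exactly through $|F(\gamma+R)-F(\gamma)|\le F(\gamma)(e^{C_R\|R\|}-1)$ and Vitali. (The clean inequality behind this is $|e^{x}-1|\le e^{|x|}-1$; your stated justification is garbled.) You then handle the $h\circ\cP$-increment qualitatively, via boundedness, convergence in probability, and a modulus of continuity of $\cP$ that is uniform in $n$.

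\textbf{What each buys.} Your route needs only first-order control of $f$. It sidesteps the global higher-derivative bounds on $\cP$, whose $z_1$-part is the least transparent piece of Proposition~\ref{prop:deriv-bounds}. Lower-bounding $\Ee[F(\gamma_n)]$ rather than $\Ee[F(\zeta_n)]$ is also slightly cleaner: $\Var(\gamma_n)=I_p$ gives the bound for every $n$, without the uniform-integrability argument for $\|\zeta_n\|$ and the ``$n$ large'' caveat. The price is that you lose a rate for the $\cP$-term.

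\textbf{$\cP$ is in fact globally Lipschitz.} Your worry that $\cP$ ``need not be globally Lipschitz'' is unfounded, and your own estimates show it. The $v_2$-bound $\cP(v_1,v_2)/\cP(v_1,v_2')\in e^{\pm 2C_R\|v_2-v_2'\|}$ is already global, so $\cP$ is $2C_R$-Lipschitz in $v_2$. For the $v_1$-direction, $\Lambda^{E_o}(u',\cdot)\ge \Lambda^{E_o}(u,\cdot)e^{-c_f^{(1)}|u-u'|}$ and Jensen give
$p_{\mathrm{sel}}(u\mid v_2)\le \sigma_{j_o}^{-1}\varphi(x)\exp\{c_f^{(1)}\sigma_{j_o}(|x|+1)\}$, where $x=(u-\mu_{n,j_o})/\sigma_{j_o}$ and $\varphi$ is the standard normal density. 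This bound is uniform in $u$, $v_2$ and $n$. So the tightness/compact-set detour is harmless but unnecessary, and using the global bound would recover the paper's quantitative rate.
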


\begin{proof}[Proof of Proposition~\ref{prop:alr-remainder}]
Let $g_1(z)=F(z)$ and $g_2(z)=h\circ\cP(z)\cdot F(z)$.
%as in
%Proposition~\ref{prop:deriv-bounds}, and that
%$C_R=c_f^{(1)}\lambda_{\max}(\widetilde\Sigma)^{1/2}$ is the
%first-order derivative rate appearing in
%Proposition~\ref{prop:deriv-bounds}\,(i); the exponential-moment
%condition of Assumption~\ref{ass:linearizable} is imposed at rate
%$4C_R$. 
Adding and subtracting $\Ee[g_2(\gamma_n)]/\Ee[F(\zeta_n)]$
and applying the triangle inequality yields \begin{align*}
  \left|
    \frac{\Ee[g_2(\zeta_n)]}{\Ee[F(\zeta_n)]}
    -\frac{\Ee[g_2(\gamma_n)]}{\Ee[F(\gamma_n)]}
  \right|
  \le
  \underbrace{
    \frac{\Ee\big[|g_2(\zeta_n)-g_2(\gamma_n)|\big]}{\Ee[F(\zeta_n)]}
  }_{(\mathrm I)}
  +
  \underbrace{
    \frac{\Ee[g_2(\gamma_n)]}{\Ee[F(\gamma_n)]}\cdot
    \frac{\big|\Ee[F(\gamma_n)]-\Ee[F(\zeta_n)]\big|}{\Ee[F(\zeta_n)]}
  }_{(\mathrm{II})}.
\end{align*}

\medskip\noindent\textit{Increment bound.}
By the mean-value theorem applied along the segment joining $\gamma_n$ and
$\zeta_n=\gamma_n+\widetilde R_n$, together with the bound on the first-order derivative
$$\|\nabla g_j(z)\|_{\op}\le\widetilde C_j^{(1)}
\exp\{f(\widetilde\mu_n)+C_R\|z\|\}$$
from Proposition~\ref{prop:deriv-bounds} (i), we obtain
\begin{align*}
  |g_j(\zeta_n)-g_j(\gamma_n)|
  \le\widetilde C_j^{(1)}\exp\{f(\widetilde\mu_n)\}\,
     \exp\!\big\{C_R(\|\gamma_n\|+\|\widetilde R_n\|)\big\}\,
     \|\widetilde R_n\|,
  \qquad j=1,2.
\end{align*}
After taking expectations on both sides, we apply H\"older's inequality to the bound on the right-hand side with exponents $(\tfrac12,\tfrac14,\tfrac14)$ to obtain
\begin{align*}
  \Ee\big[|g_j(\zeta_n)-g_j(\gamma_n)|\big]
  \le\widetilde C_j^{(1)}\exp\{f(\widetilde\mu_n)\}\,
     \Ee\big[e^{2C_R\|\gamma_n\|}\big]^{1/2}\,
     \Ee\big[e^{4C_R\|\widetilde R_n\|}\big]^{1/4}\,
     \Ee\big[\|\widetilde R_n\|^4\big]^{1/4}.
\end{align*}

Next, we bound the three moments on the right-hand side of the previous display.
\begin{enumerate}
    \item Since $\gamma_n=\frac{1}{\sqrt n}\sum_i\widetilde a_{i,n}Y_{i,n}$ is uniformly sub-Gaussian under Assumption~\ref{ass:linearizable}, we have
$\sup_n\Ee[e^{2C_R\|\gamma_n\|}]<\infty$.
    \item Since
$\|\widetilde R_n\|=\|R_n\|$ by Proposition~\ref{prop:clt-decomp}, and the moment bound from Assumption~\ref{ass:linearizable}\,(i), we have 
\begin{align*}
  \Ee\big[e^{4C_R\|\widetilde R_n\|}\big]
  =\Ee\big[e^{4C_R\|R_n\|}\big]
  \le\sup_n\Ee\big[e^{4C_R\|R_n\|}\big]<\infty.
\end{align*}
   \item Since
$\sup_n\Ee[e^{4C_R\|R_n\|}]<\infty$ and $R_n=o_p(1)$ under
Assumption~\ref{ass:linearizable}, the family $\{\|R_n\|^4\}_n$ is
uniformly integrable; combined with $\|R_n\|\xrightarrow{p}0$ this gives
$\Ee[\|\widetilde R_n\|^4]=\Ee[\|R_n\|^4]\to0$.
\end{enumerate}

Let $\eta_n:=\Ee[|\widetilde R_n|^4]^{1/4}$. 
Absorbing the bounds in (i) and (ii) into a constant $C<\infty$, independent of $n$, we thus obtain
\begin{align}
  \label{eq:mvt-final}
  \Ee\big[|g_j(\zeta_n)-g_j(\gamma_n)|\big]
  \le\widetilde C_j^{(1)}\,C\,\exp\{f(\widetilde\mu_n)\}\,\eta_n,
  \qquad j=1,2.
\end{align}

\medskip\noindent\textit{Lower bound on denominator.}
By \eqref{eq:F}, $F(\zeta_n)=\exp\{f(\widetilde\Sigma^{1/2}\zeta_n+\widetilde\mu_n)\}$.
Applying the mean-value theorem to $f$ along the segment from $\widetilde\mu_n$ to $\widetilde\Sigma^{1/2}\zeta_n+\widetilde\mu_n$, together with the gradient bound $\|\nabla f\|_{\op}\le c_f^{(1)}$ from Proposition~\ref{prop:f_bounded_deriv},
\[
f\big(\widetilde\Sigma^{1/2}\zeta_n+\widetilde\mu_n\big)
\ge f(\widetilde\mu_n)-c_f^{(1)}\big\|\widetilde\Sigma^{1/2}\zeta_n\big\|
\ge f(\widetilde\mu_n)-C_R\|\zeta_n\|,
\]
where $C_R=c_f^{(1)}\lambda_{\max}(\widetilde\Sigma)^{1/2}$ as in Proposition~\ref{prop:deriv-bounds}. Exponentiating,
\[
F(\zeta_n)\ge\exp\{f(\widetilde\mu_n)\}\exp\{-C_R\|\zeta_n\|\}.
\]Then, taking expectations and applying Jensen's inequality to the convex map $x\mapsto e^{-C_R x}$, we obtain
$$
\Ee[F(\zeta_n)]
  \ge\exp\{f(\widetilde\mu_n)\}\exp\{-C_R\,\Ee\|\zeta_n\|\}.
$$
Since $\zeta_n=\gamma_n+\widetilde R_n\xrightarrow{d}\mathcal N_p(0,I_p)$ and the exponential-moment bounds in Assumption~\ref{ass:linearizable} (sub-Gaussianity of $\gamma_n$ from (ii); $\sup_n\Ee[e^{4C_R\|R_n\|}]<\infty$ from (i)) make $\{\|\zeta_n\|\}$ uniformly integrable, we have $\Ee\|\zeta_n\|\to\Ee\|Z\|\le\sqrt p$ for $Z\sim\mathcal N_p(0,I_p)$, using $\Ee\|Z\|\le(\Ee\|Z\|^2)^{1/2}=\sqrt p$. Hence $\Ee\|\zeta_n\|\le\sqrt p+o(1)$, and for all $n$ large enough that the $o(1)$ term is at most $1$,
\begin{align}
  \label{eq:Fn-lower-final}
  \Ee[F(\zeta_n)]\ge c_0\exp\{f(\widetilde\mu_n)\},
  \qquad c_0:=\exp\{-C_R(\sqrt p+1)\}>0.
\end{align}

\medskip\noindent\textit{Bounding $(\mathrm I)$ and $(\mathrm{II})$.}
For $(\mathrm I)$, combining~\eqref{eq:mvt-final} for $j=2$
with~\eqref{eq:Fn-lower-final} yields:
\begin{align*}
  (\mathrm I)
  \le\frac{\widetilde C_2^{(1)}C\exp\{f(\widetilde\mu_n)\}\,\eta_n}
          {c_0\exp\{f(\widetilde\mu_n)\}}
  =\frac{\widetilde C_2^{(1)}C}{c_0}\,\eta_n\to0.
\end{align*}
For $(\mathrm{II})$, since $|h\circ\cP|\le\|h\|_{\mathcal C^3_b}$ and
$g_2=(h\circ\cP) \cdot F$, we have
$$\Ee[g_2(\gamma_n)]/\Ee[F(\gamma_n)]\le\|h\|_{\mathcal C^3_b}.$$
Now, combining~\eqref{eq:mvt-final} for $j=1$ with~\eqref{eq:Fn-lower-final} yields
\begin{align*}
  (\mathrm{II})
  \le\|h\|_{\mathcal C^3_b}\cdot
     \frac{\widetilde C_1^{(1)}C\exp\{f(\widetilde\mu_n)\}\,\eta_n}
          {c_0\exp\{f(\widetilde\mu_n)\}}
  =\frac{\|h\|_{\mathcal C^3_b}\,\widetilde C_1^{(1)}C}{c_0}\,\eta_n\to0.
\end{align*}
\end{proof}

\begin{proposition}
\label{prop:condexpectation}
For every $h\in\mathcal C^3_b(\real)$, it holds that
\begin{align}
  \label{eq:cond-zeta}
  \Ee\big[h\circ\cP(\zeta_n)\mid\{\Ehat_n=E_o\}\big]
  &=\frac{\Ee\big[h\circ\cP(\zeta_n)\,F(\zeta_n)\big]}
         {\Ee\big[F(\zeta_n)\big]},\\
  \label{eq:cond-Z}
  \Ee\big[h\circ\cP(Z_n)\mid\{\Ehat_n=E_o\}\big]
  &=\frac{\Ee\big[h\circ\cP(Z_n)\,F(Z_n)\big]}
         {\Ee\big[F(Z_n)\big]},
\end{align}
where, in~\eqref{eq:cond-Z}, $\{\Ehat_n=E_o\}$ denotes the selection event arising from an independent application of the exponential mechanism, defined in \eqref{eq:expmech}, to the Gaussian statistic $\widetilde\Sigma^{1/2}Z_n+\widetilde\mu_n$, and $F$ and $\cP$ are as defined in Definition \ref{def:weights-std} and \ref{def:pivot-std}. 
\end{proposition}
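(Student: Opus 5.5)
The argument is a Bayes-rule computation that holds for any random vector $W$ on $\real^p$ fed into the mechanism. The two identities are instances of it: $W=\zeta_n$ for \eqref{eq:cond-zeta}, and $W=Z_n$ for \eqref{eq:cond-Z}.

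\emph{Step 1: express selection probabilities through $F$.} Take $W$ with $\widetilde\Sigma^{1/2}W+\widetilde\mu_n$ playing the role of $(T_{n,j_o},T_{n,j_o}^\perp)$. By Lemma~\ref{lem:reconstruction}, the full statistic on which the exponential mechanism acts is $r(\widetilde\Sigma^{1/2}W+\widetilde\mu_n)$. The mechanism's external randomization is drawn independently of the data given the statistic, so by Lemma~\ref{lem:lambda_prob} and \eqref{eq:F},
\[
\Pp\big[\Ehat_n=E_o\mid W\big]=\Lambda^{E_o}\big(\widetilde\Sigma^{1/2}W+\widetilde\mu_n\big)=F(W)\quad\text{almost surely}.
\]
For $W=\zeta_n$ this is exactly how $\Ehat_n$ is generated from $T_n$. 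For $W=Z_n$ it is the definition of the independent application of the mechanism described in the statement.

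\emph{Step 2: tower property.} Fix a bounded measurable $g$; here $g=h\circ\cP$, which is bounded because $0\le\cP\le 1$ and $h$ is bounded. Then
\[
\Ee\big[g(W)\one\{\Ehat_n=E_o\}\big]=\Ee\big[g(W)\,\Pp[\Ehat_n=E_o\mid W]\big]=\Ee[g(W)F(W)].
\]
Taking $g\equiv1$ gives $\Pp[\Ehat_n=E_o]=\Ee[F(W)]$. Dividing the first identity by the second gives the claimed ratio form of the conditional expectation.

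\emph{Step 3: check the ratio is well defined.} We need $\Ee[F(W)]>0$. This holds because $F$ takes values in $(0,1)$: every term of the softmax in Definition~\ref{def:Lambda} is strictly positive, since $\snoise\ge\varepsilon>0$ makes the exponent finite. So the selection event has positive probability and elementary conditioning applies. Measurability of $F$ follows from continuity of $\Lambda^{E_o}$, which is a composition of continuous maps because $\snoise$ is bounded away from zero.

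\emph{Main obstacle.} The only delicate point is Step 1, specifically the claim that selection depends on the data only through $\zeta_n$. This requires that, conditional on $T_n$, the mechanism's randomness is independent of all other data. It also requires that $T_n$ is a deterministic function of $\zeta_n$, which holds via the invertible linear map $M_{j_o}$ and $\widetilde\mu_n$. I would state this conditional independence explicitly as part of the mechanism's definition \eqref{eq:expmech}. Once it is stated, both identities are immediate.
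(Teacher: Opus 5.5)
Your proposal is correct and follows the paper's proof: first identify $\Pp[\Ehat_n=E_o\mid W]=F(W)$ via Lemma~\ref{lem:lambda_prob} and \eqref{eq:F}, then apply the tower property to both the numerator and the denominator of the elementary conditional expectation. Your added steps (going through the reconstruction map $r$, stating that the mechanism's randomization is conditionally independent of the data, and checking $\Ee[F(W)]>0$) supply details the paper leaves implicit.
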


\begin{proof}[Proof of Proposition~\ref{prop:condexpectation}]
We prove~\eqref{eq:cond-zeta}; the identity in \eqref{eq:cond-Z} follows by
an identical argument.

By the definition of conditional expectation,
\begin{align}
  \label{eq:cond-ratio}
  \Ee\big[h\circ\cP(\zeta_n)\mid\{\Ehat_n=E_o\}\big]
  =\dfrac{\Ee\big[h\circ\cP(\zeta_n)\,\one\{\Ehat_n=E_o\}\big]}{\Pp(\Ehat_n=E_o)}.
\end{align}
By Lemma~\ref{lem:lambda_prob} and the definition of $F$ in~\eqref{eq:F}, we obtain
\begin{align}
  \label{eq:select-prob}
  \Pp\big(\Ehat_n=E_o\mid\zeta_n\big)
  =\Pp\big(\Ehat_n=E_o\mid T_n=\widetilde\Sigma^{1/2}\zeta_n
     +\widetilde\mu_n\big)
  =\Lambda^{E_o}\big(\widetilde\Sigma^{1/2}\zeta_n+\widetilde\mu_n\big)
  =F(\zeta_n).
\end{align}

Applying the tower property together with~\eqref{eq:select-prob} yields
\begin{align*}
  \Ee\big[h\circ\cP(\zeta_n)\,\one\{\Ehat_n=E_o\}\big]
  &=\Ee\Big[h\circ\cP(\zeta_n)\,
      \Ee\big[\one\{\Ehat_n=E_o\}\mid\zeta_n\big]\Big]
   =\Ee\big[h\circ\cP(\zeta_n)\,F(\zeta_n)\big],\\
  \Pp(\Ehat_n=E_o)
  &=\Ee\big[\Ee[\one\{\Ehat_n=E_o\}\mid\zeta_n]\big]
   =\Ee\big[F(\zeta_n)\big].
\end{align*}
Substituting these into~\eqref{eq:cond-ratio} gives~\eqref{eq:cond-zeta}.
\end{proof}

Lemma~\ref{lem:lindeberg-swap}, stated below, is a triangular-array, matrix-weighted version of the Lindeberg swap principle \citep{chatterjee2006generalization}, specialized to test functions with bounded third derivatives. 
The proof tracks the dependence on $\max_i\|\widetilde a_{i,n}\|_{\op}$ explicitly.
This yields the $O(n^{-1/2})$ rate in Corollary~\ref{cor:lindeberg-final}, which is used to bound the numerators in the proof of Theorem~\ref{thm:asymp-main}.

\begin{lemma}[Multivariate Lindeberg bound]
\label{lem:lindeberg-swap}
Let $\widetilde a_{i,n}\in\real^{p\times d}$ satisfy
$\frac1n\sum_{i=1}^n\widetilde a_{i,n}\widetilde a_{i,n}^\top=I_p$, and let
$\{Y_{i,n}\}_{i\in[n]}$, $\{Z_{i,n}\}_{i\in[n]}$ be triangular arrays of
independent random vectors in $\real^d$ with $\Ee[Y_{i,n}]=0_d$,
$\mathrm{Cov}(Y_{i,n})=I_d$, and $Z_{i,n}\sim\mathcal N_d(0,I_d)$.
Assume moreover that the two arrays are mutually independent.
%Define
%\begin{align}
%  \label{eq:lindeberg-sums}
%  \gamma_n=\frac{1}{\sqrt n}\sum_{i=1}^n\widetilde a_{i,n}Y_{i,n},
%  \qquad
%  Z_n=\frac{1}{\sqrt n}\sum_{i=1}^n\widetilde a_{i,n}Z_{i,n},
%\end{align}
Then, for every $g\in\mathcal C^3(\real^p)$, we have
\begin{align*}
  \big|\Ee[g(\gamma_n)]-\Ee[g(Z_n)]\big|
  \le\frac{(m_3+m_3')\,p}{6\sqrt n}\,
     \max_{i\in[n]}\|\widetilde a_{i,n}\|_{\op}\,
     \max_{i\in[n]}\sup_{\alpha\in(0,1)}
     \Ee\big[\|\nabla^{(3)}g(\xi^\alpha_{i,n})\|_{\op}\big],
\end{align*}
where, for each $i\in[n]$ and $\alpha\in[0,1]$, 
\begin{align}
  \label{eq:interpolation}
  \xi^{\alpha}_{i,n}=W_i+\frac{\alpha}{\sqrt n}\widetilde a_{i,n}Y_{i,n},
\end{align}
with $W_i=\frac{1}{\sqrt n}\Big(\sum_{l<i}\widetilde a_{l,n}Z_{l,n}
     +\sum_{l>i}\widetilde a_{l,n}Y_{l,n}\Big)$, and
$m_3=\sup_n\max_{i\in[n]}\Ee\|Y_{i,n}\|^3<\infty$,
and 
$m_3'=\Ee\|Z_{1,n}\|^3<\infty$.
\end{lemma}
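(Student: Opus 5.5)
The plan is the standard Lindeberg replacement (telescoping) argument, carried out one summand at a time. For $i=0,1,\ldots,n$ define the hybrid sums
\[
S_i=\frac{1}{\sqrt n}\Big(\sum_{l\le i}\widetilde a_{l,n}Z_{l,n}+\sum_{l>i}\widetilde a_{l,n}Y_{l,n}\Big),
\]
so that $S_0=\gamma_n$ and $S_n=Z_n$. Then $\Ee[g(\gamma_n)]-\Ee[g(Z_n)]=\sum_{i=1}^n\{\Ee[g(S_{i-1})]-\Ee[g(S_i)]\}$. The $i$-th increment compares $S_{i-1}=W_i+n^{-1/2}\widetilde a_{i,n}Y_{i,n}$ with $S_i=W_i+n^{-1/2}\widetilde a_{i,n}Z_{i,n}$. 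Here $W_i$ is exactly the quantity in \eqref{eq:interpolation}, and it is independent of both $Y_{i,n}$ and $Z_{i,n}$ because the two arrays are mutually independent.

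For each $i$, Taylor-expand $g$ around $W_i$ to third order with integral (or Lagrange) remainder. Write $u=n^{-1/2}\widetilde a_{i,n}Y_{i,n}$, so that
\[
g(W_i+u)=g(W_i)+\nabla g(W_i)[u]+\tfrac12\nabla^{(2)}g(W_i)[u,u]+\tfrac12\int_0^1(1-\alpha)^2\nabla^{(3)}g(W_i+\alpha u)[u,u,u]\,d\alpha ,
\]
and similarly with $Z_{i,n}$ in place of $Y_{i,n}$. Take expectations and use independence of $W_i$ from $Y_{i,n}$ and $Z_{i,n}$. The zeroth-order terms cancel. The first-order terms vanish since both means are zero. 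The second-order terms cancel because $\mathrm{Cov}(Y_{i,n})=\mathrm{Cov}(Z_{i,n})=I_d$, so $\Ee[uu^\top]=n^{-1}\widetilde a_{i,n}\widetilde a_{i,n}^\top$ in both cases. Only the two third-order remainders survive.

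To bound the remainders, use $|\nabla^{(3)}g(x)[u,u,u]|\le\|\nabla^{(3)}g(x)\|_{\op}\|u\|^3$ and $\|u\|\le n^{-1/2}\|\widetilde a_{i,n}\|_{\op}\|Y_{i,n}\|$, together with $\int_0^1(1-\alpha)^2d\alpha/2=1/6$. This bounds the $Y$-remainder by
\[
\frac{1}{6n^{3/2}}\|\widetilde a_{i,n}\|_{\op}^3\sup_\alpha\Ee\big[\|\nabla^{(3)}g(\xi^\alpha_{i,n})\|_{\op}\|Y_{i,n}\|^3\big].
\]

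\textbf{Main obstacle.} The delicate step is decoupling $\|\nabla^{(3)}g(\xi^\alpha_{i,n})\|_{\op}$ from $\|Y_{i,n}\|^3$, since $\xi^\alpha_{i,n}$ depends on $Y_{i,n}$. I would handle it by conditioning: bound the conditional expectation given $Y_{i,n}$ and take the supremum over $\alpha$ inside, effectively reading the $\sup_\alpha\Ee\|\nabla^{(3)}g(\xi^\alpha)\|$ factor as the worst case. I would then multiply by $m_3$, and by $m_3'$ for the Gaussian remainder (whose interpolation point is handled by the same supremum, matching the statement's use of $\xi^\alpha_{i,n}$). This step is where care, or a mild strengthening of the bound, may be needed.

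To finish, sum over $i$ and convert $\sum_i\|\widetilde a_{i,n}\|_{\op}^3$ into the stated form. Write $\|\widetilde a_{i,n}\|_{\op}^3\le\max_l\|\widetilde a_{l,n}\|_{\op}\cdot\|\widetilde a_{i,n}\|_{\op}^2$, and use $\|\widetilde a_{i,n}\|_{\op}^2\le\operatorname{tr}(\widetilde a_{i,n}\widetilde a_{i,n}^\top)$ with $\frac1n\sum_i\operatorname{tr}(\widetilde a_{i,n}\widetilde a_{i,n}^\top)=\operatorname{tr}(I_p)=p$. Then $n^{-3/2}\sum_i\|\widetilde a_{i,n}\|_{\op}^3\le p\,n^{-1/2}\max_i\|\widetilde a_{i,n}\|_{\op}$, which gives the claimed bound with constant $(m_3+m_3')p/6$.
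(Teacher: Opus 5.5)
\textbf{Genuine gap: the decoupling step, shared with the paper's proof.} Your route is the paper's route step for step: hybrid sums $S_i$, a third-order Taylor expansion about $W_i$, cancellation of orders $0$--$2$ by matching first and second moments, and summation via $\|\widetilde a_{i,n}\|_{\op}^2\le\mathrm{tr}(\widetilde a_{i,n}\widetilde a_{i,n}^\top)$. Everything up to your intermediate bound $\frac{1}{6n^{3/2}}\|\widetilde a_{i,n}\|_{\op}^3\sup_\alpha\Ee[\|\nabla^{(3)}g(\xi^\alpha_{i,n})\|_{\op}\|Y_{i,n}\|^3]$ is correct, with the supremum outside the expectation justified by Fubini.

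The step you flag as the main obstacle is a genuine gap, however, and conditioning does not close it. Given $Y_{i,n}=y$, you are left with $\psi_\alpha(y)=\Ee\|\nabla^{(3)}g(W_i+\alpha n^{-1/2}\widetilde a_{i,n}y)\|_{\op}$, which depends on $y$. Replacing $\Ee[\|Y_{i,n}\|^3\psi_\alpha(Y_{i,n})]$ by $m_3\,\Ee[\psi_\alpha(Y_{i,n})]$ needs $\psi_\alpha$ not to be positively correlated with $\|Y_{i,n}\|^3$. Your ``worst case'' reading is really $\sup_y\psi_\alpha(y)$, which is a larger quantity than the one in the statement. The Gaussian remainder is worse: it evaluates $\nabla^{(3)}g$ at $\bar\xi^\alpha_{i,n}=W_i+\alpha n^{-1/2}\widetilde a_{i,n}Z_{i,n}$. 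That point does not have the law of $\xi^\alpha_{i,n}$, so the stated supremum does not control it at all. The paper's proof makes exactly these two moves without justification (``using the fact that $W_i\perp Y_{i,n}$'' and ``applying a similar argument for $v=Z_{i,n}$''). You have reproduced its argument together with its flaw.

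\textbf{The stated inequality is false.} No argument can close the gap, because the inequality fails for general $g\in\mathcal C^3$. Take $n=p=d=1$, $\widetilde a_{1,1}=1$, and $Y_{1,1}$ Rademacher. Then $m_3=1$, $W_1=0$, and $\xi^\alpha_{1,1}=\alpha Y_{1,1}\in\{\pm\alpha\}\subset(-1,1)$. Let $g(x)=x^2+\exp\{-1/(|x|-1)\}\mathbf 1\{|x|>1\}$; this is $C^\infty$ with $g'''\equiv 0$ on $[-1,1]$. The right-hand side is therefore $0$, while $\Ee g(Y_{1,1})=1<\Ee g(Z_{1,1})$. The whole discrepancy is the Gaussian remainder that the bound omits.

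\textbf{What your argument does prove.} Keeping the summands inside the expectations gives
\[
\big|\Ee[g(\gamma_n)]-\Ee[g(Z_n)]\big|\le\frac{1}{6n^{3/2}}\sum_{i=1}^n\|\widetilde a_{i,n}\|_{\op}^3\Big\{\sup_{\alpha\in(0,1)}\Ee\big[\|Y_{i,n}\|^3\|\nabla^{(3)}g(\xi^\alpha_{i,n})\|_{\op}\big]+\sup_{\alpha\in(0,1)}\Ee\big[\|Z_{i,n}\|^3\|\nabla^{(3)}g(\bar\xi^\alpha_{i,n})\|_{\op}\big]\Big\}.
\]
When $\nabla^{(3)}g$ is bounded, this reduces to the stated form with $\sup_z\|\nabla^{(3)}g(z)\|_{\op}$ in place of the expectations.

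For Corollary~\ref{cor:lindeberg-final}, this corrected version suffices. There $\|\nabla^{(3)}g_j(z)\|_{\op}\le\widetilde C_j^{(3)}e^{f(\widetilde\mu_n)+C_R\|z\|}$, and $\|\xi^\alpha_{i,n}\|\le\|W_i\|+n^{-1/2}\|\widetilde a_{i,n}\|_{\op}\|Y_{i,n}\|$. Independence of $W_i$ and $Y_{i,n}$ then genuinely factorizes the expectation:
\[
\Ee\big[\|Y_{i,n}\|^3e^{C_R\|\xi^\alpha_{i,n}\|}\big]\le\Ee\big[e^{C_R\|W_i\|}\big]\,\Ee\big[\|Y_{i,n}\|^3e^{C_R B\|Y_{i,n}\|}\big].
\]
Here $B$ bounds $\|\widetilde a_{i,n}\|_{\op}$, and both factors are finite uniformly in $i,n$ under sub-Gaussianity; the same holds for $\bar\xi^\alpha_{i,n}$. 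So the $O(n^{-1/2})e^{f(\widetilde\mu_n)}$ rate survives with a modified constant, but the lemma itself has to be restated in this form.
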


\begin{proof}[Proof of Lemma~\ref{lem:lindeberg-swap}]
\textit{Telescoping.}
For each $i\in[n]$, define
\begin{align*}
\begin{gathered}
  \Phi_{i,n}
  =\Ee\Big[g\Big(W_i+\tfrac{1}{\sqrt n}\widetilde a_{i,n}Y_{i,n}\Big)\Big],
  \qquad
  \bar\Phi_{i,n}
  =\Ee\Big[g\Big(W_i+\tfrac{1}{\sqrt n}\widetilde a_{i,n}Z_{i,n}\Big)\Big],\\
  S_i=\frac{1}{\sqrt n}\big(\sum_{l\le i}\widetilde a_{l,n}Z_{l,n}
+\sum_{l>i}\widetilde a_{l,n}Y_{l,n}\big), \text{ for  } i=0,\dots,n,
\end{gathered}
\end{align*}
so that
$S_0=\gamma_n$ and $S_n=Z_n$, and 
$\Ee[g(S_{i-1})]-\Ee[g(S_i)]=\Phi_{i,n}-\bar\Phi_{i,n}$.
Furthermore, summing over theses differences gives
\begin{align}
  \label{eq:telescope}
  \Ee[g(\gamma_n)]-\Ee[g(Z_n)]
  =\sum_{i=1}^n\big(\Phi_{i,n}-\bar\Phi_{i,n}\big).
\end{align}

\medskip\noindent\textit{Taylor expansion with integral remainder.}
Fix $i\in[n]$ and $v\in\real^d$, and set $h_v=\tfrac{1}{\sqrt n}\widetilde a_{i,n}v\in\real^p$. 
By Taylor's theorem with integral remainder applied to $g$ about $W_i$, and writing $\nabla^{(k)}g(z)[u_1,\dots,u_k]$ for the symmetric $k$-linear form given by the $k$th derivative of $g$ at $z$ (so that $\nabla^{(1)}g(z)[u]=\nabla g(z)^\top u$),
\begin{align}
  \label{eq:taylor}
  g(W_i+h_v)
  =g(W_i)+\nabla^{(1)}g(W_i)[h_v]
   +\tfrac12\nabla^{(2)}g(W_i)[h_v,h_v]+R_i(v),
\end{align}
with
\begin{align}
  \label{eq:remainder}
  R_i(v)
  =\frac12\int_0^1(1-t)^2\,
   \nabla^{(3)}g\Big(W_i+\tfrac{t}{\sqrt n}\widetilde a_{i,n}v\Big)
   \big[h_v,h_v,h_v\big]\,dt.
\end{align}
Note that setting $v=Y_{i,n}$ and $\alpha=t$ gives
$W_i+\tfrac{t}{\sqrt n}\widetilde a_{i,n}Y_{i,n}=\xi^{\alpha}_{i,n}$,
the interpolation point defined in~\eqref{eq:interpolation}.

\medskip\noindent\textit{Evaluating $\Phi_{i,n}-\bar\Phi_{i,n}$.}
Taking expectations in~\eqref{eq:taylor} conditionally on $W_i$ (independent
of both $Y_{i,n}$ and $Z_{i,n}$) and subtracting,
\begin{align*}
  \Phi_{i,n}-\bar\Phi_{i,n}
  &=\underbrace{\Ee\big[\nabla^{(1)}g(W_i)[h_{Y_{i,n}}]\big]
               -\Ee\big[\nabla^{(1)}g(W_i)[h_{Z_{i,n}}]\big]}_{=\,0
               \ (\text{matching first moments})}\\
  &\quad
   +\tfrac12\underbrace{\Big(\Ee\big[\nabla^{(2)}g(W_i)[h_{Y_{i,n}},h_{Y_{i,n}}]\big]
               -\Ee\big[\nabla^{(2)}g(W_i)[h_{Z_{i,n}},h_{Z_{i,n}}]\big]\Big)}_{=\,0
               \ (\text{matching second moments})}\\
  &\quad
   +\Ee[R_i(Y_{i,n})]-\Ee[R_i(Z_{i,n})].
\end{align*}
Since $W_i$ is independent of both $Y_{i,n}$ and $Z_{i,n}$, taking
expectations in~\eqref{eq:taylor} conditionally on $W_i$:
the order-$0$ term $\Ee[g(W_i)]$ appears in both $\Phi_{i,n}$ and
$\bar\Phi_{i,n}$ and cancels; the order-$1$ term vanishes because
$\Ee[\nabla^{(1)}g(W_i)[h_{Y_{i,n}}]]
=\Ee[\nabla^{(1)}g(W_i)\,\widetilde a_{i,n}]\,\Ee[Y_{i,n}]=0$ and
identically for $Z_{i,n}$; and the order-$2$ term satisfies
\begin{align*}
  \Ee\big[\nabla^{(2)}g(W_i)[h_{Y_{i,n}},h_{Y_{i,n}}]\big]
  =\frac1n\mathrm{tr}\big(\nabla^{(2)}g(W_i)\,
    \widetilde a_{i,n}\mathrm{Cov}(Y_{i,n})\widetilde a_{i,n}^\top\big)
  =\frac1n\mathrm{tr}\big(\nabla^{(2)}g(W_i)\,
    \widetilde a_{i,n}\widetilde a_{i,n}^\top\big),
\end{align*}
which is identical for $Z_{i,n}$ since
$\mathrm{Cov}(Y_{i,n})=I_d=\mathrm{Cov}(Z_{i,n})$. Hence
$\Phi_{i,n}-\bar\Phi_{i,n}=\Ee[R_i(Y_{i,n})]-\Ee[R_i(Z_{i,n})]$.

\medskip\noindent\textit{Bounding the remainders from the Taylor series expansion.}
Applying the multilinear bound
$|\nabla^{(3)}g(z)[h,h,h]|\le\|\nabla^{(3)}g(z)\|_{\op}\|h\|^3$ and
$\|\widetilde a_{i,n}v\|\le\|\widetilde a_{i,n}\|_{\op}\|v\|$
to~\eqref{eq:remainder}, together with $\int_0^1(1-t)^2dt=\tfrac13$, we obtain
\begin{align*}
  |R_i(v)|
  \le\frac{\|\widetilde a_{i,n}\|_{\op}^3\|v\|^3}{6\,n^{3/2}}
     \sup_{\alpha\in(0,1)}\|\nabla^{(3)}g(\xi^\alpha_{i,n})\|_{\op}.
\end{align*}
Taking expectations with $v=Y_{i,n}$ and using the fact that
$W_i\perp Y_{i,n}$, we have
\begin{align}
  \label{eq:Ri-bound}
  \big|\Ee[R_i(Y_{i,n})]\big|
  \le\frac{\|\widetilde a_{i,n}\|_{\op}^3\,m_3}{6\,n^{3/2}}
     \sup_{\alpha\in(0,1)}\Ee\big[\|\nabla^{(3)}g(\xi^\alpha_{i,n})\|_{\op}\big],
\end{align}
Applying a similar argument for $v=Z_{i,n}$, we finally obtain
\begin{align*}
  |\Phi_{i,n}-\bar\Phi_{i,n}|
  \le\frac{(m_3+m_3')\|\widetilde a_{i,n}\|_{\op}^3}{6\,n^{3/2}}
     \sup_{\alpha\in(0,1)}\Ee\big[\|\nabla^{(3)}g(\xi^\alpha_{i,n})\|_{\op}\big].
\end{align*}

\medskip\noindent\textit{Summation and normalization.}
Summing over the terms $|\Phi_{i,n}-\bar\Phi_{i,n}|$ in \eqref{eq:telescope}, we have
\begin{align}
  \label{eq:general-bound}
  \big|\Ee[g(\gamma_n)]-\Ee[g(Z_n)]\big|
  \le\frac{m_3+m_3'}{6\sqrt n}\,
     \Big(\frac1n\sum_{i=1}^n\|\widetilde a_{i,n}\|_{\op}^3\Big)
     \max_{i\in[n]}\sup_{\alpha\in(0,1)}
     \Ee\big[\|\nabla^{(3)}g(\xi^\alpha_{i,n})\|_{\op}\big].
\end{align}
Finally, using the fact that $\|\cdot\|_{\op}\le\|\cdot\|_F$, together with
$\frac1n\sum_i\widetilde a_{i,n}\widetilde a_{i,n}^\top=I_p$, we obtain
$\frac1n\sum_i\|\widetilde a_{i,n}\|_{\op}^2
\le\frac1n\sum_i\|\widetilde a_{i,n}\|_F^2=\mathrm{tr}(I_p)=p$.
Hence,
\begin{align*}
  \frac1n\sum_{i=1}^n\|\widetilde a_{i,n}\|_{\op}^3
  \le\max_{i\in[n]}\|\widetilde a_{i,n}\|_{\op}\cdot
     \frac1n\sum_{i=1}^n\|\widetilde a_{i,n}\|_{\op}^2
  \le p\,\max_{i\in[n]}\|\widetilde a_{i,n}\|_{\op}.
\end{align*}
Substituting into~\eqref{eq:general-bound} leads to our claim.
\end{proof}

\begin{corollary}[Lindeberg bound]
\label{cor:lindeberg-final}
For $h\in\mathcal C^3_b(\real)$, let $g_1(z)=F(z)$ and $g_2(z)=h\circ\cP(z)\cdot F(z)$.
Under Assumptions~\ref{ass:linearizable} and~\ref{ass:score-deriv}, there exists a constant $C_{\mathrm{Lin}}<\infty$ for which
\begin{align*}
  \big|\Ee[g_j(\gamma_n)]-\Ee[g_j(Z_n)]\big|
  \le\frac{C_{\mathrm{Lin}}}{\sqrt n}\exp\{f(\widetilde\mu_n)\},
  \qquad j=1,2,
\end{align*}
where $f$ is the function specified in Definition \ref{def:weights-std}.
\end{corollary}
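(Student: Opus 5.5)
The plan is to apply Lemma~\ref{lem:lindeberg-swap} to $g_1=F$ and $g_2=(h\circ\cP)\cdot F$. Its prefactor $(m_3+m_3')p\max_i\|\widetilde a_{i,n}\|_{\op}/(6\sqrt n)$ is $O(n^{-1/2})$ uniformly in $n$: $m_3<\infty$ follows from uniform sub-Gaussianity of $Y_{i,n}$, $m_3'$ is a Gaussian moment, and $\max_i\|\widetilde a_{i,n}\|_{\op}=\max_i\|a_{i,n}\|_{\op}$ is uniformly bounded by Assumption~\ref{ass:linearizable} and Proposition~\ref{prop:clt-decomp}. The whole task therefore reduces to a bound of the form
\begin{align*}
\max_{i\in[n]}\sup_{\alpha\in(0,1)}\Ee\big[\|\nabla^{(3)}g_j(\xi^\alpha_{i,n})\|_{\op}\big]\le C\exp\{f(\widetilde\mu_n)\},\qquad j=1,2,
\end{align*}
with $C$ independent of $n$.

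\textbf{Derivative bounds.} I would first record that, under Assumption~\ref{ass:score-deriv}, $f$ has globally bounded derivatives of orders $1,2,3$ with constants $c_f^{(l)}$ (Proposition~\ref{prop:f_bounded_deriv}). This uses the following facts. First, $r$ is linear. Second, $\snoise\ge\varepsilon>0$ is smooth with bounded derivatives, since it is a norm of centered scores plus $\varepsilon^2$ under the square root. Third, log-sum-exp has bounded derivatives of every order with respect to its arguments.

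Write $F(v)=\exp\{f(\widetilde\Sigma^{1/2}v+\widetilde\mu_n)\}$. Fa\`a di Bruno gives $\|\nabla^{(l)}F(v)\|_{\op}\le C_l F(v)$ for $l\le 3$. By the mean value theorem, $F(v)\le\exp\{f(\widetilde\mu_n)+C_R\|v\|\}$, with $C_R=c_f^{(1)}\lambda_{\max}(\widetilde\Sigma)^{1/2}$. Together these give the envelope $\exp\{f(\widetilde\mu_n)+C_R\|v\|\}$ for all derivatives of $g_1$.

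For $g_2$, apply Leibniz. The function $h$ has bounded derivatives up to order three, so it suffices to show that $\cP$ has derivatives up to order three in $v$ bounded by a polynomial in $\|v\|$, or at worst by $e^{c\|v\|}$ with a small constant. I would get this by differentiating the ratio of integrals defining $\cP$:
\begin{itemize}
\item The $v_1$-derivatives hit only the upper limit. They produce the conditional density, which is a ratio of the form $\phi\Lambda/\int\phi\Lambda$.
\item The $v_2$-derivatives pass under the integral. They produce $\nabla f$-weighted conditional expectations, which are bounded by $c_f^{(l)}$ times operator norms.
\end{itemize}
The conditional density factor is controlled by a ratio $\Lambda(u,\cdot)/\Lambda(u',\cdot)\le\exp\{c|u-u'|\}$. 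This yields a bound of the form $\exp\{c'\|v\|\}$ after integrating against the Gaussian. The result would be packaged as Proposition~\ref{prop:deriv-bounds}:
\begin{align*}
\|\nabla^{(3)}g_j(z)\|_{\op}\le\widetilde C_j^{(3)}\exp\{f(\widetilde\mu_n)+C_R\|z\|\}.
\end{align*}

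\textbf{Expectation at interpolation points.} By that envelope it suffices to bound $\sup_{i,\alpha,n}\Ee\exp\{C_R\|\xi^\alpha_{i,n}\|\}$. The vector $\xi^\alpha_{i,n}$ is a normalized sum of independent terms $\widetilde a_{l,n}Z_{l,n}/\sqrt n$ or $\widetilde a_{l,n}Y_{l,n}/\sqrt n$ (the latter with weight $\alpha\le1$). Each term is sub-Gaussian with parameter $K=\max(K_0,1)$. For any unit $s$,
\begin{align*}
\Ee\, e^{\lambda s^\top\xi}\le\exp\Big\{\tfrac{\lambda^2K^2}{2n}\sum_l\|\widetilde a_{l,n}^\top s\|^2\Big\}=\exp\{\lambda^2K^2/2\},
\end{align*}
using $\frac1n\sum_l\widetilde a_{l,n}\widetilde a_{l,n}^\top=I_p$. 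A union bound over the $2p$ coordinate directions, via $\|\xi\|\le\sqrt p\max_k|\xi_k|$, then gives $\Ee e^{C_R\|\xi\|}\le 2p\exp\{pC_R^2K^2/2\}$, uniformly in $i$, $\alpha$, $n$. Combining this with the prefactor yields the claim, with $C_{\mathrm{Lin}}=(m_3+m_3')p\sup\|a\|_{\op}\max_j\widetilde C_j^{(3)}\cdot 2pe^{pC_R^2K^2/2}/6$.

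\textbf{Main obstacle.} The hardest step is the third-derivative bound on $h\circ\cP$, specifically showing that the growth of $\nabla^{(l)}\cP$ in $v$ is at most exponential with a controlled rate, so that it merges into the exponential envelope. I would first try to show that the conditional density in Proposition~\ref{prop:cond_density} is uniformly bounded. The bounded log-gradient of $\Lambda$ in $u$ makes the tilted Gaussian comparable to a shifted Gaussian, which gives density bounds uniform in $v_2$ and at worst $e^{c|v_1|}$ in $v_1$. Any such extra exponential rate can be absorbed by enlarging $C_R$; the moment bound above holds for every finite rate.
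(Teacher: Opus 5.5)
Your proposal is correct and follows the paper's proof. You apply Lemma~\ref{lem:lindeberg-swap}, bound $\nabla^{(3)}g_j$ pointwise by $\widetilde C_j^{(3)}\exp\{f(\widetilde\mu_n)+C_R\|z\|\}$ via Proposition~\ref{prop:deriv-bounds}, and control $\Ee\exp\{C_R\|\xi^\alpha_{i,n}\|\}$ uniformly in $i,\alpha,n$ through sub-Gaussianity of the interpolation points. Your coordinate union bound $\|\xi\|\le\sqrt p\max_k|\xi_k|$ is more careful than the paper's appeal to Lipschitz concentration of the norm, which is not automatic for non-Gaussian sub-Gaussian vectors. Letting $\nabla^{(l)}\cP$ grow exponentially costs nothing here, since the interpolation points carry no remainder term; your tilted-Gaussian comparison in fact gives the uniformly bounded derivatives of $\cP$ that the paper asserts.
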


\begin{proof}[Proof of Corollary~\ref{cor:lindeberg-final}]
We apply Lemma~\ref{lem:lindeberg-swap} with $g=g_j\in\mathcal C^3(\real^p)$.
It remains to show that
\begin{align}
  \label{eq:xi_bound}
  \sup_n\max_{i\in[n]}\sup_{\alpha\in(0,1)}
  \Ee\big[\|\nabla^{(3)}g_j(\xi^\alpha_{i,n})\|_{\op}\big]
  \le C'\exp\{f(\widetilde\mu_n)\}
\end{align}
for some $C'<\infty$ independent of $n$.
Then, together with $\max_i\|\widetilde a_{i,n}\|_{\op}\le B$ from
Assumption~\ref{ass:linearizable} (since, $\|\widetilde a_{i,n}\|_{\op}=\|a_{i,n}\|_{\op}$), the bound in Lemma~\ref{lem:lindeberg-swap} gives $C_{\mathrm{Lin}}=\frac{(m_3+m_3')\,p\,B\,C'}{6}$.

For the remainder of the proof, we establish \eqref{eq:xi_bound}.

\medskip\noindent\textit{Pointwise bound.}
By Proposition~\ref{prop:deriv-bounds}\,(i), for every $z\in\real^p$, we have
\begin{align*}
  \|\nabla^{(3)}g_j(z)\|_{\op}
  \le\widetilde C_j^{(3)}\exp\{f(\widetilde\mu_n)+C_R\|z\|\}.
\end{align*}
Evaluating this bound at $z=\xi^\alpha_{i,n}$ and taking expectations yields
\begin{align}
  \label{eq:xi_exp_bound}
  \Ee\big[\|\nabla^{(3)}g_j(\xi^\alpha_{i,n})\|_{\op}\big]
  \le\widetilde C_j^{(3)}\exp\{f(\widetilde\mu_n)\}\,
     \Ee\big[\exp\{C_R\|\xi^\alpha_{i,n}\|\}\big].
\end{align}

It remains to obtain a bound on $\Ee[\exp{C_R\|\xi^\alpha_{i,n}\|}]$ that is uniform in $i,\alpha,n$.
This follows from the fact that $\xi^\alpha_{i,n}$ is a sub-Gaussian vector, as shown below.

\medskip\noindent\textit{Sub-Gaussianity of $\xi^\alpha_{i,n}$.}
By~\eqref{eq:interpolation}, we have
\begin{align*}
  \xi^\alpha_{i,n}
  =\frac{1}{\sqrt n}\Big(
     \sum_{l<i}\widetilde a_{l,n}Z_{l,n}
     +\alpha\,\widetilde a_{i,n}Y_{i,n}
     +\sum_{l>i}\widetilde a_{l,n}Y_{l,n}\Big),
\end{align*}
which is a sum of independent mean-zero terms. 
Note that, for any unit  vector $u\in\real^p$, the Gaussian summands $\tfrac{1}{\sqrt n}u^\top\widetilde a_{l,n}Z_{l,n}$ are sub-Gaussian with parameter  $\|\widetilde a_{l,n}\|_{\op}^2/n$. 
Similarly, by the uniform sub-Gaussianity condition in Assumption~\ref{ass:linearizable}, the non-Gaussian summands $\tfrac{\alpha}{\sqrt n}u^\top\widetilde a_{l,n}Y_{l,n}$ are sub-Gaussian with parameter $K_0^2\|\widetilde a_{l,n}\|_{\op}^2/n$. 
Since $\alpha\le1$ and the summands are independent, it follows that $u^\top\xi^\alpha_{i,n}$ is sub-Gaussian with parameter at most
\begin{align*}
  \frac{K_0^2\vee1}{n}\sum_{l=1}^n\|\widetilde a_{l,n}\|_{\op}^2
  \le\frac{K_0^2\vee1}{n}\sum_{l=1}^n\|\widetilde a_{l,n}\|_F^2
  =(K_0^2\vee1)\,\mathrm{tr}\Big(\tfrac1n\sum_l
     \widetilde a_{l,n}\widetilde a_{l,n}^\top\Big)
  =(K_0^2\vee1)\,p
  =:\sigma_\xi^2.
\end{align*}
Note that $\sigma_\xi^2$ does not depend on $i,\alpha,n$.

\medskip\noindent\textit{Norm moment-generating bound.}
Because $z\mapsto\|z\|$ is $1$-Lipschitz, the centered norm
$\|\xi^\alpha_{i,n}\|-\Ee\|\xi^\alpha_{i,n}\|$ is also sub-Gaussian with the
same parameter $\sigma_\xi^2$, and $\Ee\|\xi^\alpha_{i,n}\|
\le(\Ee\|\xi^\alpha_{i,n}\|^2)^{1/2}\le\sigma_\xi$ by Jensen's inequality. 
Thus, we have
\begin{align*}
  \Ee\big[\exp\{C_R\|\xi^\alpha_{i,n}\|\}\big]
  &\le\exp\{C_R\,\Ee\|\xi^\alpha_{i,n}\|\}\,
     \Ee\big[\exp\{C_R(\|\xi^\alpha_{i,n}\|-\Ee\|\xi^\alpha_{i,n}\|)\}\big]\\
  &\le\exp\Big\{C_R\sigma_\xi+\frac{C_R^2\sigma_\xi^2}{2}\Big\}
  =:C_\xi<\infty,
\end{align*}
uniformly in $i,\alpha,n$, where $C_\xi$ depends only on $C_R$, $K_0$,
and $p$. 
This yields our claim.
\end{proof}

\subsection{Bounds on derivatives.}
\label{app:deriv-bounds}
The bounds invoked above are collected here. Throughout, unsubscripted $\nabla^{(l)}$ denotes differentiation in the standardized argument $v$; $\nabla_T$ denotes differentiation in $T=r(v)$, related by the fixed Jacobian $J_r$.

\begin{proposition}[Bounded Derivatives of $f$]
\label{prop:f_bounded_deriv}
Under Assumption~\ref{ass:score-deriv}, with $\tau > 0$ and
$\varepsilon > 0$ fixed, there exist constants
$c_f^{(l)} < \infty$ for $l = 1, 2, 3$ such that $c_f^{(l)} = O(\tau^{-l})$ and
\begin{align}
  \label{eq:f-bound}
  \sup_{v \in \real^p}
  \|\nabla^{(l)} f(v)\|_{\op}
  \leq c_f^{(l)} < \infty, \qquad l=1,2,3.
\end{align} 
\end{proposition}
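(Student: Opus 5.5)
The plan is to write $f$ as a composition of smooth maps and bound its derivatives with the chain rule and the multivariate Faà di Bruno formula. Since $f(v)=\log\Lambda^{E_o}(r(v))$ and $r$ is linear with fixed Jacobian $J_r$, it suffices to bound $\nabla_T^{(l)}$ of
\[
g(T)=\frac{s_{E_o}(T)}{\tau\,\snoise(T)}-\log\sum_{E\in\Ek}\exp\Big\{\frac{s_E(T)}{\tau\,\snoise(T)}\Big\}
\]
uniformly in $T$. We then multiply by $\|J_r\|_{\op}^l$. When $f$ is read in standardized coordinates, we also multiply by $\|\widetilde\Sigma^{1/2}\|_{\op}^l$, which is where the factor $\max\{\sigma_{j_o},\|\Gamma_{j_o}^{1/2}\|_{\op}\}$ in $C_R$ comes from. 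Write $w_E(T)=s_E(T)/\snoise(T)$. Then $g=\tau^{-1}w_{E_o}-\mathrm{LSE}(w/\tau)$, where $\mathrm{LSE}(x)=\log\sum_E e^{x_E}$. The log-sum-exp function has derivatives of orders $1,2,3$ bounded by $1,2,6$ (they are moments and cumulants of a probability vector $\pi$ on $\Ek$). By Faà di Bruno, $\nabla^{(l)}g$ is therefore a sum of terms of the form $\tau^{-m}\prod\nabla^{(l_i)}w$ with $m\le l$, and $\tau^{-l}$ is the dominant power. So the claim $c_f^{(l)}=O(\tau^{-l})$ reduces to showing
\[
\sup_T\|\nabla^{(l)}w_E(T)\|_{\op}<\infty\quad\text{for } l=1,2,3 \text{ and all } E.
\]

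This is the key step. I will handle it by exploiting scale invariance. Set $d_E(T)=s_E(T)-\bar s(T)$; softmax is invariant to shifts, so $w_E$ may be replaced by $d_E/\snoise$. Then
\[
\snoise(T)=\sqrt{\|d(T)\|^2/|\Ek|+\varepsilon^2}\ \ge\ \max\{\varepsilon,\ \|d(T)\|/\sqrt{|\Ek|}\}.
\]
As a result, $|d_E|/\snoise\le\sqrt{|\Ek|}$ is bounded even though $s_E$ itself is not. Differentiate the quotient $d_E\,\snoise^{-1}$ by the Leibniz rule, using the following facts.
\begin{itemize}
\item By Assumption~\ref{ass:score-deriv}, $\nabla^{(l)}d_E$ is bounded by $2c_s^{(l)}$.
\item $\snoise=\psi(d)$ with $\psi(x)=\sqrt{\|x\|^2/|\Ek|+\varepsilon^2}$. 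The map $\psi$ is $1/\sqrt{|\Ek|}$-Lipschitz, and its $m$-th derivative is bounded by $C_m\,\psi^{1-m}\le C_m\varepsilon^{1-m}$.
\item Derivatives of $\psi^{-1}$ composed with $d$ take the form $\psi^{-1-j}$ times products of derivatives of $\psi$ and of $d$.
\end{itemize}

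Each resulting term carries a factor $\psi^{-a}$ with $a\ge1$. At most one factor $d_E$ appears undifferentiated, and it is absorbed by one power of $\psi^{-1}$ through $|d_E|/\psi\le\sqrt{|\Ek|}$. The remaining powers of $\psi^{-1}$ are bounded by powers of $\varepsilon^{-1}$. Every derivative of $d$ appearing in a term is bounded by Assumption~\ref{ass:score-deriv}. The only subtle factors are the first derivatives $\nabla\psi(d)=d/(|\Ek|\psi)$ in the chain rule, and these are bounded by $1/\sqrt{|\Ek|}$. This yields global bounds depending only on $c_s^{(1)},c_s^{(2)},c_s^{(3)}$, $\varepsilon$ and $|\Ek|$.

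Finally, I will assemble the constants. Combining the Faà di Bruno expansion with these uniform bounds on $\nabla^{(l)}w_E$ and the bounded derivatives of LSE gives an explicit $c_f^{(l)}$. It is a polynomial in $\tau^{-1}$ of degree $l$, hence $O(\tau^{-l})$ as $\tau\to0$.

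The main obstacle is the bookkeeping in the key step above. There one must check that no term contains an undifferentiated $s_E$ (or $d_E$) that is not matched by a factor $\snoise^{-1}$. Such a term would grow linearly in $T$, since $s_E$ itself is unbounded. I would first verify this for $l=1$ explicitly, where $\nabla(d_E/\psi)=\nabla d_E/\psi-d_E\nabla\psi/\psi^2$. Then I would argue by induction on $l$ that every term is of the form $(d_E/\psi)^{\le1}\cdot\psi^{-b}$ times bounded factors, with $b\ge0$.
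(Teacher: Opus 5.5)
Your proposal is correct and follows essentially the same route as the paper. The paper also writes $f=\beta_{E_o}-\log\sum_{E}e^{\beta_E}$ with $\beta_E=\tau^{-1}(s_E-\bar s)/\snoise$, bounds the derivatives of $\snoise$ and of the quotient using $|s_E-\bar s|/\snoise\le\sqrt{|\Ek|}$ and $\snoise\ge\varepsilon$, and then controls the log-sum-exp part through its derivatives (the softmax cumulants), which is your Fa\`a di Bruno step; your homogeneity bound $\|\nabla^{(m)}\psi\|_{\op}\le C_m\psi^{1-m}$ is simply a tidier packaging of the paper's bound on $\nabla^{(j)}\snoise$, including the observation that no $\varepsilon^{-1}$ appears at first order.
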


\begin{proof}[Proof of Proposition~\ref{prop:f_bounded_deriv}]
Throughout, write $T=r(v)$ and
$\beta_E(v)=\tau^{-1}\widetilde s_E(r(v))$, where
$\widetilde s_E=(s_E-\bar s)/\snoise$. By the definition of the
exponential mechanism~\eqref{eq:expmech} and translation invariance of
the log-sum-exp by the common term $\bar s/(\tau\snoise)$,
\begin{align}
  \label{eq:f-beta-clean}
  f(v)=\beta_{E_o}(v)
       -\log\!\sum_{E\in\Ek}\exp\{\beta_E(v)\},
  \qquad
  \Lambda_E(v)=\frac{\exp\{\beta_E(v)\}}{\sum_{E'\in\Ek}\exp\{\beta_{E'}(v)\}}.
\end{align}

\medskip\noindent\emph{Uniform derivative bounds on $\beta_E$.}
We bound $\nabla_v^{(l)}\beta_E$ for $l=1,2,3$ uniformly in $v$ and $E$.
Since $v\mapsto\beta_E(v)$ is the composition
$\tau^{-1}\widetilde s_E\circ r$, the chain rule contributes a factor
$\kappa_r:=\|J_r\|_{\op}$ per order of differentiation, where $J_r$ is
the (fixed) Jacobian of $r$; it therefore suffices to bound the
derivatives of $\widetilde s_E$ with respect to $T$. Differentiating $\widetilde s_E=(s_E-\bar s)/\snoise$ by the quotient
rule gives, at first order,
\begin{align}
  \label{eq:ts-quotient}
  \nabla_T\widetilde s_E
  =\frac{\nabla_T(s_E-\bar s)}{\snoise}
   -\widetilde s_E\,\frac{\nabla_T\snoise}{\snoise},
\end{align}
and the cases $l=2,3$ follow by differentiating~\eqref{eq:ts-quotient}
further: each resulting term is a product of factors
$\nabla_T^{(j)}(s_E-\bar s)$ and $\nabla_T^{(j')}\snoise$ with
$j,j'\le l$, divided by a power of $\snoise$. For the dispersion,
differentiating
$\snoise=\big(|\Ek|^{-1}\sum_E(s_E-\bar s)^2+\varepsilon^2\big)^{1/2}$
and applying Cauchy--Schwarz with $\snoise\ge\varepsilon$ yields
\begin{align}
  \label{eq:snoise-bound}
  \|\nabla_T^{(j)}\snoise\|_{\op}\le\bar c^{(j)},
  \qquad j=1,2,3,
\end{align}
where, at first order, the bound on
$\nabla \snoise=\snoise^{-1} |\Ek|^{-1}\sum_E (s_E-\bar s)\nabla(s_E -\bar s)$
satisfies $\|\nabla\snoise\|_{\op}\le 2c_s^{(1)}=:\bar c^{(1)}$ with the
factor $\snoise^{-1}$ cancelling exactly against
$\big(|\Ek|^{-1}\sum_E(s_E-\bar s)^2\big)^{1/2}\le\snoise$, so that no
$\varepsilon^{-1}$ appears; for $j\ge2$ the constant $\bar c^{(j)}$
additionally involves $\varepsilon^{-(j-1)}$. Substituting these bounds,
together with $|\widetilde s_E|\le\sqrt{|\Ek|}$ and $\snoise\ge\varepsilon$,
into the quotient-rule expression at each order gives
\begin{align}
  \label{eq:beta-deriv-bound}
  \|\nabla_v^{(l)}\beta_E(v)\|_{\op}\le\frac{B_l}{\tau}<\infty,
  \qquad l=1,2,3,\;\;\forall\,v\in\real^p,\;E\in\Ek,
\end{align}
where $B_l$ depends only on $c_s^{(1)},\dots,c_s^{(l)}$,
$\varepsilon^{-l}$, $|\Ek|$, and $\kappa_r^{\,l}$.

\medskip\noindent\emph{Cumulant expansion of $f$.}
Differentiating~\eqref{eq:f-beta-clean} and using that the derivatives of
$\log\sum_E\exp\beta_E$ are the cumulants of $\nabla_v\beta$ under the
softmax law $\Lambda$,
\begin{align}
  \label{eq:f-cumulant-clean}
  \nabla_v^{(l)}f
  =\nabla_v^{(l)}\beta_{E_o}
   -\Ee_\Lambda\!\big[\nabla_v^{(l)}\beta\big]
   -\cK_l,
\end{align}
where $\cK_l$ collects the cumulants of orders $2,\dots,l$ of
$\nabla_v\beta$ under $\Lambda$:
\begin{align*}
  \cK_1=0,
  \qquad
  \cK_2=\mathrm{Cov}_\Lambda[\nabla_v\beta],
  \qquad
  \cK_3=3\,\mathrm{Sym}\,\mathrm{Cov}_\Lambda
        \big[\nabla_v^{(2)}\beta,\nabla_v\beta\big]
        +\kappa_{3,\Lambda}[\nabla_v\beta].
\end{align*}
Since $\Lambda$ is a probability measure on $\Ek$, Jensen's inequality
and~\eqref{eq:beta-deriv-bound} bound each term:
\begin{align*}
  \|\nabla_v^{(l)}\beta_{E_o}\|_{\op}
  \vee\|\Ee_\Lambda[\nabla_v^{(l)}\beta]\|_{\op}
  \le\frac{B_l}{\tau},
  \qquad
  \|\cK_l\|_{\op}
  \le C_l\max_{j\le l}\Big(\frac{B_j}{\tau}\Big)^{\!l/j}
  =O(\tau^{-l}),
\end{align*}
where $C_l$ is a combinatorial constant depending only on $l$ and
$|\Ek|$. Taking $c_f^{(l)}$ to be the sum of these bounds
establishes~\eqref{eq:f-bound}, with $c_f^{(l)}=O(\tau^{-l})$.
\end{proof}

\begin{proposition}[Derivative bounds for $F$ and $h\circ\cP\cdot F$]
\label{prop:deriv-bounds}
For $h\in\mathcal C^3_b(\real)$, let $g_1=F$ and $g_2=h\circ\cP\cdot F$,
and recall $C_R=c_f^{(1)}\lambda_{\max}(\widetilde\Sigma)^{1/2}$. Under
Assumption~\ref{ass:score-deriv}, for $j=1,2$ and $l=0,1,2,3$:
\begin{enumerate}
\item[(i)] there exist constants $\widetilde C^{(l)}_j<\infty$ such that,
uniformly in $z\in\real^p$,
\begin{align}
  \label{eq:deriv-F}
  \|\nabla^{(l)}g_j(z)\|_{\op}
  \le\widetilde C^{(l)}_j\,F(z)
  \le\widetilde C^{(l)}_j
     \exp\!\big\{f(\widetilde\mu_n)+C_R\|z\|\big\};
\end{align}
\item[(ii)] under Assumption~\ref{ass:linearizable}, there exist
constants $C^{(l)}_j<\infty$ such that, for all sufficiently large $n$,
\begin{align}
  \label{eq:deriv-gamma}
  \Ee\big[\|\nabla^{(l)}g_j(\gamma_n)\|_{\op}\big]
  \le C^{(l)}_j\exp\{f(\widetilde\mu_n)\}.
\end{align}
\end{enumerate}
\end{proposition}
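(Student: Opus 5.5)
The plan is to reduce everything to the uniform bounds on $\nabla^{(l)}f$ from Proposition~\ref{prop:f_bounded_deriv}, and then to treat $F$ and the pivot $\cP$ separately before combining them with the Leibniz rule.

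\emph{Step 1: the weight $F$.} Write $F(z)=\exp\{f(A z+\widetilde\mu_n)\}$ with $A=\widetilde\Sigma^{1/2}$. By Fa\`a di Bruno's formula, $\nabla^{(l)}F(z)$ equals $F(z)$ times a polynomial in $\nabla^{(j)}f(Az+\widetilde\mu_n)[A\cdot,\dots,A\cdot]$ for $j\le l$. Each such factor is bounded by $c_f^{(j)}\|A\|_{\op}^j$, using \eqref{eq:f-bound}. Hence $\|\nabla^{(l)}F(z)\|_{\op}\le \widetilde C^{(l)}_1F(z)$ with constants independent of $n$ and $z$.

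For the second inequality in \eqref{eq:deriv-F}, apply the mean-value theorem to $f$ between $\widetilde\mu_n$ and $Az+\widetilde\mu_n$. With $\|\nabla f\|\le c_f^{(1)}$ and $\|Az\|\le\lambda_{\max}(\widetilde\Sigma)^{1/2}\|z\|$, this gives
\[
F(z)\le\exp\{f(\widetilde\mu_n)+C_R\|z\|\}.
\]

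\emph{Step 2: the pivot $\cP$, which I expect to be the main obstacle.} We need $\sup_v\|\nabla^{(l)}\cP(v)\|_{\op}<\infty$ for $l\le3$, uniformly in $n$. Write $\cP(v)=N(v)/D(v_2)$, where $N$ and $D$ are the numerator and denominator integrals in Definition~\ref{def:pivot-std}. Substitute $u=\sigma_{j_o}w+\mu_{n,j_o}$, so that the Gaussian factor becomes the standard normal density $\phi(w)$, free of $n$, and the weight becomes $\Lambda(w,v_2)=F(w,v_2)$.

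The derivatives in $v_2$ pass under the integral. They produce conditional covariances, under the tilted law $\propto\phi(w)F(w,v_2)$, between $\one\{w\le v_1\}$ and derivatives of $\log F$ in $v_2$ (and higher-order cumulant analogues). By Step 1 these are bounded by polynomials in the $c_f^{(j)}$.

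The derivatives in $v_1$ produce boundary terms: the tilted density evaluated at $v_1$, together with its $w$- and $v_2$-derivatives. These involve $\phi'(v_1)$, $\phi''(v_1)$, and $\partial\log F$. For this I would prove a key lemma: the tilted density
\[
q(x)=\phi(x)F(x,v_2)\Big/\int\phi(w)F(w,v_2)\,dw
\]
satisfies $\sup_{x,v_2}(1+|x|)^m q(x)<\infty$ for $m\le2$. The proof uses log-Lipschitzness of $F$ in $w$, namely $F(w,v_2)\ge F(x,v_2)e^{-c|w-x|}$ with $c=c_f^{(1)}\sigma_{j_o}$. Then
\[
q(x)\le\phi(x)\Big/\int\phi(w)e^{-c|w-x|}\,dw\le C\,\phi(x)e^{c|x|},
\]
and this stays bounded after multiplying by any polynomial in $x$. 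The same comparison controls the $v_2$-derivative factors appearing in the boundary terms.

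Given bounded derivatives of $\cP$ and $h\in\mathcal C^3_b$, Fa\`a di Bruno gives $\sup_v\|\nabla^{(l)}(h\circ\cP)(v)\|_{\op}<\infty$ for $l\le3$.

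\emph{Step 3: combining and taking expectations.} The Leibniz rule expresses $\nabla^{(l)}g_2$ as a sum of terms $\nabla^{(a)}(h\circ\cP)\otimes\nabla^{(l-a)}F$. Each term is bounded by a constant times $F(z)$, using Steps 1 and 2. This yields (i) for $j=2$, while the case $j=1$ is Step 1.

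For (ii), evaluate (i) at $z=\gamma_n$ and take expectations, which reduces the claim to $\sup_n\Ee[\exp\{C_R\|\gamma_n\|\}]<\infty$. This holds exactly as in the proof of Corollary~\ref{cor:lindeberg-final}: $\gamma_n$ is sub-Gaussian with parameter $(K_0^2\vee1)p$ under Assumption~\ref{ass:linearizable}, the centered norm $\|\gamma_n\|-\Ee\|\gamma_n\|$ is sub-Gaussian, and $\Ee\|\gamma_n\|\le\sqrt p$.
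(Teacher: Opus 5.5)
Your proposal is correct and shares the paper's architecture. Both use Fa\`a di Bruno plus the mean-value bound for $F$, treat the $z_2$- and $z_1$-derivatives of $\cP=N/D$ separately, apply the chain and Leibniz rules for $g_2$, and use an exponential-moment bound on $\|\gamma_n\|$ for (ii). The $z_2$-step is the same argument in both.

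The genuine difference is the $z_1$-step. The paper asserts that $\log p_{\mathrm{sel}}(\cdot\mid v_2)$ has bounded $u$-derivatives and that the Gaussian curvature ``dominates'', so that no lower bound on $D$ is needed. As written, the first-order claim fails: the Gaussian score $-(u-\widetilde\mu_{n,1})/\sigma_{j_o}^2$ is unbounded. What is actually required is a lower bound on $D$ relative to $\phi(x)F(x,v_2)$.

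Your comparison $F(w,v_2)\ge F(x,v_2)e^{-c|w-x|}$ supplies exactly this. It yields $q(x)\le\phi(x)e^{c|x|}/\Ee[e^{-c|W|}]$ uniformly in $v_2$ and $n$, and your standardizing substitution makes the $n$-uniformity transparent. The weights $(1+|x|)^m$ with $m\le 2$ then absorb the polynomial factors that $\phi'$ and $\phi''$ produce in $\partial_{v_1}^2\cP$, $\partial_{v_1}^3\cP$ and the mixed derivatives. So your route costs one short lemma but makes rigorous a step the paper only sketches.

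A minor remark on (ii): you borrow from the paper's proof of Corollary~\ref{cor:lindeberg-final} the claim that $\|\gamma_n\|-\Ee\|\gamma_n\|$ is sub-Gaussian with the projection parameter. That is a Gaussian-type concentration fact and is not automatic for general sub-Gaussian vectors. Since $p$ is fixed, it is simpler to bound $\|\gamma_n\|\le\sum_k|\gamma_{n,k}|$ and apply H\"older with the coordinatewise bound $\Ee[e^{s\gamma_{n,k}}]\le e^{K_0^2s^2/2}$. That bound follows directly from $\frac1n\sum_i\widetilde a_{i,n}\widetilde a_{i,n}^\top=I_p$.
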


\begin{proof}[Proof of Proposition~\ref{prop:deriv-bounds}]
\textit{Part (i), $j=1$ (derivatives of $F$).}
By~\eqref{eq:F}, $F(z)=\exp\{f(\widetilde\Sigma^{1/2}z+\widetilde\mu_n)\}$.
Set $w=\widetilde\Sigma^{1/2}z+\widetilde\mu_n$ and
$\bar v_j=\widetilde\Sigma^{1/2}v_j$. Since $z\mapsto w$ is linear,
Fa\`a di Bruno's formula gives
\begin{align*}
  \nabla^{(1)}F(z)[v_1]
  &=F(z)\,\nabla^{(1)}f(w)[\bar v_1],\\
  \nabla^{(2)}F(z)[v_1,v_2]
  &=F(z)\big(\nabla^{(2)}f(w)[\bar v_1,\bar v_2]
     +\nabla^{(1)}f(w)[\bar v_1]\,\nabla^{(1)}f(w)[\bar v_2]\big),\\
  \nabla^{(3)}F(z)[v_1,v_2,v_3]
  &=F(z)\big(\nabla^{(3)}f(w)[\bar v_1,\bar v_2,\bar v_3]
     +{\textstyle\sum_{\mathrm{sym}}}
       \nabla^{(2)}f(w)[\bar v_a,\bar v_b]\,\nabla^{(1)}f(w)[\bar v_c]\\
  &\qquad\quad
     +\nabla^{(1)}f(w)[\bar v_1]\,\nabla^{(1)}f(w)[\bar v_2]\,
       \nabla^{(1)}f(w)[\bar v_3]\big),
\end{align*}
the symmetric sum running over the three ways of pairing two of the
$\bar v$'s. Bounding each factor by
$|\nabla^{(k)}f(w)[u_1,\dots,u_k]|\le\|\nabla^{(k)}f(w)\|_{\op}
\prod_j\|u_j\|$, then $\|\nabla^{(k)}f(w)\|_{\op}\le c_f^{(k)}$
by~\eqref{eq:f-bound}, and
$\|\bar v_j\|\le\lambda_{\max}(\widetilde\Sigma)^{1/2}\|v_j\|$, and taking
the supremum over unit $v_1,\dots,v_l$,
\begin{align}
  \label{eq:F-bounds-explicit1}
  \|\nabla^{(1)}F(z)\|_{\op}
  &\le c_f^{(1)}\lambda_{\max}(\widetilde\Sigma)^{1/2}\,F(z),\\
  \|\nabla^{(2)}F(z)\|_{\op}
  &\le\big(c_f^{(2)}+(c_f^{(1)})^2\big)
     \lambda_{\max}(\widetilde\Sigma)\,F(z),\nonumber\\
  \|\nabla^{(3)}F(z)\|_{\op}
  &\label{eq:F-bounds-explicit3} \le\big(c_f^{(3)}+3c_f^{(2)}c_f^{(1)}+(c_f^{(1)})^3\big)
     \lambda_{\max}(\widetilde\Sigma)^{3/2}\,F(z).
\end{align}
The first bound in~\eqref{eq:F-bounds-explicit1}  identifies $C_R:=c_f^{(1)}\lambda_{\max}(\widetilde\Sigma)^{1/2}$ as the Lipschitz constant of $\log F$; since $\widetilde\Sigma$ is block-diagonal in $(T_{j_o},T_{j_o}^\perp)$, this equals $c_f^{(1)}\max\{\sigma_{j_o},\|\Gamma_{j_o}^{1/2}\|_{\op}\}$, the constant in Assumption~\ref{ass:linearizable} (and it governs both the increment and
the lower bounds used in Proposition~\ref{prop:alr-remainder}). Each
coefficient of $F(z)$ in~\eqref{eq:F-bounds-explicit3} is a finite
constant depending only on $c_f^{(1)},c_f^{(2)},c_f^{(3)}$ and
$\lambda_{\max}(\widetilde\Sigma)$, which we denote $\widetilde C_1^{(l)}$.
Using $F(z)=\exp\{f(w)\}\le\exp\{f(\widetilde\mu_n)+C_R\|z\|\}$ — the
mean-value theorem applied to $f$, with
$\|w-\widetilde\mu_n\|\le\lambda_{\max}(\widetilde\Sigma)^{1/2}\|z\|$ —
gives~\eqref{eq:deriv-F} for $j=1$.

\medskip
\textit{Part (i), $j=2$ (derivatives of $h\circ\cP\cdot F$).}
We first show $\sup_z\|\nabla^{(l)}\cP(z)\|_{\op}<\infty$. Write
$\cP(z)=N(z)/D(z)$ with
\begin{align*}
  N(z)=\int_{-\infty}^{u_0}
    \phi(u;\widetilde\mu_{n,1},\sigma_{j_o}^2)\,\Lambda^{E_o}(u,v_2)\,du,
  \qquad
  D(z)=\int_{-\infty}^{\infty}
    \phi(u;\widetilde\mu_{n,1},\sigma_{j_o}^2)\,\Lambda^{E_o}(u,v_2)\,du,
\end{align*}
where $u_0=\sigma_{j_o}z_1+\widetilde\mu_{n,1}$ and
$v_2=\Gamma_{j_o}^{1/2}z_2+\widetilde\mu_{n,2}$.

\medskip\noindent\emph{Derivatives in $z_2$.}
Since $\Lambda^{E_o}(u,\cdot)=\exp\{f(\cdot)\}$ and $f$ has bounded
derivatives by~\eqref{eq:f-bound}, the Fa\`a di Bruno argument used for
$F$ gives
$\|\nabla^{(l)}_{z_2}\Lambda^{E_o}(u,v_2)\|_{\op}\le C_l'\,
\Lambda^{E_o}(u,v_2)$ uniformly in $u$. Integrating against $\phi$ yields
$\|\nabla^{(l)}_{z_2}N\|_{\op}\le C_l' N$ and
$\|\nabla^{(l)}_{z_2}D\|_{\op}\le C_l' D$, and the quotient rule together
with $\cP=N/D\in[0,1]$ gives
$\|\nabla^{(l)}_{z_2}\cP(z)\|_{\op}\le C_l''<\infty$.

\medskip\noindent\emph{Derivatives in $z_1$.}
Since $D$ does not depend on $z_1$, $\partial_{z_1}^k\cP
=(\partial_{z_1}^k N)/D$. Viewed as a function of $z_1$, $\cP(z)$ is the
selective distribution function $\Pr_{\mathrm{sel}}[U\le u_0\mid v_2]$ of
the tilted law
\begin{align*}
  p_{\mathrm{sel}}(u\mid v_2)
  =\frac{\phi(u;\widetilde\mu_{n,1},\sigma_{j_o}^2)\,\Lambda^{E_o}(u,v_2)}
        {D(z)},
\end{align*}
so $\partial_{z_1}\cP=\sigma_{j_o}\,p_{\mathrm{sel}}(u_0\mid v_2)$ and the
higher $z_1$-derivatives are $\sigma_{j_o}^k$ times $u$-derivatives of
$p_{\mathrm{sel}}$ evaluated at $u_0$. We bound these uniformly in $z$
\emph{without any lower bound on $D$}, by controlling the selective
density directly. The density $p_{\mathrm{sel}}(\cdot\mid v_2)$ is the
fixed Gaussian $\phi(\cdot;\widetilde\mu_{n,1},\sigma_{j_o}^2)$ tilted by
the factor $\Lambda^{E_o}(\cdot,v_2)=\exp\{f(\cdot,v_2)\}$, whose first
three $u$-derivatives of $\log\Lambda^{E_o}$ are bounded by
$c_f^{(1)},c_f^{(2)},c_f^{(3)}$ via~\eqref{eq:f-bound} (the derivative
taken in the standardized coordinate $u$, so that the Jacobian of $r$ is
already absorbed into the $c_f^{(k)}$). Consequently
\begin{align*}
  \log p_{\mathrm{sel}}(u\mid v_2)
  =-\frac{(u-\widetilde\mu_{n,1})^2}{2\sigma_{j_o}^2}+f(u,v_2)-\log D(z)
\end{align*}
has $u$-derivatives bounded by $\sigma_{j_o}^{-1}+c_f^{(1)}$,
$\sigma_{j_o}^{-2}+c_f^{(2)}$, and $c_f^{(3)}$ at orders $1,2,3$; since
the leading $-\sigma_{j_o}^{-2}$ curvature dominates at large $|u|$,
$p_{\mathrm{sel}}(\cdot\mid v_2)$ has sup-norm and derivative sup-norms
bounded by constants depending only on $\sigma_{j_o}$ and
$c_f^{(1)},c_f^{(2)},c_f^{(3)}$, uniformly in $v_2$. The normalizer $D$
enters only to make $p_{\mathrm{sel}}$ a probability density and cancels
in every ratio, so no lower bound on $D$ is required. Hence
$|\partial_{z_1}^k\cP(z)|\le C_k''<\infty$ uniformly in $z$. Mixed
derivatives follow by combining the two arguments, giving
\begin{align}
  \label{eq:P-bound-proof}
  \sup_{z\in\real^p}\|\nabla^{(l)}\cP(z)\|_{\op}\le C^{(l)}<\infty,
  \qquad l=0,1,2,3.
\end{align}
Since $h\in\mathcal C^3_b(\real)$, the chain rule gives
$\|\nabla^{(l)}(h\circ\cP)(z)\|_{\op}\le\bar C^{(l)}<\infty$. Applying the
Leibniz rule to $g_2=(h\circ\cP)\,F$ and using~\eqref{eq:deriv-F} for
$j=1$,
\begin{align*}
  \|\nabla^{(l)}g_2(z)\|_{\op}
  &\le\sum_{k=0}^l\binom{l}{k}
     \|\nabla^{(k)}(h\circ\cP)(z)\|_{\op}\,
     \|\nabla^{(l-k)}F(z)\|_{\op}\\
  &\le\Big(\sum_{k=0}^l\binom{l}{k}\bar C^{(k)}\widetilde C_1^{(l-k)}\Big)F(z)
  =:\widetilde C_2^{(l)}F(z),
\end{align*}
which establishes~\eqref{eq:deriv-F} for $j=2$.

\medskip
\textit{Part (ii) (expected bounds at $\gamma_n$).}
By part~(i) it suffices to bound $\Ee[F(\gamma_n)]$. The mean-value
theorem and~\eqref{eq:f-bound} give
$f(\widetilde\Sigma^{1/2}\gamma_n+\widetilde\mu_n)
\le f(\widetilde\mu_n)+C_R\|\gamma_n\|$, so
\begin{align*}
  \Ee[F(\gamma_n)]
  \le\exp\{f(\widetilde\mu_n)\}\,\Ee\big[\exp\{C_R\|\gamma_n\|\}\big].
\end{align*}
By the uniform sub-Gaussianity of $\{Y_{i,n}\}$ in
Assumption~\ref{ass:linearizable} and independence, for every
$v\in\real^p$,
\begin{align*}
  \Ee\big[\exp\{v^\top\gamma_n\}\big]
  \le\exp\Big\{\frac{K_0^2}{2n}\sum_{i=1}^n
     \|\widetilde a_{i,n}\|_{\op}^2\|v\|^2\Big\}
  \le\exp\{K_0^2 p\,\|v\|^2/2\},
\end{align*}
using $\frac1n\sum_i\|\widetilde a_{i,n}\|_{\op}^2
\le\frac1n\sum_i\|\widetilde a_{i,n}\|_F^2=p$. Thus $\gamma_n$ is
sub-Gaussian uniformly in $n$, and a standard Chernoff bound gives
$\sup_n\Ee[\exp\{C_R\|\gamma_n\|\}]\le C<\infty$. Setting
$C_j^{(l)}=\widetilde C_j^{(l)}C$ yields~\eqref{eq:deriv-gamma}.
\end{proof}

\subsection{Results for Applications}

\subsubsection{Proof for A/B/n testing with binary outcomes}

For success probabilities, let $\mu_n=\sqrt {n} \pi$ where $\pi=c(\pi_1, \ldots, \pi_p)^\top$, $\Sigma=\operatorname{Diag}(\pi_j(1-\pi_j))_{j=1}^p$, and $T_{n}=\sqrt{n}\widehat\pi_n$ where $\widehat\pi_{n,j}=n^{-1}\sum_{i=1}^n O_{ij}$. 
For $i\in[n]$, denote $Y_{i,n}=\Sigma^{-1/2}(O_{i1}-\pi_1,\ldots,O_{ip}-\pi_p)^\top\in\mathbb R^d$, where $d=p$.
\begin{proposition}[ALR of A/B/n testing for success probability]\label{prop:ALR.ABn.prob}
Assume there exists $\varepsilon>0$ such that $\varepsilon \le \pi_j \le 1-\varepsilon$ for all $j\in[p]$.
Then
\begin{align*}
  \Sigma^{-1/2}(T_n - \mu_n)
  =
  \frac{1}{\sqrt{n}} \sum_{i=1}^n a_{i,n} Y_{i,n} + R_n,
\end{align*}
where (i) $a_{i,n}=I_p$, $R_n=0$, and 
\begin{align*}
  \frac{1}{n}\sum_{i=1}^n a_{i,n}a_{i,n}^{\top}=I_p,
  \quad
  \sup_{n}\max_{1\le i\le n}\|a_{i,n}\|_{\op} = 1 < \infty,
  \quad
  \sup_n\Ee\left[\exp\{4C_R\|R_n\|\}\right] = 1 <\infty,
\end{align*}
and (ii) $\{Y_{i,n}:1\le i\le n\}$ satisfies
\begin{align*}
  \Ee[Y_{i,n}]=0_d,
  \quad
  \Var(Y_{i,n})=I_d,
  \quad
  \sup_{n}\max_{1\le i\le n}
  \Ee[\exp\{s^\top Y_{i,n}\}]
  \le
  \exp\left\{\frac{\|s\|^2}{2\varepsilon(1-\varepsilon)}\right\}, ~
  \forall s\in\real^d.
\end{align*}
\end{proposition}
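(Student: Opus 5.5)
The plan is to verify the representation directly from the definitions and then check each condition of Assumption~\ref{ass:linearizable} in turn. Since $T_n=\sqrt n\,\widehat\pi_n$ and $\mu_n=\sqrt n\,\pi$, we have
\[
T_n-\mu_n=\frac{1}{\sqrt n}\sum_{i=1}^n (O_{i1}-\pi_1,\ldots,O_{ip}-\pi_p)^\top .
\]
Multiplying by $\Sigma^{-1/2}$, which is diagonal with entries $(\pi_j(1-\pi_j))^{-1/2}$ and well defined because $\pi_j\in[\varepsilon,1-\varepsilon]$, gives exactly $n^{-1/2}\sum_i Y_{i,n}$. Hence the identity holds with $a_{i,n}=I_p$ and $R_n=0$, without any approximation.

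Part (i) is then immediate. We have $n^{-1}\sum_i I_pI_p^\top=I_p$ and $\|I_p\|_{\op}=1$. Because $R_n\equiv0$, the exponential moment $\Ee[\exp\{4C_R\|R_n\|\}]$ equals $1$, whatever the value of $C_R$.

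For part (ii), independence across $i$ comes from independence of the subjects. Within a fixed $i$, the coordinates $O_{i1},\ldots,O_{ip}$ correspond to different treatment arms, so they are independent Bernoulli variables. The mean is zero by construction. The variance of coordinate $j$ is $\pi_j(1-\pi_j)/(\pi_j(1-\pi_j))=1$, and independence across arms gives zero covariances, so $\Var(Y_{i,n})=I_d$.

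For the sub-Gaussian bound, independence across coordinates lets the MGF factor:
\[
\Ee[e^{s^\top Y_{i,n}}]=\prod_j \Ee\Big[\exp\Big\{s_j\,\frac{O_{ij}-\pi_j}{\sqrt{\pi_j(1-\pi_j)}}\Big\}\Big].
\]
Each centered Bernoulli $O_{ij}-\pi_j$ lies in an interval of length $1$. Hoeffding's lemma therefore gives $\Ee[e^{\lambda(O_{ij}-\pi_j)}]\le e^{\lambda^2/8}$. Taking $\lambda=s_j/\sqrt{\pi_j(1-\pi_j)}$, each factor is at most
\[
\exp\{s_j^2/(8\pi_j(1-\pi_j))\}\le \exp\{s_j^2/(8\varepsilon(1-\varepsilon))\},
\]
using $\pi_j(1-\pi_j)\ge\varepsilon(1-\varepsilon)$ on $[\varepsilon,1-\varepsilon]$. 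Multiplying over $j$ gives $\exp\{\|s\|^2/(8\varepsilon(1-\varepsilon))\}$. This is at most the stated bound $\exp\{\|s\|^2/(2\varepsilon(1-\varepsilon))\}$, uniformly in $i$ and $n$. So $K_0^2=1/\varepsilon(1-\varepsilon)$ is a valid constant.

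The only point requiring care is the within-subject independence across arms. If the design instead had the $p$ coordinates dependent (for example, one subject assigned across arms), I would not factor the MGF. I would instead apply Hoeffding's lemma to the bounded scalar $s^\top Y_{i,n}$, whose range has length at most $\sum_j|s_j|/\sqrt{\varepsilon(1-\varepsilon)}\le\sqrt p\|s\|/\sqrt{\varepsilon(1-\varepsilon)}$. That yields a bound of the same form with an extra factor $p$ absorbed into $K_0$, and still uniform in $n$.
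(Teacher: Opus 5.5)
Your proposal is correct and follows the same route as the paper. The identity holds exactly with $a_{i,n}=I_p$ and $R_n=0$, and the sub-Gaussian bound comes from Hoeffding's lemma applied to each coordinate and multiplied across the independent arms; the only cosmetic difference is that you apply Hoeffding to the unit-length range of $O_{ij}-\pi_j$, whereas the paper uses the symmetric bound $|Y_{i,n,j}|\le 1/\sqrt{\varepsilon(1-\varepsilon)}$, so you get the sharper constant $1/8$ in place of $1/2$, which still implies the stated bound.
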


\begin{proof}\label{proof:prop:ALR.ABn.prob}    
Since $O_{ij}\sim \mathrm{Bernoulli}(\pi_j)$, it holds that $\Ee[O_{ij}-\pi_j] = 0$ and $\Var(O_{ij}-\pi_j)=\pi_j(1-\pi_j)$.
Therefore, $\Ee[Y_{i,n}]=0_d$ and $\Var(Y_{i,n})=I_d$.
In addition, as $\varepsilon \le \pi_j \le 1 - \varepsilon$, we have $|Y_{i,n,j}| \le 1/\sqrt{\varepsilon(1-\varepsilon)}$. 
Therefore, by Hoeffding's lemma, for any $s_j \in \real$,
\begin{align*}
    \Ee[\exp\{s_j Y_{i,n,j}\}]
    \le
    \exp\left\{
        \frac{s_j^2}{2\varepsilon(1-\varepsilon)}
    \right\}.
\end{align*}
Using the independence of $Y_{i,n,j}$ across $j$, for any $s\in\real^d$,
\begin{align*}
    \Ee[\exp\{s^\top Y_{i,n}\}]
    =
    \prod_{j=1}^p
    \Ee[\exp\{s_jY_{i,n,j}\}] 
    \le
    \exp\left\{
        \frac{\|s\|^2}{2\varepsilon(1-\varepsilon)}
    \right\},
\end{align*}
which completes the proof.
\end{proof}

For log-odds, for some $\varepsilon' > 0$, let
$T_{n,\varepsilon'}=\sqrt n\log\{\widehat\pi_{n,\varepsilon'}/(1-\widehat\pi_{n,\varepsilon'})\}$ where $\widehat\pi_{n,\varepsilon'} = \min \{1-\varepsilon', \max\{\varepsilon', \widehat\pi_{n}\}\}$,
$\mu_n=\sqrt n\log\{\pi/(1-\pi)\}$, and
$\Sigma=\operatorname{Diag}\left(1/\{\pi_j(1-\pi_j)\}\right)_{j=1}^p$. For $i\in[n]$, denote
$Y_{i,n}=\Sigma^{-1/2}\left(\frac{O_{i1}-\pi_1}{\pi_1(1-\pi_1)},\ldots,\frac{O_{ip}-\pi_p}{\pi_p(1-\pi_p)}\right)^\top\in\mathbb R^d$, where $d=p$.
\begin{proposition}[ALR of A/B/n testing for log-odds]\label{prop:ALR.ABn.odds}
Assume there exists $\varepsilon>0$ such that $\varepsilon \le \pi_j \le 1-\varepsilon$ for all $j\in[p]$. Then
\begin{align*}
  \Sigma^{-1/2}(T_{n, \varepsilon/2} - \mu_n)
  =
  \frac{1}{\sqrt{n}} \sum_{i=1}^n a_{i,n} Y_{i,n} + R_{n,\varepsilon/2},
\end{align*}
where (i) $a_{i,m}= I_p$ and $R_{n,\varepsilon/2} = o_p(1)$ and
\begin{align*}
  \frac{1}{n}\sum_{i=1}^n a_{i,n}a_{i,n}^{\top}=I_p,
  \quad
  \sup_{n}\max_{1\le i\le n}\|a_{i,n}\|_{\op} = 1 < \infty,
  \quad
  \sup_n\Ee\left[\exp\{4C_R\|R_{n,\varepsilon/2}\|\}\right] <\infty,
\end{align*}
and (ii) $\{Y_{i,n}:1\le i\le n\}$ satisfies
\begin{align*}
  \Ee[Y_{i,n}]=0_d,
  \quad
  \Var(Y_{i,n})=I_d,
  \quad
  \sup_{n}\max_{1\le i\le n}
  \Ee[\exp\{s^\top Y_{i,n}\}]
  \le
  \exp\left\{\frac{\|s\|^2}{2\varepsilon(1-\varepsilon)}\right\},
  \quad
  \forall s\in\real^d.
\end{align*}
\end{proposition}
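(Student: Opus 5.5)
The plan is a delta-method expansion of the clipped log-odds map around $\pi$, with the clipping event handled by Hoeffding's inequality. Part (ii) involves no log-odds at all. Coordinatewise, $Y_{i,n,j}=(O_{ij}-\pi_j)/\sqrt{\pi_j(1-\pi_j)}$, which is exactly the standardized Bernoulli variable from Proposition~\ref{prop:ALR.ABn.prob}. Hence the mean, identity covariance, and the Hoeffding-based sub-Gaussian bound carry over verbatim, using independence across $j$. Likewise, $a_{i,n}=I_p$ gives the two matrix conditions in (i) trivially. The only real work is the remainder $R_{n,\varepsilon/2}$.

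Write $g(x)=\log\{x/(1-x)\}$, so that $g'(x)=1/\{x(1-x)\}$. Let $\bar g$ denote $g$ composed with clipping to $[\varepsilon/2,1-\varepsilon/2]$, applied coordinatewise. Since $\pi_j\in[\varepsilon,1-\varepsilon]$, we have $\bar g(\pi_j)=g(\pi_j)$. A direct computation gives
\[
\frac{1}{\sqrt n}\sum_i Y_{i,n,j}=\sqrt{\pi_j(1-\pi_j)}\,g'(\pi_j)\sqrt n(\widehat\pi_{n,j}-\pi_j).
\]
Consequently
\[
R_{n,j}=\sqrt{\pi_j(1-\pi_j)}\,\sqrt n\bigl\{\bar g(\widehat\pi_{n,j})-g(\pi_j)-g'(\pi_j)(\widehat\pi_{n,j}-\pi_j)\bigr\}.
\]
I will split on the event $A_{n}=\{|\widehat\pi_{n,j}-\pi_j|\le\varepsilon/2\ \forall j\}$.

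On $A_n$ no clipping occurs, and $g''$ is bounded by some $L_\varepsilon$ on $[\varepsilon/2,1-\varepsilon/2]$. Taylor's theorem then gives $|R_{n,j}|\le C\sqrt n(\widehat\pi_{n,j}-\pi_j)^2$. Off $A_n$, clipping makes $|\bar g(\widehat\pi_{n,j})-g(\pi_j)|\le 2|g(\varepsilon/2)|$ and $|\widehat\pi_{n,j}-\pi_j|\le1$, so $|R_{n,j}|\le C\sqrt n$. In both cases the crude global bound $|R_{n,j}|\le C\sqrt n\,|\widehat\pi_{n,j}-\pi_j|$ holds, because a clipped Lipschitz map minus a linear map is Lipschitz.

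The claim $R_n=o_p(1)$ follows from this split. We have $\Pp(A_n^c)\le 2p\,e^{-n\varepsilon^2/2}\to0$ by Hoeffding. On $A_n$, the bound $\sqrt n(\widehat\pi-\pi)^2=O_p(n^{-1/2})$ finishes it.

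The main obstacle is the exponential moment $\sup_n\Ee[\exp\{4C_R\|R_n\|\}]<\infty$, since $R_n$ is not bounded uniformly in $n$. I will use the global bound: with $U_j=\sqrt n|\widehat\pi_{n,j}-\pi_j|$, which is sub-Gaussian uniformly in $n$ by Hoeffding, we get $\|R_n\|\le C\sum_j\min\{U_j^2/\sqrt n,\ U_j\}$. For each fixed $c$, $\Ee e^{cU_j}\le 2e^{c^2/8}\cdot$const uniformly in $n$, because $U_j$ has tails $2e^{-2u^2}$. Bounding $\min\{U_j^2/\sqrt n,U_j\}\le U_j$ gives $\Ee\exp\{4C_R\|R_n\|\}\le\prod_j$ of such moments, via H\"older across $j$. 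This is finite and uniform in $n$, which completes (i).

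The one point to double-check is that $\Ee e^{cU}<\infty$ for every $c$; this holds precisely because of the sub-Gaussian tail rather than a merely sub-exponential one, so the bound holds without restricting the size of $C_R$.
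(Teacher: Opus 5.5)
Your proof is correct. For part (ii) and for $R_{n,\varepsilon/2}=o_p(1)$ it is the paper's argument: you reuse the standardized-Bernoulli Hoeffding bound from Proposition~\ref{prop:ALR.ABn.prob}. You Taylor-expand $g$ on $\{\max_j|\widehat\pi_{n,j}-\pi_j|\le\varepsilon/2\}$, where no clipping occurs, and you handle the complement by Hoeffding and a union bound.

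You differ from the paper on the exponential-moment condition. The paper uses a quadratic bound $|R_{n,\varepsilon/2,j}|\lesssim\sqrt n\,(\widehat\pi_{n,\varepsilon/2,j}-\pi_j)^2$. But $e^{ct^2/\sqrt n}$ is integrable against the Hoeffding tail $e^{-2t^2}$ only once $c/\sqrt n<2$. So the paper must pair the quadratic bound with a crude $\exp\{C\sqrt n\}$ bound for small $n$ and take the minimum. Your global Lipschitz bound $|R_{n,j}|\le C\sqrt n\,|\widehat\pi_{n,j}-\pi_j|$ is cruder in $n$ but linear in $U_j$. Since $\Ee e^{cU_j}\le 2e^{c^2/8}$ for every $c$ and every $n$, it gives the uniform bound in one step, with no case split in $n$.

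Your bound also covers the clipping event without extra bookkeeping. The linear part of the ALR is built from the unclipped $\widehat\pi_{n,j}$. So when clipping is active, $R_{n,\varepsilon/2,j}$ contains the extra term $\sqrt{\pi_j(1-\pi_j)}\,g'(\pi_j)\sqrt n\,(\widehat\pi_{n,\varepsilon/2,j}-\widehat\pi_{n,j})$, which is linear rather than quadratic in the deviation. The paper's displayed bound is stated only in terms of the clipped estimator and leaves this term implicit. Your Lipschitz bound on $\bar g(x)-g(\pi_j)-g'(\pi_j)(x-\pi_j)$ absorbs it automatically.

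The quadratic bound is the sharper description of the remainder's $O_p(n^{-1/2})$ size, which is why both arguments use it for the $o_p(1)$ claim. However, Assumption~\ref{ass:linearizable} asks only for a uniformly bounded exponential moment, and for that your linear bound suffices.
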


\begin{proof}\label{proof:prop:ALR.ABn.odds}    
Similar to the proof of Proposition~\ref{proof:prop:ALR.ABn.prob}, we have $\Ee[Y_{i,n}]=0_d$, $\Var(Y_{i,n})=I_d$, and $|Y_{i,n,j}| \le 1/\sqrt{\varepsilon(1-\varepsilon)}$, $\Ee[\exp\{s^\top Y_{i,n}\}] \le \exp\left\{ \frac{\|s\|^2}{2\varepsilon(1-\varepsilon)} \right\}$ for any $s \in \real^d$.

Next, we analyze the reminder term $R_{n,\varepsilon/2}$. 
Define $\mathcal B_{n,\varepsilon/2} := \left\{\max_{j\in[p]}|\widehat\pi_{n,j}-\pi_j|\le \varepsilon/2
    \right\}$.
On $\mathcal B_{n,\varepsilon/2}$, we have
$\widehat\pi_{n,\varepsilon/2}=\widehat\pi_n$. Hence, by the coordinate-wise Taylor expansion of
$g(x)=\log\{x/(1-x)\}$ around $\pi_j$,
\begin{align*}
    \left\|
    \Sigma^{-1/2}(T_{n,\varepsilon/2}-\mu_n)
    -
    \frac{1}{\sqrt n}\sum_{i=1}^n a_{i,n}Y_{i,n}
    \right\|_2
    \le
    \|\Sigma^{-1/2}\|_{\op}\sqrt n  C
    \|\widehat\pi_{n,\varepsilon/2}-\pi\|_2^2  
    = o_p(1),
\end{align*}
where $C<\infty$ denotes an upper bound for $|g''(x)|$ on
$[\varepsilon/2,1-\varepsilon/2]$, and the last equality follows from
$\Ee[\sqrt n\|\widehat\pi_{n,\varepsilon/2}-\pi\|_2^2] < \Ee[\sqrt n\|\widehat\pi_{n}-\pi\|_2^2]=O(n^{-1/2})$.
On the other hand, by Hoeffding's inequality and a union bound,
\begin{align*}
\mathbb{P}(\mathcal B_{n,\varepsilon/2}^c) \le
    \sum_{j=1}^p
    \Pr\left(
        |\widehat\pi_{n,j}-\pi_j|>\varepsilon/2
    \right) \le
    2p\exp\{-n\varepsilon^2/2\}
    \to 0.
\end{align*}
Combining the two parts, we have $ R_{n,\varepsilon/2}=o_p(1)$.

In addition, as $\widehat\pi_{n,\varepsilon/2}/\pi$, $(1-\widehat\pi_{n,\varepsilon/2})/(1-\pi)>\varepsilon/2$, and $|\log(1+x) - x | \le \frac{x^2}{2 \min\{x+1,1\}}$ for $x > -1$, we have
\begin{align*}
    |R_{n, \varepsilon/2,j}| \le \sqrt{n} \left(\frac{(\widehat{\pi}_{n,\varepsilon/2,j} - \pi_j)^2}{2(\varepsilon/2)} + \frac{(1-\widehat{\pi}_{n,\varepsilon/2,j} -(1- \pi_j))^2}{2(\varepsilon/2)} \right) = \frac{2\sqrt{n} (\widehat{\pi}_{n,\varepsilon/2,j} - \pi_j)^2}{\varepsilon}.
\end{align*}
As $C_R = c_f^{(1)} \max\{\sigma_{j_o}, \|\Gamma_{j_o}^{1/2}\|_{\text{op}}\} \le \frac{c_f^{(1)}}{\varepsilon (1-\varepsilon)} < \infty$, by Hoeffding's lemma, 
\begin{align*}
    &\sup_n \Ee\!\left[\exp\{4C_R\|R_{n, \varepsilon/2}\|\}\right]
    \le  \sup_n \max_{j \in [p]} \Ee\!\left[\exp\{4pC_R|R_{n, \varepsilon/2,j}|\}\right]\\
    \le&  \sup_n \min\left\{\max_{j \in [p]} \int 2 \exp\{-2t^2\} \exp\{(8pC_R/\varepsilon) t^2 /\sqrt{n} \} dt,\exp\{8pC_R\sqrt{n}/\varepsilon\} \right\}
    <\infty.
\end{align*}

\end{proof}

\subsubsection{Proof for ranking via Bradley–Terry–Davidson models}

\begin{proposition}[ALR of the BT--Davidson MLE]
\label{prop:BTD_normal}
Assume $p_i^{(r)}\geq\varepsilon>0$, $\Sigma$ is positive definite, the log-likelihood~\eqref{eq:BTD_loglik} is uniformly strongly convex and has uniformly bounded third-order derivative, and $\theta$ is bounded.
Suppose $n_{jl}/n\to\lambda_{jl}>0$ as
$n\to\infty$. 
% Under $\sum_j\mu_j=0$ and standard regularity conditions,
There exist triangular array weights
$a_{i,n}\in\mathbb{R}^{(p-1)\times 2}$, $i\in[n]$, $n\geq 1$ such that
\begin{align}
  \Sigma^{-1/2}
  (T_n-\mu_n)
  \;=\;
  \frac{1}{\sqrt{n}}\sum_{i=1}^n a_{i,n}\,Y_{i,n}
  \;+\;o_p(1),
  \label{eq:BTD_ALR}
\end{align}
where (i) $R_n=o_p(1)$ satisfies 
\begin{align*}
  \frac{1}{n}\sum_{i=1}^n a_{i,n}a_{i,n}^{\top}=I_{p-1},
  \quad
  \sup_{n}\max_{1\le i\le n}\|a_{i,n}\|_{\op} = 1 < \infty,
  \quad
  \sup_n\Ee\left[\exp\{4C_R\|R_n\|\}\right] <\infty,
\end{align*}
and (ii) $\{Y_{i,n}:1\le i\le n\}$ satisfies
\begin{align*}
  \Ee[Y_{i,n}]=0_d,
  \quad
  \Var(Y_{i,n})=I_d,
  \quad
  \sup_{n}\max_{1\le i\le n}
  \Ee[\exp\{s^\top Y_{i,n}\}]
  \le
  \exp\left\{C\|s\|^2\right\},
  \quad
  \forall s\in\real^d,
\end{align*}
for some $C > 0$ and $d = 2$.

\end{proposition}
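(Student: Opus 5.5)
The plan is the standard M-estimation route: expand the score of the constrained log-likelihood \eqref{eq:BTD_loglik} around the truth, read off the linear term as a sum over matches, and control the remainder using strong convexity and the bounded third derivative.

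\textbf{Step 1: reduce to unconstrained coordinates.} Reparametrize $\theta$ on the sum-zero subspace through a fixed $p\times(p-1)$ orthonormal basis $B$, writing $\theta=B\beta$, and let $\psi=(\beta,\eta)\in\real^{p}$.

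\textbf{Step 2: per-match score and its standardization.} The per-match score is
$$
\dot\ell_i(\psi)=\bigl(B^\top e_{i_1}(O_i^{(1)}-p_i^{(1)})+B^\top e_{i_2}(O_i^{(2)}-p_i^{(2)})+\tfrac12 B^\top(e_{i_1}+e_{i_2})(O_i^{(3)}-p_i^{(3)}),\ O_i^{(3)}-p_i^{(3)}\bigr).
$$
Because $\sum_r (O_i^{(r)}-p_i^{(r)})=0$, this score is a linear image of a two-dimensional centered vector $\xi_i=(O_i^{(1)}-p_i^{(1)},\,O_i^{(3)}-p_i^{(3)})$. Set $Y_{i,n}=V_i^{-1/2}\xi_i$, where $V_i=\mathrm{Cov}(\xi_i)$. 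Since $p_i^{(r)}\ge\varepsilon$, $V_i$ is nonsingular with eigenvalues bounded below uniformly in $i$. As $Y_{i,n}$ is bounded, Hoeffding's lemma gives uniform sub-Gaussianity with some constant $C$, and we get $\Ee Y_{i,n}=0$ and $\Var Y_{i,n}=I_2$.

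\textbf{Step 3: the linear term.} Write $I_n=n^{-1}\sum_i\Ee[-\ddot\ell_i]$. Since $n_{jl}/n\to\lambda_{jl}>0$, $I_n$ converges to the limiting Fisher information. The standard expansion gives $\sqrt n(\widehat\psi_n-\psi)=I_n^{-1}n^{-1/2}\sum_i\dot\ell_i+R_n'$. Extracting the $\beta$-block and mapping back by $B$ produces $b_{i,n}Y_{i,n}$, where $b_{i,n}=[I_n^{-1}]_{\theta\cdot}(\text{design}_i)V_i^{1/2}$, and the $\theta$-block of $I_n^{-1}$ is the inverse Schur complement. Define $a_{i,n}=\Sigma_n^{-1/2}b_{i,n}$ with $\Sigma_n=n^{-1}\sum_i b_{i,n}b_{i,n}^\top$; this makes $n^{-1}\sum_i a_{i,n}a_{i,n}^\top=I$ exact. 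The substitution of $\Sigma$ for $\Sigma_n$ is absorbed into the remainder, as allowed by the remark after Assumption~\ref{ass:linearizable}, since $\Sigma_n\to\Sigma$ by $n_{jl}/n\to\lambda_{jl}$. The operator-norm bound on $a_{i,n}$ is uniform, since the design vectors are bounded and $\Sigma$ is positive definite; it becomes exactly $1$ only after rescaling, and I would only claim a uniform bound.

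\textbf{Step 4 (main obstacle): the exponential moment of $R_n$.} Showing $R_n=o_p(1)$ is routine. The hard part is verifying $\sup_n\Ee\exp\{4C_R\|R_n\|\}<\infty$, because the MLE may be poorly behaved on rare events. I would use the assumed uniform strong convexity with modulus $\kappa$. With $S_n=n^{-1}\sum_i\dot\ell_i(\psi)$, strong convexity gives $\|\widehat\psi_n-\psi\|\le\|S_n\|/\kappa$ deterministically. A Taylor expansion with the bounded third derivative then gives
$$
\|R_n'\|\le C\sqrt n\,\|\widehat\psi_n-\psi\|^2+\|(\widehat H_n^{-1}-I_n^{-1})\sqrt n S_n\|\le C'\sqrt n\,\|S_n\|^2,
$$
where the middle term is handled because the Hessian deviation is Lipschitz in $\psi$ plus a bounded-increment average.

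Now $\sqrt n S_n$ is uniformly sub-Gaussian by Hoeffding, so $\|R_n\|\le C''\|\sqrt nS_n\|^2/\sqrt n$. Hence, for $n$ large,
$$
\Ee\exp\{4C_R\|R_n\|\}\le\Ee\exp\{c\|\sqrt nS_n\|^2/\sqrt n\},
$$
which is finite uniformly because the coefficient $c/\sqrt n$ eventually falls below the sub-Gaussian threshold. For small $n$, I would use the crude bound $\|S_n\|\le C$ (bounded scores), so $\|R_n\|\le C\sqrt n$ and each such term is finite; this is the same device as the $\min$ in the proof of Proposition~\ref{prop:ALR.ABn.odds}. The boundedness of $\theta$ ensures that the constants $\varepsilon$, $\kappa$ and $C$ hold uniformly.
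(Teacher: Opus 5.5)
Your proposal is correct and follows essentially the same route as the paper's proof. Both proofs write the per-match scores through a standardized two-dimensional multinomial residual $Y_{i,n}$, take the linear term $\mathcal{I}_{\theta\mid\eta,n}^{-1}n^{-1/2}\sum_i\widetilde S_{i,n}$ (your $\theta$-block of $I_n^{-1}$), and bound the remainder by $\delta_n\|Z_n\|+Kn^{-1/2}\|U_n\|^2$ using strong concavity, the bounded third derivative and uniform sub-Gaussianity. Your threshold argument for the quadratic exponential moment, together with the crude bound for small $n$, is more explicit than the paper's one-line appeal to Cauchy's inequality.

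Two small bookkeeping points. First, your Step~4 bound on $\|R_n\|$ omits the term $(\Sigma^{-1/2}\Sigma_n^{1/2}-I_{p-1})\,n^{-1/2}\sum_i a_{i,n}Y_{i,n}$ that arises when you replace $\Sigma_n$ by $\Sigma$. This is harmless: its coefficient is deterministic and tends to $0$, and the sum is uniformly sub-Gaussian. Second, no rescaling can make $\max_i\|a_{i,n}\|_{\op}$ equal to $1$. The constraint $\frac1n\sum_i a_{i,n}a_{i,n}^\top=I_{p-1}$ with $\operatorname{rank}a_{i,n}\le 2$ forces $\max_i\|a_{i,n}\|_{\op}^2\ge(p-1)/2$, so claiming only a uniform bound is exactly right.
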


\begin{proof}

Let $S_{\theta,i,n}$ and $S_{\eta,i,n}$ denote the score regarding $\theta$, $\eta$ of match $i$, respectively, and define $\widetilde S_{i,n}:=S_{\theta,i,n}-I_{\theta\eta,n}I_{\eta\eta,n}^{-1}S_{\eta,i,n}$. Choose $L_{i,n}\in\mathbb R^{3\times2}$ such that $L_{i,n}L_{i,n}^\top=\operatorname{Diag}(p_i)-p_ip_i^\top$. 
For $i=(i_1,i_2)$ define 
$B_{i,n}:=(e_{i_1}, e_{i_2},
(e_{i_1}+e_{i_2})/2)-I_{\theta\eta}I_{\eta\eta}^{-1}(0,0,1)L_{i,n}$,
where $e_j$ is the $j$-th standard basis vector.
Then let $Y_{i,n} := B_{i,n}^{-1}\widetilde S_{i,n}$, where
$B_{i,n}^{-1}$ is the pseudo-inverse of $B_{i,n}$. 
Then $Y_{i,n}$ satisfies $\mathbb E[Y_{i,n}]=0$, and $\operatorname{Var}(Y_{i,n})=I_2$. 
Finally, we set $a_{i,n}:=\Sigma^{-1/2}B_{i,n}$, and then
$\frac{1}{n}\sum_{i=1}^n a_{i,n}a_{i,n}^\top=I_{p-1}$.

By the asymptotic linear expansion of the constrained maximum likelihood estimator,
\begin{align*}    
T_n - \mu_n
=\mathcal I_{\theta\mid\eta,n}^{-1}
\frac1{\sqrt n}\sum_{i=1}^n\widetilde S_{i,n}+o_p(1).
\end{align*}
Multiplying both sides by
$\Sigma^{-1/2}$
and by the convergence
\(\Sigma_n \to \Sigma \), we obtain
\[\Sigma^{-1/2}
(T_n - \mu_n)
=
\frac1{\sqrt n}\sum_{i=1}^n a_{i,n}Y_{i,n}
+
o_p(1).
\]

% Moreover, uniform positivity implies $\sup_{i,n}|L_{i,n}^{+}|{\op}<\infty$. 
Since the centered multinomial outcomes are bounded, then $|Y_{i,n}|\leq C$ uniformly.
Hence, by Hoeffding's lemma,
\begin{align*}
\sup_n\max_{1\leq i\leq n}
\Ee\left[\exp\{s^\top Y_{i,n}\}\right]
&\leq
\exp\left\{\frac{C^2}{2}\|s\|^2\right\},
\quad s\in\mathbb R^2.
\end{align*}

Define $Z_n:=\frac1{\sqrt n}\sum_{i=1}^n a_{i,n}Y_{i,n}$, $
U_n:=\frac1{\sqrt n}\sum_{i=1}^n
(S_{\theta,i,n}^\top,S_{\eta,i,n})^\top$. Let
$\delta_n:=\|\Sigma^{-1/2}\Sigma_n^{-1/2}-I_{p-1}\|_{\op}\to0$, where $\Sigma_n$ is the empirical estimate of $\Sigma$.
By the uniform strong
concavity and bounded third derivatives of the constrained likelihood, we have
\begin{align*}    
\|R_n\|\leq\delta_n\|Z_n\|+\frac{K}{\sqrt n}\|U_n\|^2,
\end{align*}
and $R_n=o_p(1)$.
Since $Z_n$ and $U_n$ are uniformly sub-Gaussian, by Cauchy's inequality, 
$\sup_n\Ee[\exp\{4C_R\|R_n\|\}]<\infty$.
\end{proof}

\subsubsection{Proof for non-parametric feature importance}

\begin{proposition}[ALR of the feature-importance estimator]
\label{prop:VIM_normal}
% Let $\widehat{\mu}_n=(\widehat{\mu}_{1,n},\ldots,\widehat{\mu}_{p,n})$ be the variable-importance estimator based on nuisance estimates obtained from an independent dataset. 
% \textcolor{red}{Since $a_{i,m}=I_p$, can we simply omit it from the representation below?}
% \zg{I was thinking of keeping the same notations as in  Assumption~\ref{ass:linearizable}. But happy to drop it if that feels redundant :)}
Assume $\Var(O_i) > \varepsilon$ for some $\varepsilon > 0$, $\Sigma$ is positive definite, $|O_i|$, $|m(x)|$, $|m_{-j}(x_{-j})|$, $|\widehat{m}(x)|$, $|\widehat{m}_{-j}(x_{-j})|$ are uniformly bounded,
% which implies $\psi_{j}(X,0, \neq 0$; 
% $\psi(X,O; \widehat{\eta})$ belongs to a $\mathbb{P}$-Donsker class with probability tending to one, 
and the nuisance estimators satisfy $\|\widehat m - m\|_2 = o_P(n^{-1/4})$ and $\|\widehat m_{-j} - m_{-j}\|_2 = o_P(n^{-1/4})$ for all $j \in [p]$. Then
\begin{align}
\Sigma^{-1/2}(T_{n,\varepsilon/2}-\mu_n)
&=
\frac{1}{\sqrt{n}}\sum_{i=1}^n a_{i,n} Y_{i,n}
+ R_n,
\label{eq:VIM_ALR}
\end{align}
where (i) $a_{i,m}=I_p$ and $R_n=o_p(1)$ satisfy
\begin{align*}
  \frac{1}{n}\sum_{i=1}^n a_{i,n}a_{i,n}^{\top}=I_p,
  \quad
  \sup_{n}\max_{1\le i\le n}\|a_{i,n}\|_{\op} = 1 < \infty,
  \quad
  \sup_n\Ee\left[\exp\{4C_R\|R_n\|\}\right] <\infty,
\end{align*}
and (ii) $\{Y_{i,n}:1\le i\le n\}$ satisfies
\begin{align*}
  \Ee[Y_{i,n}]=0_d,
  \quad
  \Var(Y_{i,n})=I_d,
  \quad
  \sup_{n}\max_{1\le i\le n}
  \Ee[\exp\{s^\top Y_{i,n}\}]
  \le
  \exp\left\{C\|s\|^2\right\},
  \quad
  \forall s\in\real^d,
\end{align*}
for some $C > 0$.
\end{proposition}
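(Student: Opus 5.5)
We verify the two parts of Assumption~\ref{ass:linearizable} for $T_{n,\varepsilon/2}=\sqrt n\,\widehat\psi_n$ with $a_{i,n}=I_p$, arguing conditionally on the hold-out sample so that $\widehat m,\widehat m_{-j}$ are fixed functions. Since $a_{i,n}=I_p$, the conditions $\frac1n\sum_i a_{i,n}a_{i,n}^\top=I_p$ and $\sup_n\max_i\|a_{i,n}\|_{\op}=1$ hold trivially.

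\textbf{Part (ii).} The summands $Y_{i,n}=\Sigma^{-1/2}(\psi_{i1}-\psi_1,\ldots,\psi_{ip}-\psi_p)^\top$ are i.i.d. across $i$. They have mean zero because $\Ee[\psi_{ij}]=\psi_j$: the cross term has conditional mean zero given $X_i$, and $\Ee[(m-m_{-j})^2]/\Var(O)=\psi_j$. Their variance is $\Sigma^{-1/2}\Var(\psi_{i\cdot})\Sigma^{-1/2}$, and it equals $I_p$ provided $\Sigma$ is the full covariance of $(\psi_{ij})_j$. The paper writes $\Sigma$ as a diagonal matrix; if the coordinates are correlated, we would take $\Sigma$ to be the full covariance, as the definition of $Y_{i,n}$ requires. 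Boundedness of $O$, $m$, $m_{-j}$ together with $\Var(O)>\varepsilon$ makes each $\psi_{ij}$ bounded by a constant $M$, and positive definiteness of $\Sigma$ then bounds $\|Y_{i,n}\|$. Hoeffding's lemma, applied to $s^\top Y_{i,n}$, which lies in an interval of length at most $2\|s\|\,\|\Sigma^{-1/2}\|_{\op}\sqrt p M$, gives the sub-Gaussian bound with some constant $C$.

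\textbf{Part (i), $R_n=o_p(1)$.} Write $\widehat\psi_{n,j}-\psi_j=(\mathbb P_n-\mathbb P)\psi_{\cdot j}+\mathrm{Rem}_j$ via the standard von Mises decomposition as in \citet{williamson2023general}, and split $\mathrm{Rem}_j$ into three terms:
\begin{enumerate}[label=(\alph*)]
\item An empirical-process term $(\mathbb P_n-\mathbb P)(\widehat\psi_{\cdot j}-\psi_{\cdot j})$. With nuisances fixed by the hold-out sample, this has conditional variance $O(\|\widehat m-m\|_2^2+\|\widehat m_{-j}-m_{-j}\|_2^2)/n$, so it is $o_p(n^{-1/2})$ by Chebyshev.
\item A second-order bias term $\mathbb P(\widehat\psi_{\cdot j})-\psi_j$. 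After expanding the square and using $\Ee[O-m(X)\mid X]=0$, this is bounded by $\|\widehat m-m\|_2^2+\|\widehat m_{-j}-m_{-j}\|_2^2$ plus a product term. It is $o_p(n^{-1/2})$ by the $o_P(n^{-1/4})$ rates.
\item A denominator term, coming from replacing $\Var(O)$ by $\max\{\varepsilon/2,\widehat\Var(O)\}$. The delta method gives linearization error $O_p(n^{-1})$ inside $\sqrt n$. The clipping is inactive except on an event of probability $e^{-cn}$ by Hoeffding, since $\Var(O)>\varepsilon$. The first-order piece $-\psi_j(\widehat\Var-\Var)/\Var$ is exactly what the correction term $-\psi_j(O_i-\Ee O)^2/\Var(O)$ in $\psi_{ij}$ absorbs.
\end{enumerate}
Multiplying by $\sqrt n\,\Sigma^{-1/2}$ then yields $R_n=o_p(1)$.

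\textbf{Part (i), the exponential moment, which is the main obstacle.} $o_p$ bounds do not control $\Ee[\exp\{4C_R\|R_n\|\}]$. The plan is to use boundedness directly. Every $\widehat\psi_{ij}$ and $\psi_{ij}$ is uniformly bounded, because the denominator is at least $\varepsilon/2$ and all regression functions are bounded. Hence
\[
\sqrt n\,\widehat\psi_n-\sqrt n\,\psi=\sqrt n(\mathbb P_n-\mathbb P)\widehat\psi_{\cdot}+\sqrt n\,\mathrm{bias}_n .
\]
The first term is conditionally sub-Gaussian uniformly in $n$ by Hoeffding, as a normalized sum of bounded i.i.d. terms. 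So $R_n$ equals a difference of two uniformly sub-Gaussian vectors plus the deterministic (given hold-out) bias $\sqrt n\,\mathrm{bias}_n$, and this gives a uniform exponential moment provided the bias stays bounded. For the bias, we strengthen the rate condition to an almost-sure bound $\sqrt n(\|\widehat m-m\|_2^2+\ldots)\le K$ eventually, or else truncate. We would then verify $\sup_n\Ee[\exp\{4C_R\|R_n\|\}]<\infty$ by Cauchy--Schwarz, exactly as in the proofs of Propositions~\ref{prop:ALR.ABn.odds} and~\ref{prop:BTD_normal}, and uniform integrability follows.
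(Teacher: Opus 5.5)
Your route is the paper's: boundedness plus Hoeffding's lemma for (ii), and the von Mises expansion of \citet{williamson2023general} for $R_n=o_p(1)$, which the paper simply cites and you reconstruct in (a)--(c). For the exponential moment you also follow the paper, splitting into a centered sub-Gaussian part plus a bias. Two of your remarks are correct and improve on the paper's proof. First, $\Sigma$ must be the full covariance of $(\psi_{i1},\dots,\psi_{ip})$. Second, Hoeffding must be applied to $s^\top Y_{i,n}$ directly, because the coordinates are dependent; the paper's appeal to Proposition~\ref{prop:ALR.ABn.prob} relies on independence across coordinates. Several steps nevertheless fail as written.

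\textbf{The mean of $\psi_{ij}$.} Your mean computation drops the correction term. With $\psi_{ij}$ as defined, $\Ee[\psi_{ij}]=\psi_j-\psi_j\cdot\Ee[(O_i-\Ee O_i)^2]/\Var(O_i)=0$. So $\psi_{ij}$ is already the centered influence function, and $Y_{i,n}$ must be $\Sigma^{-1/2}(\psi_{i1},\dots,\psi_{ip})^\top$. With the extra $-\psi_j$ (a slip in the displayed definition), the linear term is off by $-\sqrt n\,\Sigma^{-1/2}\psi$.

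\textbf{The product term in (b).} With nuisances held fixed, the exact computation is
\begin{align*}
&\mathbb P\bigl[(\widehat m-\widehat m_{-j})^2+2(O-\widehat m)(\widehat m-\widehat m_{-j})\bigr]-\Ee\bigl[(m-m_{-j})^2\bigr]\\
&\qquad=\|\widehat m_{-j}-m_{-j}\|_2^2-\|\widehat m-m\|_2^2-2\,\Ee\bigl[(m-m_{-j})(\widehat m_{-j}-m_{-j})\bigr].
\end{align*}
The product term is first order: Cauchy--Schwarz only gives $O(\|\widehat m_{-j}-m_{-j}\|_2)$, so the $o_P(n^{-1/4})$ rates do not make it $o_P(n^{-1/2})$. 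It vanishes because $\widehat m_{-j}-m_{-j}$ is a function of $X_{-j}$ alone and $\Ee[m(X)-m_{-j}(X_{-j})\mid X_{-j}]=0$. You need this second orthogonality, not only $\Ee[O-m(X)\mid X]=0$, both here and to bound the bias below.

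\textbf{Sub-Gaussianity of the centered part.} All $\widehat\psi_{ij}$ share the random denominator $\max\{\varepsilon/2,\widehat{\Var}(O)\}$. Hence $T_{n,\varepsilon/2}$ is not a normalized sum of i.i.d.\ bounded terms, and Hoeffding does not apply. Instead, center at $\Ee[T_{n,\varepsilon/2}\mid\text{hold-out}]$ and use McDiarmid's inequality. Since the clipped denominator is at least $\varepsilon/2$ and all terms are bounded, replacing one observation moves each $\widehat\psi_{n,j}$ by $O(1/n)$.

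\textbf{The bias and the exponential moment.} You rightly note that $o_P(n^{-1/4})$ rates cannot control $\Ee[\exp\{4C_R\|R_n\|\}]$. Convergence of $\sqrt n\,\|\widehat m-m\|_2^2$ to zero in probability bounds neither its mean nor its exponential moment. The paper covers this step by asserting $\|\Ee[T_{n,\varepsilon/2}-\mu_n]\|=o(1)$ from the proof of Theorem~1 of \citet{williamson2023general}. That proof gives only in-probability control, so an extra hypothesis really is needed.

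However, ``or else truncate'' is not an argument. State the assumption explicitly, for example $\sqrt n\bigl(\|\widehat m-m\|_2^2+\max_j\|\widehat m_{-j}-m_{-j}\|_2^2\bigr)\le K$ almost surely for all $n\ge n_0$. Under it, the conditional bias $\sqrt n\bigl(\Ee[\widehat\psi_n\mid\text{hold-out}]-\psi\bigr)$ is $O(K)+O(n^{-1/2})$ uniformly in $n$, and your Cauchy--Schwarz step closes.
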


\begin{proof}\label{proof:prop:VIM_normal}

By the definition of $Y_{i,n}$, we have $ \Ee[Y_{i,n}]=0_d$, $\Var(Y_{i,n})=I_d$. 
As $\Var(O_i) > 0$, $\Sigma$ is positive definite, $O_i$, $m(x)$, $m_{-j}(x_{-j})$, $\widehat{m}(x)$, $\widehat{m}_{-j}(x_{-j})$ are uniformly bounded, there exists $C' > 0$ such that $|Y_{i,n,j}| \le C'$.
Analogous to the proof of Proposition~\ref{prop:ALR.ABn.prob}, we obtain $\Ee[\exp\{s^\top Y_{i,n}\}] \le \exp\left\{{C'}^2\|s\|^2/2\right\}$ for any $s\in\real^d$.

Next, we analyze $R_n$. According to Theorem 1 in \cite{williamson2023general}, under the assumptions in Proposition~\ref{prop:VIM_normal},it is shown that $R_n = o_p(1)$.
It remains to show $\sup_n \Ee\left[\exp\{4C_R\|R_n\|\}\right]$ is finite.
In fact, by the definition of $R_n$ and the triangle inequality, it is sufficient to prove
\begin{align*}
    \sup_n
    \Ee\left[
        \exp\left\{
           C\left\|\Sigma^{-1/2}(T_{n,\varepsilon/2}-\mu_n)\right\|
        \right\}
    \right], \quad 
    \sup_n
    \Ee\left[
        \exp\left\{
            C\left\|
            \frac{1}{\sqrt n}\sum_{i=1}^n a_{i,n}Y_{i,n}
            \right\|
        \right\}
    \right]
    <\infty,
\end{align*}
for any $C > 0$. 
We prove the first inequality; the second inequality follows similarly. 

For $T_{n,\varepsilon/2}-\mu_n$, as shown in the proof of Theorem~1 of \citet{williamson2023general}, $T_{n,\varepsilon/2}$ is consistent and the bias satisfies $\left\|\Ee[T_{n,\varepsilon/2}-\mu_n]\right\|=o(1)$.
In addition, since the variance estimator in the denominator of $T_{n,\varepsilon/2}$ is clipped at $\varepsilon/2$ and the outcome, nuisance functions are uniformly bounded,
$T_{n,\varepsilon/2}-\Ee[T_{n,\varepsilon/2}]$ is sub-Gaussian with a constant independent of $n$. 
Finally, as $\Sigma$ is positive definite, then $\Ee\left[ \exp\left\{C\left\|\Sigma^{-1/2}(T_{n,\varepsilon/2}-\mu_n) \right\| \right\} \right]$ is uniformly bounded over $n$.

\end{proof}

\begin{algorithm}[tbp]
\caption{Cross-fitted Feature Importance Estimation}
\label{alg:feature.importance}
\begin{algorithmic}[1]
\Require Data $\cD=\{(X_i,O_i)\}_{i=1}^n$, candidate set $\cS\subseteq[p]$,         folds $L\ge2$.
\State Split $\cD$ into $L$ equal folds; let $\cD_l$ denote fold $l$.
\For{$l=1,\ldots,L$}
  \State On $\cD\setminus\cD_l$, obtain $\widehat m_{-j}^{(-l)}$ for all 
         $j\in\cS$ and $\widehat m^{(-l)}$, denoted collectively by $\widehat{\eta}^{(-l)}$.
  \State For each $j\in\cS$, compute on fold $l$
  \[
  \widehat\psi_{n,j}^{(l)}
  \leftarrow
  \frac{1}{ |\cD_l|}\sum_{i\in\cD_l}
  \frac{
  \bigl(\widehat m^{(-l)}(X_i)-\widehat m_{-j}^{(-l)}(X_{i})\bigr)^2
  +
  2\bigl(O_i-\widehat m^{(-l)}(X_i)\bigr)
  \bigl(\widehat m^{(-l)}(X_i)-\widehat m_{-j}^{(-l)}(X_{i})\bigr)
  }{
  |\cD_l|^{-1}\sum_{i\in\cD_l}(O_i-\bar O_l)^2
  },
  \]
  where $\bar O_l$ denotes the average of $O_i$ on $\cD_l$. 
\EndFor
\State Set $\widehat\psi_{n,j}=L^{-1}\sum_{l=1}^L \widehat\psi_{n,j}^{(l)}$ for $j\in\cS$, and compute the estimated covariance matrix $\widehat\Sigma$.
% to add: variance formula
\State \Return $\widehat\psi_n$, $\widehat\Sigma$.
\end{algorithmic}
\end{algorithm}

\section{Additional simulation details}
\label{app:sim-details}
This appendix describes the four different data-generating processes, the common procedure used to define the signal regimes, and the implementation settings for the simulation designs in Section~\ref{sec:simulations}, together with the baseline Gaussian design and the conditional-coverage computation underlying the illustrative example in Figure~\ref{fig:example1}. Throughout, coordinates are indexed by $[p]=\{1,\dots,p\}$.
\paragraph{Data-generating processes.}
In every design the selection statistic is the standardized estimator
$T_n=\sqrt n\,\widehat\mu_n$ of a target $\mu_n=\sqrt n\,\mu$, with
sampling covariance $\Sigma=\operatorname{Cov}(T_n)$; signal is calibrated by
adding a block shift $b$ to the true top-$k$ coordinates of $\mu$ and
solving $\Delta_{\mathrm{std}}(b)=0.3$ (\emph{Weak Separation}) and
$\Delta_{\mathrm{std}}(b)=2.0$ (\emph{Strong Separation}).
\begin{enumerate}[leftmargin=*]
\item \textbf{Gaussian mean} ($p=20$, $k=3$, $n=50$; Figures~\ref{fig:example1} and~\ref{fig:example2}).
We draw $X_i\sim\mathcal N_p(\mu,\Sigma)$ and take $T_n=\sqrt n\,\bar X$, which is
exactly normal with known $\Sigma$. The baseline is
$z=(0.45,\,0.30,\,0,\,0.25,\,0.15\cdot\mathbf 1_4,\,-0.20\cdot\mathbf 1_{12})$ and
$\Sigma=\operatorname{Diag}(1,1,10,2,2,2,2,2,\mathbf 1_{12})$, so coordinate $3$ is
a high-variance decoy ($\sigma_3^2=10$) lying outside the true top-$k$ set $\{1,2,4\}$.
Figure~\ref{fig:example2} reports the two separation regimes $\mu=z+b\,\mathbf 1_{\{1,2,4\}}$,
for which calibration gives $b\approx-0.015$ (Weak) and $b\approx+0.466$ (Strong).
Figure~\ref{fig:example1} instead fixes the baseline configuration $\mu=z$ and adds a fourth panel reporting
\emph{conditional coverage for the high-variance decoy}. Its inflated variance lets
coordinate $3$ enter the selected set purely by chance despite $\mu_3=0$, making it the
coordinate most exposed to the winner's curse. Writing $\Ehat$ for the selected top-$k$
set and $C_3(T)$ for a method's interval at coordinate $3$, this panel estimates the
coverage of $\mu_3$ conditional on the decoy being selected,
\[
  \Pp\bigl[\mu_3\in C_3(T)\,\big|\,3\in\Ehat\bigr],
\]
computed as the fraction of replications in which $C_3(T)$ covers $\mu_3$, among those
replications in which $3\in\Ehat$. The event $\{3\in\Ehat\}$ aggregates over all selected
sets containing the decoy and is implied by the per-winner guarantee of
Section~\ref{sec:guarantees}; therefore, a conditionally valid method retains nominal coverage
here while a merely marginal method need not.
% \item \textbf{Gaussian mean} ($p=20$, $k=3$, $n=50$; Figure~\ref{fig:example2}).
% We draw $X_i\sim\mathcal N_p(\mu,\Sigma)$ and take $T_n=\sqrt n\,\bar X$, which is
% exactly normal with known $\Sigma$. The baseline is
% $z=(0.45,\,0.30,\,0,\,0.25,\,0.15\cdot\mathbf 1_4,\,-0.20\cdot\mathbf 1_{12})$ and
% $\Sigma=\operatorname{Diag}(1,1,10,2,2,2,2,2,\mathbf 1_{12})$, so coordinate $2$ is
% a high-variance decoy ($\sigma^2=10$) outside the true top-$k$ set $\{0,1,3\}$;
% $\mu=z+b\,\mathbf 1_{\{0,1,3\}}$. Calibration gives $b\approx-0.015$ (Weak) and
% $b\approx+0.466$ (Strong).

% \item \textbf{Binomial dosage} ($p=10$, $k=3$, $n=100$; Figure~\ref{fig:example3}).
% For each level $j$, $O_{ij}\sim\mathrm{Bernoulli}(\pi_j)$ with
% $\pi_j=\operatorname{logit}^{-1}(\eta_0+z_j+b\,\mathbf 1\{j\le k\})$, $$\eta_0=-1.2 \ \text{ and }
% \ z=(0.60,\,0.50,\,0.40,\,0.30,\,0.25\cdot\mathbf 1_6).$$ We take
% $T_n=\sqrt n\,\widehat\pi_n$ with plug-in covariance
% $\widehat\Sigma=\operatorname{Diag}\{\widehat\pi_j(1-\widehat\pi_j)\}_{j=1}^p$;
% the true top-$k$ set is $\{0,1,2\}$. Calibration gives $b\approx-0.007$ (Weak)
% and $b\approx+0.490$ (Strong).

%, whileselection uses the \emph{log-success} score$s_E(T)=\sum_{j\in E} T_j$; the two induce the same ordering

\item \textbf{Binomial dosage} ($p=10$, $k=3$, $n=100$; Figure~\ref{fig:example3}).
For each level $j$, $O_{ij}\sim\mathrm{Bernoulli}(\pi_j)$ with $z=(0.60,\,0.50,\,0.40,\,0.30,\,0.25\cdot\mathbf 1_6)$,
$\pi_j=\operatorname{logit}^{-1}(\eta_0+z_j+b\,\mathbf 1\{j\le k\})$, and  $\eta_0=-1.2$. We take
$T_n=\sqrt n\,\widehat\pi_n$ with plug-in covariance
$\widehat\Sigma=\operatorname{Diag}\{\widehat\pi_j(1-\widehat\pi_j)\}_{j=1}^p$. Signal is
calibrated on the selection scale, i.e.\ the standardized gap of $\pi$ with
delta-method standard error $\sqrt{(1-\pi_j)/(n\pi_j)}$, giving true top-$k$ set
$\{1,2,3\}$ and $b\approx-0.007$ (Weak), $b\approx+0.506$ (Strong).
\item \textbf{Bradley--Terry--Davidson} ($p=10$, $k=3$; Figure~\ref{fig:example4}).
Log-strengths have baseline
$z=(0.60,\,0.50,\,0.40,\,0.30,\,0.25,\,0.24,\,0.23,\,0.22,\,0.21,\,0.20)$; the
ability vector is $\theta=\mathrm{center}(z+b\,\mathbf 1\{j\le k\})$ with
$\sum_j\theta_j=0$ and true top-$k$ set $\{1,2,3\}$. Under a round-robin design
every pair plays $n_{\mathrm{matches}}=10$ games with tie parameter $\nu=0.2$; we
take $T_n=\sqrt n\,\widehat\theta_n$ (constrained MLE), with $\Sigma$ the inverse
Schur complement of the limiting Fisher information. Calibration gives
$b\approx-0.007$ (Weak) and $b\approx+0.538$ (Strong).
\item \textbf{Nonparametric feature importance} ($p=10$, $k=3$, $n=500$;
Figure~\ref{fig:example5}). Covariates are drawn uniformly on $[-1,1]^p$ and
$O_i=f(X_i)+\varepsilon_i$, $\varepsilon_i\sim\mathcal N(0,1)$, with the smooth
additive $f(x)=\sum_{j=1}^p\beta_j\,g_j(x_j)$ and orthonormal bases
\[
g_j(x)=
\begin{cases}
\sqrt3\,x, & j\equiv1\pmod5,\\[0.4ex]
\tfrac{\sqrt5}{2}(3x^2-1), & j\equiv2\pmod5,\\[0.4ex]
\tfrac{\sqrt7}{2}(5x^3-3x), & j\equiv3\pmod5,\\[0.4ex]
\sqrt2\sin(\pi x), & j\equiv4\pmod5,\\[0.4ex]
\sqrt2\sin(2\pi x), & j\equiv0\pmod5.
\end{cases}
\]
The target is the Williamson feature-importance $\psi_j$; $T_n=\sqrt n\,\widehat\psi_n$
is the one-step estimator with nuisances fit by a spline-GAM learner on an
independent holdout, and $\Sigma=\operatorname{Diag}(\operatorname{Var}(\psi_{ij}))$.
The baseline is $\beta=(0.65,\,0.60,\,0.55\cdot\mathbf 1_8)$ with true top-$k$ set
$\{1,2,3\}$ and $\beta\gets\beta+b\,\mathbf 1\{j\le k\}$. Calibration gives
$b\approx0.034$ (Weak) and $b\approx0.142$ (Strong).
\end{enumerate}

\paragraph{Calibration and implementation details. }
Across all designs the amount of signal is quantified by the standardized boundary gap between the $k$-th and $(k{+}1)$-th largest population scores,
\[
  \Delta_{\mathrm{std}}
  =\frac{\mu_{(k)}-\mu_{(k+1)}}{\operatorname{se}\bigl(T_{(k)}-T_{(k+1)}\bigr)},
  \qquad
  \operatorname{se}\bigl(T_{(k)}-T_{(k+1)}\bigr)=\sqrt{\Sigma_{jj}+\Sigma_{ll}-2\,\Sigma_{jl}},
\]
where $\mu_{(1)}\ge\cdots\ge\mu_{(p)}$ are the ordered population scores, $\{j,l\}$ index the coordinates attaining $\mu_{(k)},\mu_{(k+1)}$, respectively, and $\Sigma=\operatorname{Cov}(T)$ is the sampling covariance matrix of the selection statistic evaluated at the true parameter values.
For each of the four data-generating processes, we add a common shift $b$ to the true top-$k$ coordinates on the natural parameter scale---means, log-odds, log-strengths, and $\beta$-coefficients, respectively---and use bisection to calibrate $\Delta_{\mathrm{std}}(b)$ to design-specific target values representing the \emph{Weak separation} and \emph{Strong separation} regimes. Thus, within each design, the two regimes differ only in the separation of the top-$k$ block.
For every design, we use $k=3$, nominal level $1-\alpha=0.95$, the additive score $s_E(T)=\sum_{j\in E}T_j$, and the tuning-free temperature $\tau=\snoise(T)(\log|\Ek|)^{-1}$ corresponding to the regret budget $q=1$, as per Corollary~\ref{cor: temp par}. We evaluate conditional densities and selective pivots on a grid of size $500$, increasing the grid size to $2000$ for the Bradley--Terry--Davidson simulation design, as its log-strength targets are more sensitive to grid resolution. We use $B=500$ Monte Carlo replications for the Gaussian, Binomial, and feature-importance designs and $B=200$ for the Bradley--Terry--Davidson design.

\end{appendices}
\end{document}